\numberwithin{equation}{section}
\theoremstyle{plain}
\newtheorem{theorem}{Theorem}[section]
\newtheorem{remark}{Remark}[section]
\newtheorem{proposition}{Proposition}[section]
\newtheorem{definition}{Definition}[section]
\newtheorem{lemma}{Lemma}[section]
\newcommand{\re}{\operatorname{\mathbb{R}}}
\newcommand{\abs}[1]{\left|#1\right|}
\newcommand{\Prob}{\mathbb{P}}
\newcommand{\simdist}{\operatorname{\stackrel{\mathcal{D}}{\sim}}}
\newcommand{\convdist}{\operatorname{\stackrel{\mathcal{D}}{\rightarrow}}}
\newcommand{\LPTN}{\operatorname{LPTN}}
\newcommand{\za}[1]{\stackrel{a}{#1}}
\newcommand{\zb}[1]{\stackrel{b}{#1}}
\newcommand{\zc}[1]{\stackrel{c}{#1}}
\newcommand{\zd}[1]{\stackrel{d}{#1}}
\DeclareMathOperator{\yo}{y}
\newcommand{\ee}{\mathrm{e}}
\newcommand{\ind}{\mathds{1}}
\begin{document}

\begin{frontmatter}
\title{A New Bayesian Approach to Robustness Against Outliers in Linear Regression}
\runtitle{A New Bayesian Robust Linear Regression}

\begin{aug}
\author[a]{\fnms{Philippe} \snm{Gagnon}\ead[label=e1]{philippe.gagnon@stats.ox.ac.uk}},
\author[b]{\fnms{Alain} \snm{Desgagn\'{e}}\ead[label=e3]{desgagne.alain@uqam.ca}},
\and
\author[c]{\fnms{Myl\`{e}ne} \snm{B\'{e}dard}\ead[label=e2]{bedard@dms.umontreal.ca}}

\address[a]{Department of Statistics, University of Oxford, 24-29 St Giles', Oxford, OX1 3LB, United Kingdom, \printead{e1}}
\address[b]{D\'{e}partement de math\'{e}matiques, Universit\'{e} du Qu\'{e}bec \`{a} Montr\'{e}al, C.P. 8888, Succursale Centre-ville, Montr\'{e}al, QC, H3C 3P8, Canada, \printead{e3}}
\address[c]{D\'{e}partement de math\'{e}matiques et de statistique, Universit\'{e} de Montr\'{e}al, C.P. 6128, Succursale Centre-ville, Montr\'{e}al, QC, H3C 3J7, Canada, \printead{e2}}

\runauthor{Gagnon P., Desgagn\'{e} A. and B\'{e}dard M.}
\end{aug}

\begin{abstract}
 Linear regression is ubiquitous in statistical analysis. It is well
understood that conflicting sources of information may contaminate the
inference when the classical normality of errors is assumed. The
contamination caused by the light normal tails follows from an
undesirable effect: the posterior concentrates in an area in between the
different sources with a large enough scaling to incorporate them all.
The theory of conflict resolution in Bayesian statistics
(\cite{o2012bayesian}) recommends to address this problem by
limiting the impact of outliers to obtain conclusions consistent with
the bulk of the data. In this paper, we propose a model with super
heavy-tailed errors to achieve this. We prove that it is wholly robust,
meaning that the impact of outliers gradually vanishes as they move
further and further away form the general trend. The super heavy-tailed
density is similar to the normal outside of the tails, which gives rise
to an efficient estimation procedure. In addition, estimates are easily
computed. This is highlighted via a detailed user guide, where all steps
are explained through a simulated case study. The performance is shown
using simulation. All required code is given.
\end{abstract}

\begin{keyword}
\kwd{ANOVA}
\kwd{ANCOVA}
\kwd{built-in robustness}
\kwd{maximum likelihood estimation}
\kwd{super heavy-tailed distributions}
\kwd{variable selection}
\kwd{whole robustness}
\end{keyword}

 \begin{keyword}[class=MSC]
\kwd[Primary ]{62F35}
\kwd[; secondary ]{62J05}
\end{keyword}

\end{frontmatter}

\section{Introduction}

The distribution most commonly assumed on the error term in the linear
regression model $Y = \mathbf{x}^{T} \boldsymbol{\beta }+ \epsilon $ is
without a doubt a normal, denoted
$\epsilon /\sigma \sim \mathcal{N}(0,1)$. Estimating the regression
coefficient vector $\boldsymbol{\beta }$ is in this case equivalent to
using ordinary least squares (OLS) method, whether Bayesian (setting the
usual noninformative prior on $\boldsymbol{\beta }$) or maximum
likelihood estimates (MLE) are computed. Given the remarkable properties
of OLS (under certain conditions) such as minimum variance among
unbiased estimators, the normal model is often considered as a benchmark
in terms of efficiency in the absence of outliers. However, it is
well-known that resulting inferences is very sensitive to conflicting
sources of information. From a Bayesian perspective, these conflicting
sources may represent the prior or outliers; we focus on the latter in
this paper.

\cite{1968tiao119} were the first to propose a Bayesian solution.
They suggested to let the error term be distributed as a mixture of two
normals: one component for the nonoutliers and the other one, with a
larger variance, for the outliers. This approach has been generalised
by \cite{1984west431} who modelled errors with heavy-tailed
distributions constructed as scale mixtures of normals, which include
the Student distribution. A different robust Bayesian approach was
introduced by \cite{2009Pena2196}. From a frequentist perspective,
several methods have also been proposed, e.g., the M-
(\cite{huber1973robust}), MM- (\cite{Yohai1987MM}), S-
(\cite{rousseeuw1984robust}), least trimmed squares (LTS,
\cite{Rousseeuw1985LTS}), and robust and efficient weighted least-square
(REWLSE, \cite{GerviniYohai2002rewlse}) estimators.

The most popular Bayesian solution is modelling using the Student, a
consequence of the simplicity of the strategy, the rationale behind it
(giving higher probabilities to extreme values), and the required
computations. The latter follows from the scale mixture representation
of the Student that leads to a normal conditional distribution for
$Y$ given $\boldsymbol{\beta }$, $\sigma $ and a latent variable, which
in turn allows a straightforward implementation of the Gibbs sampler
(\cite{geman1984stochastic}). This method took over that of
\cite{1968tiao119} because the latter is such that the conditional
distribution is a mixture of normals and requires to ``complete'' the
data with auxiliary variables to implement the Gibbs sampler. This may
make computations much more arduous. On the frequentist side, the most
popular method to gain in robustness is arguably the MM-estimator.

Protection against outliers always comes at a price: a loss of
efficiency when the observations are normally distributed. The best
robust alternatives manages to offer a large protection at a low
premium. This is especially true for the estimation of $
\boldsymbol{\beta }$. In this regard, a new method can hardly do better;
in fact matching their performance is quite an achievement. However, the
performance of the existing robust approaches with respect to
$\sigma $ is far less optimal.

The main objective of this paper is to propose a solution that yields
gold standard performance, namely a large protection at a low premium,
for the estimation of both $\boldsymbol{\beta }$ and $\sigma $. The
importance of good estimation for $\sigma $, in the absence or presence
of outliers, should not be overlooked. This parameter plays a crucial
role every time an assessment has to be made about uncertainty around
the regression coefficients (credible intervals, hypothesis testing, and
so on). The performance of the proposed approach, combined with its
simplicity, will allow to offer an appealing Bayesian alternative to the
Student model.

The first step towards the objective is indeed to employ a strategy as
simple as that of \cite{1984west431}, that is, to assume a
distribution on the error term that accommodates for the eventual
presence of outliers without being a mixture. Our approach differs in
that the density has a slower tail decay. It is based on the work of
\cite{desgagne2015robustness} about robust modelling of location and
scale parameters. The author proposed to use a super heavy-tailed
distribution belonging to the family of log-regularly varying
distributions (LRVD) --- with tails behaving like $|z|^{-1}(\log |z|)^{-
\theta }$ --- to achieve whole robustness for both parameters. The idea
of using heavier tails than the Student came after the work of
\cite{andrade2011bayesian} who, in the location-scale framework,
achieved only partial robustness for\vadjust{\goodbreak} the scale by modelling with
polynomial tails. As mentioned by \cite{1984west431}, an outlying
observation is accommodated if the posterior distribution converges to
that excluding the outlier as this one tends to infinity, which
corresponds to our definition of whole robustness. In contrast, partial
robustness translates into a significant (but limited as the outliers
approach plus or minus infinity) impact on the estimation of the
parameter.

The second step towards the objective is therefore to generalise the
results of \cite{desgagne2015robustness} to linear regression. In
fact, it is a generalisation of the results of
\cite{DesGag2019}, which are essentially an application of those of
\cite{desgagne2015robustness} in simple linear regression through the
origin for robust estimation of ratios. This second step represents our
key theoretical contribution. We provide two sufficient conditions that
lead to whole robustness. The first one is to assume a super-heavy
tailed distribution on the error. The other specifies the breakdown
point, which tends to the optimal value of 0.5 as the sample size goes
to infinity. The validity of our robust method is thus supported by
theoretical results. While these are similar to those of
\cite{DesGag2019}, a more sophisticated proof technique is required
given that the location parameter of the conditional distribution of
$Y$ is now an inner product of a known vector and $
\boldsymbol{\beta }$ containing $p$ unknown parameters. Throughout the
paper, we focus on continuous explanatory variables to simplify
explanation and notation. The results are nonetheless valid in ANOVA and
ANCOVA (analyses of variance and covariance), and for variable selection
where joint posteriors of models and parameters are considered. The
corresponding sufficient conditions are given as remarks after the
theoretical results. The price to pay to achieve whole robustness for
all parameters is that the use of super heavy-tailed distributions
prevents us from obtaining normal conditional distributions. There is
therefore a computational cost, in the sense that we cannot implement
a Gibbs sampler; it will however be noticed that easy-to-use samplers
can be used, which makes the cost negligible.

The third and final step towards the objective is to carefully select
the super heavy-tailed distribution in the wholly robust model. To
achieve this, we start with the premise that applied statisticians are
satisfied with the normal model in the absence of outliers and we
specifically design a robust solution from that. We set the distribution
of the error as a log-Pareto-tailed normal (LPTN), a super heavy-tailed
distribution introduced by \cite{desgagne2015robustness}. Its
density exactly matches the standard normal on the central part having
a mass of $\rho $. The parameter $\rho $ is thus the single one to be
chosen by the user, and is typically set to a value between 0.80 and
0.98. The resulting model produces robust estimates exhibiting a similar
behaviour to OLS in the absence of outliers, where the trade-off between
high degree of similarity with OLS and high degree of robustness is
controlled through $\rho $. The model has built-in robustness that
resolves conflict in a sensitive way (see Figure~\ref{fig_illus_thm}).
It completely considers the nonoutliers (from $30$ to $32.5$ in
Figure~\ref{fig_illus_thm}), essentially excludes the observations that
are clearly outlying (beyond $38$ in Figure~\ref{fig_illus_thm}), and
between these two clear cases, contains and bounds their impact. The
first two cases correspond to the strategy commonly applied in practice,
where an observation is either kept or discarded. In the last case, the
method reflects that in the gray area there is a level of uncertainty
about the fact that those observations really are outliers or not. Our
main practical contribution is therefore to provide an efficient and
robust model that automatically deals with this type of uncertainty,
which is especially valuable in high-dimensional problems and when
several analyses have to be performed.\vadjust{\goodbreak}

\begin{figure}[t]
\includegraphics{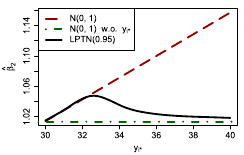}
\caption{Posterior mean of the slope in a simple linear regression as an
observation $y_{i^{*}}\rightarrow \infty $.}\label{fig_illus_thm}
\end{figure}

This rest of the article is organised as follows. The linear regression
model is detailed in Section~\ref{sec-model}, the LRVD family is
presented in Section~\ref{sec-log-regularly} and the theoretical results
are provided in Section~\ref{sec-conflict}. More practically, efficient
and robust regression is investigated in Section~\ref{sec-practice}. The
LPTN distribution is first presented in Section~\ref{sec-LPTN}. A
discussion about efficiency of the robust model with LPTN errors is
provided in Section~\ref{sec_efficiency}. Practical details of our
approach are addressed in Section~\ref{sec_case_study} through a
simulated case study on the modelling of house market values. Numerical
methods such as Markov chain Monte Carlo (MCMC) are discussed for the
computation of different posterior quantities: means, medians, credible
intervals, prediction of future observations and hypothesis testing via
Bayes factors. A powerful tool for outlier identification is also
proposed. In Section~\ref{sec_numerical_study}, a simulation study is
conducted to compare the performance of our approach with different
Bayesian alternatives. Note that even though our approach is Bayesian,
it is possible to use it in a frequentist setting through maximum a
posteriori probability (MAP) estimates, which correspond to MLE when the
prior is set to 1. We thus also include in our study the frequentist
methods mentioned above.

\section{Conflict Resolution in Linear Regression via LRVD}%
\label{sec-robustness}
We henceforth assume that $f$ is a strictly positive continuous
probability density function (PDF) on $\re $ that is symmetric with
respect to the origin, for which all parameters are known and such that
there exists a threshold above which $g(z) = zf(z)$ is monotonic.
Examples of such PDF are the normal, logistic, Laplace, Student (with
prespecified degrees of freedom) and the LPTN (see
Section~\ref{sec-LPTN}).

\subsection{Linear Regression Model}%
\label{sec-model}
\begin{description}%
\item[(i)] Let $Y_{1},\ldots ,Y_{n}\in \re $ be $n$ random variables
representing data points from the dependent variable and $\mathbf{x}
_{1}^{T}:=(1,x_{12},\ldots ,x_{1p}), \ldots , \mathbf{x}_{n}^{T}:=(1,x
_{n2},\ldots ,x_{np})$ be $n$ vectors of observations from the
explanatory variables, where $p\in \{2,3,\ldots \}$, $n\geq p+1$ and
$x_{ij}\in \re $ are assumed to be known. As mentioned in the
introduction, we focus on the situation where all explanatory variables
are continuous. The linear regression model is given by
%
\begin{equation}
\label{eqn_model}
Y_{i}=\mathbf{x}_{i}^{T} \boldsymbol{\beta }+\epsilon _{i},\quad i=1,
\ldots ,n,
\end{equation}
where the $n$ random variables $\epsilon _{1},\ldots ,\epsilon _{n}
\in \re $ and the $p$-dimensional random variable $
\boldsymbol{\beta }:=(\beta _{1},\ldots ,\beta _{p})^{T}\in \re ^{p}$
represent the errors and the vector containing the regression
coefficients, respectively. These $n+1$ random variables are
conditionally independent given $\sigma >0$, a scale parameter, with a
conditional density for $\epsilon _{i}$ given by
\begin{equation*}
\epsilon _{i} \mid \boldsymbol{\beta },\sigma
\stackrel{\mathcal{D}}{=}\epsilon _{i}\mid \sigma \,\simdist \,(1/
\sigma )f\left (\epsilon _{i}/\sigma \right ),\quad i=1,\ldots ,n.
\end{equation*}
\item[(ii)] We assume that the joint prior density of $
\boldsymbol{\beta }$ and $\sigma $, denoted $\pi (\boldsymbol{\beta },
\sigma )$, is bounded by $\max (C,\sigma ^{-1}C)$, where $C>0$ can be any
constant.
\end{description}

A large variety of priors fits within the structure assumed in (ii).
This is the case for non-informative priors such as $\pi (
\boldsymbol{\beta },\sigma )\propto 1/\sigma $ and $\pi (
\boldsymbol{\beta },\sigma )\propto 1$, and practically all proper
densities. Informative priors shall however be used with caution,
especially when they translate into light tailed densities. They may
indeed contaminate the inference if they are in conflict with the
information carried by the data. Establishing the conditions that
guarantee robustness to informative priors in linear regression is not
trivial.

We study robustness of the estimation of $\boldsymbol{\beta }$ and
$\sigma $ in the presence of outliers. In this paper, an observation
$(\mathbf{x}_{i},y_{i})$ is considered as an outlier if its error
$\epsilon _{i}=y_{i}-\mathbf{x}_{i}^{T} \boldsymbol{\beta }$ is
relatively far from 0, where $\boldsymbol{\beta }$ defines the probable
hyperplanes for the bulk of the data. Note that robustness against
outlying errors is a different concept than robustness against outlying
$\mathbf{x}_{i}$ or $y_{i}$. They are generally equivalent though,
except for the unusual case where an observation is outlying in
$\mathbf{x}_{i}$ and $y_{i}$ but still manages to lie in the general
trend, and consequently, be a nonoutlier in error. From a theoretical
perspective, we study the asymptotic behaviour in the sense that we let
outliers' errors $\epsilon _{i}$ approach $+\infty $ or $-\infty $. Our
strategy to mathematically represent this situation is to let their
$y_{i}$ approach $+\infty $ or $-\infty $ while their vector
$\mathbf{x}_{i}$ remains fixed. We thus specify a particular path along
which the outliers move away from the general trend.

We assume that each outlier goes to $-\infty $ or $+\infty $ at its own
specific rate, to the extent that the ratio of two outliers is bounded.
More precisely, we assume that
%
\begin{equation}
\label{eqn-omega}
y_{i}=a_{i}+b_{i} \omega ,
\end{equation}
for $i=1,\ldots ,n$, where $a_{i}, b_{i}\in \re $ are constants such
that $b_{i}=0$ if the point is a nonoutlier and $b_{i}\neq 0$ if it is
an outlier, and then, we let $\omega \rightarrow \infty $. We
mathematically distinguish the outliers from the nonoutliers through the
following. Among the $n$ observations $(y_{1},\ldots ,y_{n})=:
\mathbf{y_{n}}$, we assume that $k$ of them form a group of nonoutlying
observations, that we denote $\mathbf{y_{k}}$, while $\ell =n-k$ of them
are considered as outliers. For $i=1,\ldots ,n$, we define the binary
functions $k_{i}$ and $\ell _{i}$ as follows: if $y_{i}$ is a nonoutlying
value $k_{i}=1$, and if it is an outlier $\ell _{i}=1$. These functions
take the value of 0 otherwise. Therefore, we have $k_{i}+\ell _{i}=1$ for
$i=1,\ldots ,n$, with $\sum _{i=1}^{n} k_{i}=k$, and $\sum _{i=1}^{n}
\ell _{i}=\ell $.

Let the joint posterior density of $\boldsymbol{\beta }$ and
$\sigma $ be denoted by $\pi (\boldsymbol{\beta },\sigma \mid
\mathbf{y_{n}})$ and the marginal density of $(Y_{1},\ldots ,Y_{n})$ be
denoted by $m(\mathbf{y_{n}})$, where
%
\begin{equation}
\label{def_post_dens}
\pi (\boldsymbol{\beta },\sigma \mid \mathbf{y_{n}})=[m(
\mathbf{y_{n}})]^{-1}\pi (\boldsymbol{\beta },\sigma )\prod _{i=1}^{n}
(1/\sigma ) f((y_{i}-\mathbf{x}_{i}^{T}\boldsymbol{\beta })/\sigma ),
\quad \boldsymbol{\beta }\in \re ^{p},\sigma >0.
\end{equation}
Let the joint posterior density of $\boldsymbol{\beta }$ and
$\sigma $ arising from the nonoutlying observations only be denoted by
$\pi (\boldsymbol{\beta },\sigma \mid \mathbf{y_{k}})$ and the
corresponding marginal density be denoted by $m(\mathbf{y_{k}})$, where
\begin{equation*}
\label{eqn-nonoutlier}
\pi (\boldsymbol{\beta },\sigma \mid \mathbf{y_{k}})=[m(
\mathbf{y_{k}})]^{-1}\pi (\boldsymbol{\beta },\sigma )\prod _{i=1}^{n}
\left [(1/\sigma )f((y_{i}-\mathbf{x}_{i}^{T} \boldsymbol{\beta })/
\sigma )\right ]^{k_{i}},\quad \boldsymbol{\beta }\in \re ^{p},\sigma
>0.
\end{equation*}

\begin{proposition}[{Tail behaviour of the posteriors}]
\label{proposition-proper}\
%
\begin{itemize}
\item[(i)] If $n>p+1$, the density $\pi (\boldsymbol{\beta },\sigma
\mid \mathbf{y_{n}})$ is proper.
\item[(ii)] If $k>p+1$ (stronger than $n>p+1$), the density
$\pi (\boldsymbol{\beta },\sigma \mid \mathbf{y_{k}})$ is also proper.
\item[(iii)] If $n>p + 1 + M$, then $\mathbb{E}[\beta _{j}^{M}\mid
\mathbf{y_{n}}]$ for any $j\in \{1,\ldots ,p\}$ and $\mathbb{E}[\sigma
^{M}\mid \mathbf{y_{n}}]$ exist.
\item[(iv)] If $k>p + 1 + M$, then $\mathbb{E}[\beta _{j}^{M}\mid
\mathbf{y_{k}}]$ for any $j\in \{1,\ldots ,p\}$ and $\mathbb{E}[\sigma
^{M}\mid \mathbf{y_{k}}]$ exist.
\end{itemize}
\end{proposition}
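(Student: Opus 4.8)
The plan is to establish each claim by showing that the relevant (weighted) marginal integral is finite, since properness of the posterior means $m(\mathbf{y_n})<\infty$, while existence of $\E[\beta_j^M\mid\mathbf{y_n}]$ and $\E[\sigma^M\mid\mathbf{y_n}]$ means that $\int\!\int|\beta_j|^M(\cdots)$ and $\int\!\int\sigma^M(\cdots)$ converge. Using the prior bound from (ii), it suffices throughout to replace $\pi(\boldsymbol\beta,\sigma)$ by $\max(C,\sigma^{-1}C)$. First I would record the properties of $f$ that drive everything: $f$ is bounded, say $F:=\sup f<\infty$; since $f$ is a density and $g(z)=zf(z)$ is eventually monotonic, $g$ must in fact be eventually decreasing to $0$, which yields a universal bound $f(z)\le\min(F,\tilde G/|z|)$ for some constant $\tilde G$, while integrability of $f$ is equivalent (via $z=\ee^\tau$) to $\int^{\infty}g(\ee^\tau)\,d\tau<\infty$. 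I would also assume the data are in general position, namely that the design matrix $X$ with rows $\mathbf{x}_i^T$ has full column rank $p$ and $\mathbf{y_n}$ does not lie in its column space; this rules out the degenerate exactly-collinear configuration, for which the scale is not identified and the posterior is genuinely improper.

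For the core estimate I would select $p$ reference observations whose design vectors are linearly independent and change variables $\boldsymbol\beta\mapsto\mathbf u$ through $u_i=(y_i-\mathbf{x}_i^T\boldsymbol\beta)/\sigma$ for those $p$ indices, with Jacobian proportional to $\sigma^p$. This writes $\boldsymbol\beta=\boldsymbol\beta_0-\sigma X_{1:p}^{-1}\mathbf u$, where $\boldsymbol\beta_0$ interpolates the reference points, and turns each remaining residual into $c_i/\sigma+d_i(\mathbf u)$, with $c_i=y_i-\mathbf{x}_i^T\boldsymbol\beta_0\neq0$ by general position and $d_i$ linear. On the region $\sigma\ge1$ the prior is bounded by $C$; bounding the $n-p$ non-reference factors by $F$ and integrating the reference factors against $\int f=1$ leaves $\int_1^\infty\sigma^{p-n}\,d\sigma$, which is finite precisely when $n>p+1$. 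This is the source of the condition in (i).

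The delicate region is $\sigma\to0$, where the integrand carries the strong singularity $\sigma^{-n-1}$. Here I would substitute $s=1/\sigma$ and, crucially, keep all $n-p$ non-reference factors rather than discarding any. Writing each as $f=g/|\cdot|$ extracts a factor $s^{-(n-p)}$, so that after the Jacobian the net power of $s$ is only $s^{-1}$ multiplying $\prod_{i>p}g(c_is+d_i(\mathbf u))$; bounding all but one $g$ by $\sup g$ and setting $\tau=\log s$, the surviving factor integrates to $\int^{\infty}g(\ee^\tau)\,d\tau=\int^{\infty}f(z)\,dz<\infty$. The step I expect to be the main obstacle is carrying the $\mathbf u$-integration through this bound: in the regime of large $\mathbf u$ with moderate $s$, the reference factors $\prod_{i\le p}f(u_i)$ alone are not integrable against the polynomial weights that appear, because $f$ may be super heavy-tailed. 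One must instead exploit the simultaneous decay $f(c_is+d_i(\mathbf u))\sim|\mathbf u|^{-1}$ of the non-reference factors along \emph{every} direction of $\re^p$, including the non-generic directions along which some $\mathbf{x}_i^T\mathbf u$ vanish; it is this joint, direction-by-direction tail analysis in $\re^p$---absent in the through-the-origin ratio setting of \cite{desgagne2017ratio}---that makes the argument more sophisticated.

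Parts (ii)--(iv) would follow by variations of the same scheme. For (ii) the product runs only over the $k$ nonoutliers (the factors with $k_i=0$ drop out), so the identical argument with $n$ replaced by $k$ gives properness under $k>p+1$, provided the nonoutlying design has rank $p$ and admits no exact fit. For (iii), inserting $\sigma^M$ changes the large-$\sigma$ bound to $\int_1^\infty\sigma^{p-n+M}\,d\sigma$, finite iff $n>p+1+M$, which yields $\E[\sigma^M\mid\mathbf{y_n}]$; for $\E[\beta_j^M\mid\mathbf{y_n}]$ I would use $\beta_j=(\boldsymbol\beta_0)_j-\sigma(X_{1:p}^{-1}\mathbf u)_j$, so that $|\beta_j|^M\lesssim1+\sigma^M|\mathbf u|^M$, the constant part reducing to the properness integral and the $\sigma^M|\mathbf u|^M$ part requiring $n>p+1+M$ from the scale power while the $|\mathbf u|^M$ weight is absorbed by the full product of $n$ decaying factors (which on its own needs only $n>p+M$). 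Part (iv) is (iii) with $n$ replaced by $k$ and the condition $k>p+1+M$.
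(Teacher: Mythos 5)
Your proposal follows the paper's skeleton for the easy parts (bound the prior by $B\max(1,1/\sigma)$, change variables $u_i=(y_i-\mathbf{x}_i^T\boldsymbol\beta)/\sigma$ through $p$ linearly independent design vectors, and count powers of $\sigma$ on the region where $\sigma$ is bounded below --- this is exactly where the paper also gets the condition $n>p+1$), but there is a genuine gap at the crux: the small-$\sigma$ region. Your extraction of $s^{-(n-p)}$ by writing $f=g/|\cdot|$ is only valid where $|c_is+d_i(\mathbf u)|\gtrsim s$, i.e.\ where the hyperplane stays at distance of order $1$ from every non-reference point. In the complementary regime (hyperplane passing near some other subset of data points, $\mathbf u$ comparable to $s$), the claim that $f(c_is+d_i(\mathbf u))\sim|\mathbf u|^{-1}$ ``along every direction'' is false: along directions in the kernel of $d_i$ that factor does not decay in $|\mathbf u|$ at all, and near the zero set of $c_is+d_i(\mathbf u)$ it equals $f(0)$. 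You explicitly flag this as ``the main obstacle'' but never resolve it, and resolving it is precisely the content of the paper's proof: the domain of $\boldsymbol\beta$ is decomposed into $\sum_{i=0}^p\binom{n}{i}$ mutually exclusive sets indexed by which observations the hyperplane passes within $\delta$ of (with $\delta$ chosen so small that no hyperplane is $\delta$-close to $p+1$ points); on each set the change of variables is chosen \emph{adaptively} to include exactly the close points, every remaining factor is bounded by $f(\delta/\sigma)$ via tail monotonicity, and integrability at $\sigma=0$ comes from keeping a single factor $(1/\sigma^2)f(\delta/\sigma)$, whose integral is finite by the substitution $\sigma'=\delta/\sigma$. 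Your fixed-reference-point parametrization would, once completed, force you to reconstruct this same case analysis; as written the proof is not there.

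Two secondary problems. First, your general-position assumption (``$\mathbf{y_n}$ not in the column space of $X$'') is too weak: if just $p+1$ of the $n$ points lie on a common hyperplane (others off it), the same computation you use for the fully collinear case shows the integrand behaves like $\sigma^{-2}$ near that $\boldsymbol\beta$ as $\sigma\to0$, so the posterior is still improper; what is needed, and what the paper's choice of $\delta$ implicitly uses (with probability $1$ under continuous covariates), is that no $p+1$ points are exactly coplanar. Second, in (iii) your claim that the weight $|\mathbf u|^M$ is ``absorbed by the full product of $n$ decaying factors'' fails as stated for an LRVD: $|u|^Mf(u)$ is unbounded for $M\ge2$, and even for $M=1$ absorbing $|u_j|$ into $f(u_j)$ destroys the integrability of that factor. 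The paper instead writes $\beta_j=\mathbf{e}_j^T\boldsymbol\beta$ with $\mathbf{e}_j$ a linear combination of $p$ linearly independent $\mathbf{x}_{i_s}$, bounds $|\beta_j|$ by $\sum_s|a_s|(|y_{i_s}|+|y_{i_s}-\mathbf{x}_{i_s}^T\boldsymbol\beta|)$, and lets each residual factor $|y_{i_s}-\mathbf{x}_{i_s}^T\boldsymbol\beta|$ annihilate one full likelihood term via $(|y_{i_s}-\mathbf{x}_{i_s}^T\boldsymbol\beta|/\sigma)f((y_{i_s}-\mathbf{x}_{i_s}^T\boldsymbol\beta)/\sigma)\le B$, so that each unit of moment costs one observation --- this is where $n>p+1+M$ actually enters.
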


\begin{proof}
See Section~\ref{sec-proof}.
\end{proof}

\begin{remark}
\label{rmk_proper}
When any type of explanatory variables is considered (continuous,
discrete as in ANOVA or a mix of both as in ANCOVA), the densities are
proper if we additionally assume that the design matrix (comprised of
$n$ or $k$ observations) has full rank. In variable selection, when the
joint posterior of the models and parameters is considered, this joint
posterior is proper if the assumptions are verified for the ``complete''
model (the model with all variables). The assumptions are more technical
for the moments and are not provided here. We essentially need enough
of ``different'' $\mathbf{x}_{i}$ vectors. In the proof, it is made
clear what is required.
\end{remark}

\subsection{Log-Regularly Varying Distributions}%
\label{sec-log-regularly}
We now provide an overview of the class of log-regularly varying
functions (LRVF), as introduced in \cite{desgagne2013full} and
\cite{desgagne2015robustness}, following the idea of regularly varying
functions developed by \cite{Karamata1930}. They form an
interesting class of functions with useful properties for robustness.

\begin{definition}[LRVF]
\label{def-log-regularly}
We say that a measurable function $g$ is
\textit{log-regularly varying} at $\infty $ with index
$\theta \in \re $, written $g\in L_{\theta }(\infty )$, if
\begin{equation*}
\lim _{z\rightarrow \infty }g(z^{\nu })/g(z)=\nu ^{-\theta },
\end{equation*}
uniformly in any set $\nu \in [1/\eta ,\eta ]$ (for any $\eta \ge 1$).
If $\theta =0$, $g$ is said to be \textit{log-slowly varying} at
$\infty $.
\end{definition}

In \cite{desgagne2015robustness}, it is shown that
Definition~\ref{def-log-regularly} is equivalent to the following: there
exists a constant $A>1$ and a function $s\in L_{0}(\infty )$ such that
for $z\ge A$, $g$ can be written as
\begin{equation*}
g(z)=(\log z)^{-\theta } s(z).
\end{equation*}
Examples of LRVF are $g(z)=(\log z)^{-\theta }$ (with $s(z)=1$) and
$g(z)=(\log z)^{-\theta }\log (\log z)$.

\begin{definition}[LRVD]
\label{def-log-regularly-distribution}
A random variable $Z$ and its distribution are said to be log-regularly
varying with index $\theta \ge 1$ if their density $f$ is such that
$z f(z)\in L_{\theta }(\infty )$.
\end{definition}

Definition~\ref{def-log-regularly-distribution} implies that any density
$f$ with tails behaving like $|z|^{-1}(\log |z|)^{-\theta }$ with
$\theta > 1$ is a LRVD. Some examples like the LPTN distribution are
given in \cite{desgagne2015robustness}. The most important
property of this class of distributions follows from
Definition~\ref{def-log-regularly}: the asymptotic location-scale
invariance of their density, as stated in
Proposition~\ref{prop-location-scale-transformation}.
%
\begin{proposition}[Location-scale invariance]
\label{prop-location-scale-transformation}
If $z f(z)\in L_{\theta }(\infty )$, then we have
\begin{equation*}
(1/\sigma ) f((z-\mu )/\sigma )/f(z)\rightarrow 1 \text{ as } z\rightarrow
\infty ,
\end{equation*}
uniformly on $(\mu ,\sigma )\in [-\vartheta , \vartheta ]\times [1/
\eta ,\eta ]$, for any $\vartheta \ge 0$ and $\eta \ge 1$.
\end{proposition}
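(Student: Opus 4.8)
The plan is to reduce the location-scale statement to the power-transformation statement of Definition~\ref{def-log-regularly} by exploiting the substitution $g(z)=zf(z)$ and writing the shifted-and-scaled argument as a power of $z$ whose exponent tends to $1$. First I would rewrite the target ratio entirely in terms of $g$: since $f(z)=g(z)/z$, a direct substitution gives
$$\frac{(1/\sigma)f((z-\mu)/\sigma)}{f(z)}=\frac{z}{z-\mu}\cdot\frac{g((z-\mu)/\sigma)}{g(z)},$$
valid for $z$ large enough that $z-\mu>0$ uniformly over $\mu\in[-\vartheta,\vartheta]$. The prefactor $z/(z-\mu)=1/(1-\mu/z)$ converges to $1$ uniformly on the compact $\mu$-range, so everything hinges on controlling the ratio $g((z-\mu)/\sigma)/g(z)$.

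Second, I would express the argument as a power of $z$. Setting $\nu=\nu(z,\mu,\sigma):=\log((z-\mu)/\sigma)/\log z$, one has $(z-\mu)/\sigma=z^{\nu}$ and
$$\nu=1+\frac{\log(1-\mu/z)-\log\sigma}{\log z}.$$
As $z\rightarrow\infty$, the fraction tends to $0$ uniformly over $(\mu,\sigma)\in[-\vartheta,\vartheta]\times[1/\eta,\eta]$, so $\nu\rightarrow 1$ uniformly; in particular, for any fixed $\eta'>1$ we have $\nu\in[1/\eta',\eta']$ once $z$ is large enough, uniformly in $(\mu,\sigma)$.

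Third, I would invoke Definition~\ref{def-log-regularly} through the triangle inequality
$$\left|\frac{g(z^{\nu})}{g(z)}-1\right|\le\left|\frac{g(z^{\nu})}{g(z)}-\nu^{-\theta}\right|+\left|\nu^{-\theta}-1\right|.$$
The log-regular-variation property yields $g(z^{\nu})/g(z)\rightarrow\nu^{-\theta}$ uniformly for $\nu\in[1/\eta',\eta']$, which bounds the first term for $z$ large; the second term is small because $\nu\rightarrow 1$ uniformly and $\nu\mapsto\nu^{-\theta}$ is continuous at $1$. Recombining with the prefactor from the first step then gives the claimed uniform convergence to $1$.

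The main obstacle is the third step. The uniform convergence in Definition~\ref{def-log-regularly} is stated for a \emph{fixed} compact range of $\nu$, whereas here the exponent $\nu=\nu(z,\mu,\sigma)$ itself varies with $z$ and with the parameters. The care required is to trap $\nu$ inside a single fixed compact set $[1/\eta',\eta']$ for all large $z$, so that the uniform-in-$\nu$ estimate may legitimately be evaluated at the moving point $\nu(z,\mu,\sigma)$; this is precisely what the uniform vanishing of $(\log(1-\mu/z)-\log\sigma)/\log z$ secures, and it is the hinge that makes the location-scale invariance follow from the power-scaling definition.
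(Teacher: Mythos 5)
Your proof is correct. Note that the paper itself does not prove this proposition --- it simply defers to \cite{desgagne2015robustness} --- so there is no in-text argument to compare against; your reduction is the natural one and is essentially the argument of that reference. The algebra $(1/\sigma)f((z-\mu)/\sigma)/f(z) = \frac{z}{z-\mu}\cdot g((z-\mu)/\sigma)/g(z)$ is right, the substitution $\nu = \log((z-\mu)/\sigma)/\log z$ with $\nu\rightarrow 1$ uniformly over the compact parameter set is the key step, and you correctly identify and resolve the one delicate point: the uniform convergence in Definition~\ref{def-log-regularly} is over a fixed compact $\nu$-range, so one must first trap the moving exponent $\nu(z,\mu,\sigma)$ inside such a range for all large $z$ before evaluating the uniform estimate at it.
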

\begin{proof}
See \cite{desgagne2015robustness}.
\end{proof}
Proposition~\ref{prop-location-scale-transformation} essentially implies
that the conditional density of an outlier $(1/\sigma )$ $ f((y -
\mathbf{x}^{T}\boldsymbol{\beta })/\sigma )$ asymptotically behaves like
$f(y)$ as $y\rightarrow \infty $. The densities of the outliers at the
numerator of posterior densities cancel each other out with those at the
denominator in the marginal, provided that the integral can be
interchanged with the limit. This is the idea of the proof of our
robustness result presented in the next section. The greatest challenge
is however to prove that we can indeed interchange the limit and the
integral. This part leads to the condition about the maximum number of
outliers to guarantee robustness.

\subsection{Resolution of Conflicts}%
\label{sec-conflict}
We now present Theorem~\ref{thm-main}, the main theoretical contribution
of this paper.
%
\begin{theorem}
\label{thm-main}
If
\begin{description}%
\item[(i)] $z f(z)\in L_{\theta }(\infty )$ with $\theta \ge 1$,
i.e. $f$ is a LRVD,
\item[(ii)]\hspace{6.5mm}$\ell \leq n/2-(p-1/2)$, i.e. \#outliers $\le $ half the
sample $-\,(p-1/2)$,%
\\
$\Leftrightarrow k\geq n/2+(p-1/2)$, i.e. \#nonoutliers $\ge $ half the
sample $+\,(p-1/2)$,%
\\
$\Leftrightarrow k-\ell \geq 2(p-1/2)$,
\hspace{1.0mm}
i.e. \#nonoutliers $-$ \#outliers $\ge $ $2(p-1/2)$,
\end{description}
then, as $\omega \rightarrow \infty $ (where $\omega $ is defined in
(\ref{eqn-omega})), we obtain the following results:
\begin{description}%
\item[(a)]
\begin{equation*}
\frac{m(\mathbf{y_{n}})}{\prod _{i=1}^{n}[f(y_{i})]^{\ell _{i}}}\rightarrow
m(\mathbf{y_{k}}),
\end{equation*}%
\item[(b)]
\begin{equation*}
\pi (\boldsymbol{\beta },\sigma \mid \mathbf{y_{n}})\rightarrow
\pi (\boldsymbol{\beta },\sigma \mid \mathbf{y_{k}}),
\end{equation*}
uniformly on $(\boldsymbol{\beta },\sigma )\in [-\vartheta ,\vartheta
]^{p}\times [1/\eta ,\eta ]$, for any $\vartheta \ge 0$ and
$\eta \ge 1$,
\item[(c)]
\begin{equation*}
\boldsymbol{\beta },\sigma \mid \mathbf{y_{n}} \convdist
\boldsymbol{\beta },\sigma \mid \mathbf{y_{k}},
\end{equation*}
and in particular
\begin{equation*}
\beta _{j}\mid \mathbf{y_{n}} \convdist \beta _{j}\mid \mathbf{y_{k}},
\, j=1,\ldots ,p,
\hspace{5mm}\text{ and }
\hspace{5mm}
\sigma \mid \mathbf{y_{n}} \convdist \sigma \mid \mathbf{y_{k}},
\end{equation*}%
\item[(d)] if additionally $k\geq n/2+(p-1/2) + M$, then
\begin{equation*}
\mathbb{E}[\beta _{j}^{M}\mid \mathbf{y_{n}}]\rightarrow \mathbb{E}[
\beta _{j}^{M}\mid \mathbf{y_{k}}], \, j=1,\ldots ,p,
\hspace{5mm}\text{ and }
\hspace{5mm}
\mathbb{E}[\sigma ^{M}\mid \mathbf{y_{n}}] \rightarrow \mathbb{E}[
\sigma ^{M}\mid \mathbf{y_{k}}].
\end{equation*}
\end{description}
\end{theorem}

\begin{proof}
See Section~\ref{sec-proof}.
\end{proof}

The two sufficient conditions of Theorem~\ref{thm-main} are remarkably
simple. Condition (i) indicates that modelling must be performed using
a super heavy-tailed density $f$, more precisely using a LRVD, e.g. a
LPTN as proposed. Condition (ii) gives in fact the breakdown point,
generally defined as the proportion of outliers $(\ell /n)$ that an
estimator can handle. We have $\ell /n\leq 1/2-(p-1/2)/n$, which
translates into a breakdown point of $50\%$ as $n\rightarrow \infty $
(for fixed $p$), usually considered as the maximum and best desired
value. Condition (ii) is thus generally satisfied in practice.

Results (a) to (d) are different representations of whole robustness.
Essentially, the posterior inference arising from the whole sample
converges towards the posterior inference based on the nonoutliers only.
The impact of outliers then gradually vanishes as they approach plus or
minus infinity.

In Result~(a), the asymptotic behaviour of the marginal $m(
\mathbf{y_{n}})$ is described. This result is used in
Section~\ref{sec_case_study} to assess robustness of Bayes factors for
testing $H_{0}:\beta _{i}=0$ versus $H_{0}:\beta _{i}\neq 0$ (when
$i\geq 2$). Result~(a) is in fact the centrepiece of Theorem 1; its
demonstration requires considerable work, and leads relatively easily
to the other results of the theorem.

The convergence of the posterior density in Result~(b) enables to assess
that the MAP estimates of $\boldsymbol{\beta }$ and $\sigma $ are wholly
robust. Given that these estimators correspond to the MLE when the prior
is proportional to 1, the frequentist estimates are, as a result, also
wholly robust. This allows establishing a connection between Bayesian
and frequentist robustness.

Result~(c) indicates that any estimation of $\boldsymbol{\beta }$ and
$\sigma $ based on posterior quantiles (e.g. using posterior medians
and Bayesian credible intervals) is robust to outliers. Note that in
fact we obtain the stronger result of $L^{1}$ convergence:
\begin{equation*}
\int _{0}^{\infty }\int _{\re ^{p}}\big |\pi (\boldsymbol{\beta },\sigma
\mid \mathbf{y_{n}})-\pi (\boldsymbol{\beta },\sigma \mid
\mathbf{y_{k}})\big |\,d\boldsymbol{\beta }\,d\sigma \rightarrow 0,
\end{equation*}
which in turn implies that $\Prob (\boldsymbol{\beta },\sigma \in E
\mid \mathbf{y_{n}})\rightarrow \Prob (\boldsymbol{\beta },\sigma
\in E \mid \mathbf{y_{k}})$ as $\omega \rightarrow \infty $, uniformly
for all sets $E\subset \re ^{p}\times \re ^{+}$, a slightly stronger
than convergence in distribution given in Result~(c) which requires only
pointwise convergence.

Posterior expectations are wholly robust as well, as indicated by
Result~(d). It is interesting to notice that all these results guarantee
the robustness of a variety of Bayes estimators.

\begin{remark}
\label{rmk_thm}
When any type of explanatory variables is considered, the same results
as in Theorem~\ref{thm-main} hold under the following additional
assumption: it is possible to choose $n/2+(p-1/2)$ (or $n/2+(p-1/2) +
M$) nonoutliers --- the required number of nonoutliers depending on
which results we target (Results~(a) to (c) or Results~(a) to (d)) ---
that have $p$-wise linearly independent $\mathbf{x}_{i}$ vectors. This
means that any $p$ vectors $\mathbf{x}_{i_{1}},\ldots ,\mathbf{x}_{i
_{p}}$ among the chosen subgroup must be linearly independent. In
variable selection, the convergence of the joint posterior of the models
and their parameters, and of the expectations, hold if the assumptions
are verified for the complete model.
\end{remark}

\begin{remark}
We prove that modelling with $f$ having tails behaving like
$|z|^{-1} (\log |z|)^{-\theta }$ is sufficient to obtain the results in
Theorem~\ref{thm-main}. It seems ``almost'' necessary because, on one
hand, a tail behaviour of $z^{-2}$ (corresponding to a Student density)
is not sufficient, and on the other hand, $|z|^{-1}$ is not integrable.
\end{remark}

\section{Efficient and Robust Regression Using LPTN}%
\label{sec-practice}
In Section~\ref{sec-conflict}, we stated theoretical results which
essentially indicate that using a LRVD for the errors ensures a high
breakpoint of $1/2-(p-1/2)/n$ with a whole rejection of the outliers as
their error goes to $+\infty $ or $-\infty $. The conflict is thus
resolved and the linear regression is in agreement with the bulk of the
data.

In this section, we build on these results to propose a solution in the
realistic situation where a statistician satisfied with the normal model
in the absence of outliers seeks protection in the eventuality of
contamination by outliers. Mathematically, we consider the context where
the errors have a mixture distribution, with a normal component for the
bulk of the data and another component $F_{0}$ for the outliers, that
is
%
\begin{equation}
\label{eqn_mixture}
\epsilon _{i}/\sigma \,\sim \, \alpha \,\mathcal{N}(0,1) + (1 - \alpha
) F_{0}, \quad i=1,\ldots ,n,
\end{equation}
where $0<\alpha \le 1$ represents the proportion of normal observations
in the sample. We thus look for efficient estimators that perform well
in the absence of outliers, that is when $\omega =1$ and the model is
the pure normal. As mentioned in the introduction, OLS (or equivalently
the normal model) is considered as the benchmark in this situation. Our
efficient estimators must also be robust and perform in the presence of
outliers, and this, for as many scenarios of $\alpha <1$ and
$F_{0}$ as possible.

\subsection{LPTN Distribution}%
\label{sec-LPTN}
The solution we propose consists in assuming that the errors have a LPTN
distribution with a prespecified parameter $\rho \in (2 \Phi (1)-1, 1)
\approx (0.6827, 1)$, denoted LPTN($\rho $). More precisely, we still
have $\epsilon _{i}\mid \sigma \,\simdist \,(1/\sigma )f\left (\epsilon
_{i}/\sigma \right )$, but the density $f$ is now assumed to be
%
\begin{equation}
\label{eqn_log_pareto}
f(z)=\left \{
\begin{array}{lcc}
\varphi (z) & \text{ if } & \abs{z}\leq \tau ,
\\
\varphi (\tau )\,\frac{\tau }{|z|}\left (\frac{\log \tau }{\log |z|}\right )
^{\lambda +1} & \text{ if } & \abs{z}>\tau ,
\\
\end{array}
\right .
\end{equation}
where $z\in \re $, and $\tau >1$ and $\lambda >0$ are functions of
$\rho $ with
%
\begin{align}
\label{eqn_tau}
& \tau =\Phi ^{-1}((1+\rho )/2) := \{\tau : \Prob (-\tau \leq Z \leq
\tau )= \rho \,\text{ for }\, Z\, \simdist \, \mathcal{N}(0,1)\},
\\
& \lambda =2(1-\rho )^{-1}\varphi (\tau ) \, \tau \log (\tau ),
\nonumber
\end{align}
$\varphi (\cdot )$, $\Phi (\cdot )$ and $\Phi ^{-1}(\cdot )$ being the
PDF, cumulative distribution function (CDF) and inverse CDF of a
standard normal, respectively.

The LPTN distribution was introduced by
\cite{desgagne2015robustness}, who in fact presents a more general
version than that shown here. The parameter $\lambda $ that controls the
tail decay was originally free and a multiplicative normalising constant
$K(\rho , \lambda )$ was needed. For example, the center of the density
(the area $\abs{z}\leq \tau $) was given by $K(\rho , \lambda )\varphi
(z)$. In order to pursue our efficiency objective, we set the constant
to 1, which in return forces $\lambda $ to be automatically set as a
function of $\rho $. The parameter $\rho $, chosen by the user, thus
represents the mass of the central part that exactly matches the
$\mathcal{N}(0, 1)$ density.

As $\rho $ increases, $f$ approaches the normal. An increase in
$\rho $ also implies an increase in $\lambda $ and $\tau $, which
translates into a density $f$ with lighter tails. Efficiency is also
expected to increase, but robustness to decrease. A compromise has
therefore to be made and it is controlled by the statistician through
the parameter $\rho $. In other words, this parameter represents the
tolerance to (bounded) impact from outliers at the benefit of efficiency
when the data set is not contaminated. The user can also select its
value based on prior opinion about the probable proportion of outliers,
by setting it to 1 minus this proportion.

The rationale behind proposing the LPTN is thus that, in addition to
exactly matching the normal density on the part with highest
probability, this distribution has log-Pareto tails ensuring that our
theoretical robustness result hold, and this for any value of
$\rho $. This type of tails consequently accommodates for a large
spectrum of $\alpha $ and $F_{0}$ in the mixture (\ref{eqn_mixture})
when $\alpha <1$ and generates efficient inference when $\alpha =1$ as
well (this latter characteristic is discussed in
Section~\ref{sec_efficiency}). A comparison between different LPTN
densities is shown in Figure~\ref{fig_comp_normal_log_pareto}. Note
that, as required for our theoretical results of
Section~\ref{sec-robustness}, the LPTN distribution has a strictly
positive continuous PDF on $\re $ that is symmetric with respect to the
origin and such that $z f(z)$ is monotonic for $z>\tau $.\looseness=1

\begin{figure}[t]
\includegraphics{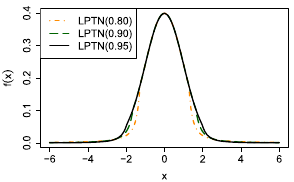}
\caption{Densities of the LPTN(0.80), LPTN(0.90) and LPTN(0.95).}
\label{fig_comp_normal_log_pareto}
\end{figure}

\subsection{Efficiency of the LPTN Model}%
\label{sec_efficiency}
To theoretically study the efficiency of the LPTN Model, we consider the
situation where the data are generated from a normal and evaluate the
performance of the robust estimators in the asymptotic situation
$n\rightarrow \infty $. We start by providing evidences that the
estimators for $\boldsymbol{\beta }$ are consistent, while it depends
on $\rho $ for $\sigma $. We consider that the generative normal model
has $\boldsymbol{\beta }_{0}\in \re ^{p}$ and $\sigma _{0}>0$ as true
parameter values, and denote the associated density of one data point
$g:=\mathcal{N}(\mathbf{x}_{i}^{T}\boldsymbol{\beta }_{0},\sigma _{0}
^{2})$. Denote that associated with the LPTN model $p_{(
\boldsymbol{\beta },\sigma )}(y_{i}):=(1/\sigma )f((y_{i}-\mathbf{x}
_{i}^{T}\boldsymbol{\beta })/\sigma )$, where $f$ is a $\operatorname{LPTN}(\rho )$.
In \cite{bunke1998asymptotic}, it is proved that if the divergence
%
\begin{align}
\label{eqn_KL}
\text{KL}(\boldsymbol{\beta }, \sigma ):=\int \log (g(y_{i}) / p_{(
\boldsymbol{\beta },\sigma )}(y_{i})) \, g(y_{i}) \, dy_{i}
\end{align}
is minimised at a unique $(\boldsymbol{\beta }^{*},\sigma ^{*})$ and some
regularity conditions are satisfied, then
\begin{equation*}
\lim _{n\rightarrow \infty } \mathbb{E}[(\boldsymbol{\beta },\sigma )
\mid \mathbf{y_{n}}] = (\boldsymbol{\beta }^{*},\sigma ^{*})
\quad \text{with probability 1,}
\end{equation*}
where the expectation is with respect to the posterior arising from the
LPTN model. This is proved through the strong consistency of the MAP.

In the supplementary material (Section~\ref{sec_supp}), we prove that the first derivative of
(\ref{eqn_KL}) with respect to $\boldsymbol{\beta }$ equals 0 at
$\boldsymbol{\beta }_{0}$, and this for any value of $\sigma $. While
setting $\boldsymbol{\beta }=\boldsymbol{\beta }_{0}$ in (\ref{eqn_KL}),
we show that it is minimised at $\sigma ^{*}$ which depends on
$\rho $ (see Figure~\ref{fig_convergence_sigma}). We also show that most
of the regularity conditions in \cite{bunke1998asymptotic} are
satisfied. This analysis suggests that the true values for the
regression coefficients are recovered even though the LPTN model is
misspecified. For $\sigma $, the closer $\rho $ is to 1, the more
similar are $\sigma ^{*}$ and $\sigma _{0}$. For instance, when
$\rho =0.9$, $\sigma ^{*}/\sigma _{0}=1.03$, and beyond $\rho =0.95$, this
ratio is essentially 1.

When the data are generated from the normal model, estimators arising
from it are certainly more efficient. We however numerically verified
that the learning rate for the robust estimators is the same as the
normal ones, suggesting that the efficiency is bounded away from 0 for
all $n$. Some additional details are needed to rigorously prove the
consistency of the Bayes estimates and to accurately conclude about
efficiency.\looseness=1

\begin{figure}[t]
\includegraphics{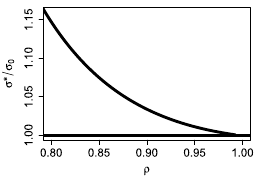}
\caption{Minimiser of the divergence $\sigma ^{*}$ when $ \boldsymbol{\beta
}=\boldsymbol{\beta }_{0}$, as a function of $\rho $.}
\label{fig_convergence_sigma}
\end{figure}

\subsection{Simulated Case Study}%
\label{sec_case_study}
We carry out in this section a linear regression analysis on a given
data set using our robust approach and also the classical method with
the normal assumption for comparison. In doing so, we address all
practical considerations, resulting in a straightforward implementation
by users. In this regard, all R code used to produce numerical results
is provided at
\href{https://arxiv.org/abs/1612.06198}{https://arxiv.org/abs/1612.06198},
which also allows reproducing these results.\looseness=-1

For a given city, we want to model the market value of a house in
thousands of dollars using the average home value in its residential
sector in thousands of dollars, the living area in square metre (sq.m.)
and the land area in sq.m. We consider a simulated sample of size
$n=50$ that contains 3 outliers (it is given in detail in the provided
R code). To give an overview of it, we present in Table~\ref{table1} the
data for Home 2 and for the outliers: Homes 1, 3 and 49.

\begin{table}[ht]
\begin{tabular}{l rrrr}
\hline
\textbf{Characteristics} & \textbf{Home 2} & \textbf{Home 1} & \textbf{Home 3} & \textbf{Home 49} \cr
\hline
Home value (in \$1,000) & 326 & 137 & 20 & 1,000 \cr
Value of the sector (in \$1,000) & 343 & 670 & 350 & 560 \cr
Living area (in sq.m) & 205 & 149 & 222 & 269 \cr
Land area (in sq.m) & 345 & 372 & 434 & 655 \cr
\hline
\end{tabular}
\caption{Data from the studied sample.}\label{table1}
\end{table}

Home 2 has a value of \$326,000 (the sample mean is \$504,900), is
located in a residential sector where houses are valued at \$343,000 in
average (the sample mean is \$508,880), has a living surface of 205
sq.m. (the sample mean is 200 sq.m) and a land of 345 sq.m. (the sample
mean is 500 sq.m). Homes 1 and 3 both have aberrantly low values, while
it is the opposite for Home 49. They are meant to represent a damaged
house, a data entry error and an eco-friendly house, respectively.

To improve the interpretation of the linear regression, the explanatory
variables are centred around their respective sample mean. Therefore,
for each house, we define $x_{i2}$ as the average value in its
residential sector (in \$1,000) minus 508.88, $x_{i3}$ as the living
area minus 200 and $x_{i4}$ as the land area minus 500. Note that
centring affects only the constant of the model, $\beta _{1}$, which can
now be interpreted as the predicted value of the typical house with
average features $x_{i2}=x_{i3}=x_{i4}=0$. The model used to generate
the data (except the outliers) is $Y_{i}=\mathbf{x}_{i}^{T}
\boldsymbol{\beta }+ \epsilon _{i}$ with $\boldsymbol{\beta }:=(508.88,
1, 1, 0.5)^{T}$ and $\epsilon _{i}\mid \sigma \simdist (1/\sigma ) f(
\epsilon _{i}/\sigma )$, where $f=\mathcal{N}(0,1)$ and $\sigma = 40$.

In Figure~\ref{fig_y_vs_x}, we plot the dependent variable against each
explanatory variable to depict their respective linear relation. The
pairwise correlations between the explanatory variables are all below
0.10, suggesting that these graphs provide a fair representation of the
multivariate relation. The parameters of the generative model have been
set to create the expected situation in which an increase in any feature
is associated with an increase in home value.

\begin{figure}[ht]
\includegraphics[scale=0.98]{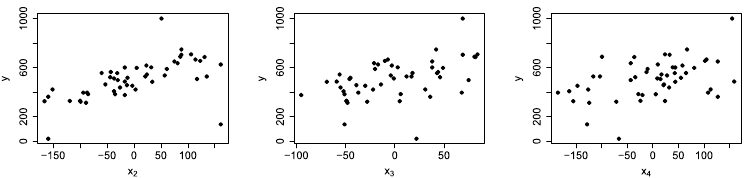}
\caption{The dependent variable versus each of the covariates.}
\label{fig_y_vs_x}\vspace*{-3pt}
\end{figure}

For the analysis, the density $f$ is assumed to be a LPTN($\rho =0.95$)
for the robust model and a $\mathcal{N}(0, 1)$ under the classical
model. We also set $\pi (\boldsymbol{\beta }, \sigma ) \propto 1/
\sigma $, the usual noninformative prior.\vadjust{\goodbreak} The estimation of the
parameters is done through the posterior density as expressed in
(\ref{def_post_dens}). The posterior means, medians and credible
intervals are computed through a random walk Metropolis (RWM) algorithm,
one of the easiest to implement Metropolis--Hastings
(\cite{metropolis1953equation} and
\cite{hastings1970monte}) algorithms. More sophisticated methods like
the Hamiltonian Monte Carlo (HMC, see, e.g.,
\cite{neal2011mcmc}) could be used given that the likelihood function
is differentiable almost everywhere. The MAP and MLE are computed
through optimisation procedures; we use the general-purpose
\texttt{optim} function in R based on Nelder--Mead algorithm. It is of
common knowledge that maximisers (MAP and MLE) may not provide a
posterior summary as good as posterior means, for instance. The
advantage is that they can be computed quickly. We find them
particularly useful for directly giving starting points for the RWM
algorithm and for conducting simulation studies as in
Section~\ref{sec_numerical_study}.

These estimates are presented in Table~\ref{tab_est_sect_4}, in which
the numbers in square brackets are those based on the 47 nonoutliers
only (the sample without Homes 1, 3 and 49). The lower and upper bounds of the credible intervals (CI -- LB and CI -- UB)
are computed from the regions with highest posterior density using
the \texttt{coda} package. Some interesting observations are now made.
First, in the absence of outliers (results in brackets), the results of
the robust LPTN model are very similar to those of the nonrobust normal
model. As mentioned in Section~\ref{sec-LPTN}, the LPTN(0.95) is very
similar to the $\mathcal{N}(0, 1)$, in fact identical except for the 5\%
tails. The normal model is the benchmark in terms of efficiency. All
presented point estimators of $\boldsymbol{\beta }$ under the normal
model indeed correspond to OLS, which are known to produce the best
estimates (in a frequentist sense) when the errors are uncorrelated with
zero mean and homoscedastic with finite variance. This is the case for
the nonoutliers. Our example thus suggests that the choice between the
posterior means, medians, MAP or MLE is not crucial for the robust model
as well. Second, we observe that in the presence of the 3 outliers
(i.e. using the whole sample of size $n=50$), the results of the LPTN
model are barely affected, showing similar results to those excluding
the outliers, while the normal model is clearly contaminated by the
outliers. This is consistent with our theoretical asymptotic results
which indicate agreement with the bulk of the data under the robust
model. In particular, the estimate for $\sigma $ under the LPTN model
is about half that arising from the normal model, resulting in much
shorter credibility intervals for the robust model. Those patterns in
the estimates are typical of the normal and LPTN models. That is
reflected in the thorough performance evaluation presented in the next
section.\looseness=-1

\begin{table}[ht]
\tabcolsep=5.5pt
\begin{tabular}{llrrrrr}
\hline
& & \multicolumn{5}{c}{\textbf{Posterior estimates for}} \cr
\cline{3-7}
& & $\beta _{1}$ & $\beta _{2}$ & $\beta _{3}$ & $\beta _{4}$ & $\sigma $ \cr
\hline
Means & LPTN & $514.0$ [$514.5$] & $1.03$ [$1.03$] & $1.12$ [$1.09$] & $0.39$ [$0.36$] & $47.9$ [$43.8$] \cr
& $\mathcal{N}$ & $504.9$ [$514.3$] & $0.97$ [$1.02$] & $1.40$ [$1.09$] & $0.70$ [$0.36$] & $96.5$ [$43.1$] \cr
\hline
Medians & LPTN & $514.0$ [$514.6$] & $1.03$ [$1.03$] & $1.12$ [$1.09$] & $0.39$ [$0.36$] & $47.4$ [$43.5$] \cr
& $\mathcal{N}$ & $504.9$ [$514.3$] & $0.97$ [$1.02$] & $1.40$ [$1.09$] & $0.70$ [$0.36$] & $95.6$ [$42.7$] \cr
\hline
MAP & LPTN & $513.0$ [$513.7$] & $1.00$ [$1.01$] & $1.11$ [$1.10$] & $0.40$ [$0.37$] & $44.3$ [$40.8$] \cr
& $\mathcal{N}$ & $504.9$ [$514.3$] & $0.97$ [$1.02$] & $1.40$ [$1.09$] & $0.70$ [$0.36$] & $90.1$ [$40.1$] \cr
\hline
MLE & LPTN & $513.1$ [$513.8$] & $1.00$ [$1.01$] & $1.11$ [$1.10$] & $0.40$ [$0.37$] & $44.7$ [$41.1$] \cr
& $\mathcal{N}$ & $504.9$ [$514.3$] & $0.97$ [$1.02$] & $1.40$ [$1.09$] & $0.70$ [$0.36$] & $91.0$ [$40.5$] \cr
\hline
CI -- LB & LPTN & $500.3$ [$501.9$] & $0.86$ [$0.87$] & $0.81$ [$0.81$] & $0.22$ [$0.21$] & $36.9$ [$34.5$] \cr
& $\mathcal{N}$ & $478.1$ [$501.8$] & $0.66$ [$0.87$] & $0.81$ [$0.82$] & $0.38$ [$0.21$] & $77.3$ [$34.4$] \cr
\hline
CI -- UB & LPTN & $527.7$ [$527.0$] & $1.20$ [$1.19$] & $1.42$ [$1.37$] & $0.56$ [$0.52$] & $59.8$ [$53.7$] \cr
& $\mathcal{N}$ & $532.2$ [$526.8$] & $1.29$ [$1.18$] & $2.00$ [$1.37$] & $1.02$ [$0.51$] & $117.1$ [$52.7$] \cr
\hline
\end{tabular}
\caption{Posterior means and medians, MAP, MLE and credible intervals (CI -- LB and CI -- UB), under the LPTN($\rho =0.95$) and
$\mathcal{N}(0,1)$ assumptions for $f$; the numbers in square brackets are the
estimates based on the 47 nonoutliers only.}\label{tab_est_sect_4}
\end{table}

With the posterior in hand, one can take the inference one step further
with outlier identification and prediction. The former is first\vadjust{\goodbreak}
discussed. For each observation $i=1,\ldots ,n$, one can estimate the
value fitted by the hyperplane $\mathbf{x}_{i}^{T}
\boldsymbol{\beta }$, the realisation of the error $y_{i} -
\mathbf{x}_{i}^{T} \boldsymbol{\beta }$ and its standardised version
$z_{i}:=(y_{i} - \mathbf{x}_{i}^{T} \boldsymbol{\beta }) / \sigma $.
This can be achieved through their MAP estimates (or MLE) by simply
plugging in the MAP estimates (or~MLE) of $\boldsymbol{\beta }$ and
$\sigma $ (as given in Table~\ref{tab_est_sect_4}) in their expression.
Or possibly better, they can be estimated by their posterior mean or
median. For this purpose, samples can be directly generated from their
posterior distribution through the values of $\boldsymbol{\beta }$ and
$\sigma $ already generated from the RWM algorithm (or obviously, it can
be done at the same time the algorithm runs). Consider for instance Home
49, which is valued at $y_{49}=1,\!000$, the posterior means give fitted
values of 704.0 (LPTN) and 759.7 (normal), errors of 296.0 (LPTN) and
240.3 (normal) and standardised errors of 6.28 (LPTN) and 2.52 (normal).
We note that the hyperplane is attracted towards the outlier under the
normal model, which leads to an estimated error less extreme than that
under the LPTN model.

Naturally, large estimates for standardised errors $|z_{i}|$ indicate
strong evidence of outlyingness. A threshold of 2.5 is sometimes
recommended to differentiate outliers from nonoutliers, see, e.g.,
\cite{GerviniYohai2002rewlse}. On this basis, Home 49 appears clearly
as an outlier under the LPTN model, while the conclusion is unclear for
the normal model.

To provide a measure of outlyingness, we evaluate the probability for
a (unrealised) standardised error $\epsilon _{i_{0}}/\sigma $ --- which
density is $f$ --- to be more extreme than $|z_{i}|$:
\begin{equation*}
\varrho (z_{i}):=\Prob (|\epsilon _{i_{0}}/\sigma |>|z_{i}|)=\Prob
\left (|\epsilon _{i_{0}}/\sigma |>|y_{i} - \mathbf{x}_{i}^{T}
\boldsymbol{\beta }| / \sigma \right ).
\end{equation*}
Under the normal model, we have
\begin{center}
$\varrho ^{\mathcal{N}}(z_{i}):=2 (1-\Phi (|z_{i}|))$,
\end{center}
whereas under the LPTN($\rho $) it is
\begin{equation*}
\label{eqn_pi_LPTN}
\varrho ^{\LPTN }(z_{i}):=\left \{
\begin{array}{ccc}
2 (\Phi (\tau )-\Phi (|z_{i}|)) + 2\varphi (\tau )\tau (\log \tau )
\lambda ^{-1} & \text{ if } & |z_{i}|\leq \tau ,
\\
2\varphi (\tau )\tau (\log \tau ) \lambda ^{-1}
\left (\frac{\log
\tau }{\log |z_{i}|}\right )^{\lambda } & \text{ if } & |z_{i}|>
\tau ,
\\
\end{array}
\right .
\end{equation*}
where $\tau = 1.96$ and $\lambda = 3.08$ when $\rho =0.95$, as computed
with (\ref{eqn_tau}).

The measure $\varrho (z_{i})$ is a random variable as it is a function
of the unknown parameters $\boldsymbol{\beta }$ and $\sigma $, and can
be estimated \textit{a posteriori} using the same technique as above.
In the same spirit as \cite{GerviniYohai2002rewlse}, one can flag
observations with estimates for $\varrho (z_{i})$ lesser than a chosen
threshold. A reasonable threshold, in our opinion, should lie between
0.01 and 0.02. This corresponds to a range of $2.47$ to $3.11$ of MAP
estimates for $|z_{i}|$ under the LPTN model if $\varrho $ is estimated
through its MAP (because this is achieved by plugging in the MAP of
$|z_{i}|$).

If we look again at results of Home 49, the posterior means for
$\varrho (z_{i})$ give 0.0024 and 0.0208 for the LPTN and normal models,
respectively. Home 49 appears again clearly as an outlier under the LPTN
model, whereas it is much less convincing for the normal model. At a
threshold of 0.02 or less, this observation would not be considered as
an outlier. Outlier detection using the wholly robust LPTN model is
effective; outliers do not mask each other, a well-known phenomenon
arising with nonrobust models typically due to overestimation of the
scale $\sigma $, and sometimes because of attraction of hyperplanes. The
posterior means for the standardised errors $z_{i}$ are plotted in
Figure~\ref{fig_errors}, along with the posterior means for
$\varrho (z_{i})$ for the three outliers.

\begin{figure}[ht]
\includegraphics{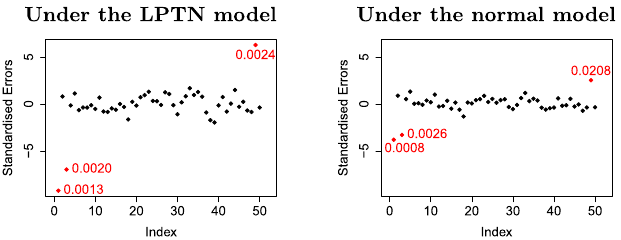}
\caption{Posterior mean for the standardised errors $z_{i}$ and outlier
identification measures $\varrho (z_{i})$, under the LPTN and normal models.}
\label{fig_errors}
\end{figure}

For predicting a future observation, say $Y_{n+1} =\mathbf{x}_{n+1}
^{T} \boldsymbol{\beta }+\epsilon _{n+1}$, we estimate its posterior
predictive density by sampling from it through the RWM algorithm as
before. For each realisation of $(\boldsymbol{\beta }, \sigma )$ in the
Markov chains, we generate $\epsilon _{n+1}$ from an LPTN (or a normal
for the nonrobust model) centred at 0 with a scale parameter
$\sigma $, to which we add $\mathbf{x}_{n+1}^{T} \boldsymbol{\beta }$.
We can thus easily compute posterior predictive quantities such as the
median, credible intervals, probabilities and so on. Note that the
expectation does not exist under the LPTN (because it does not exist for
$\epsilon _{n+1}$). MAP can be approximated from the sample, but because
it requires extra work, we suggest using the median for prediction.

If for example we consider the future observation of the typical house
with $x_{n+1,2}=x_{n+1,3}=x_{n+1,4}=0$, the posterior predictive medians
for $Y_{n+1}$ are 514.0 and 504.9 under the LPTN and normal models,
respectively; they are as expected around the posterior medians of the
intercept $\beta _{1}$. The credible intervals are $(417.4, 611.6)$ and
$(313.7, 698.6)$ for the LPTN and normal models, respectively. We note
the shorter length for the robust model, which is attributable to the
robust estimation of the scale $\sigma $.

Finally, we easily perform statistical hypothesis testing through Bayes
factors. For this, we implement a reversible jump algorithm
(\cite{green1995reversible}) with two models and uniform prior on
these. If, for instance, we want to test for hypotheses $H_{0}:\beta
_{4}= 0$ versus $H_{1}:\beta _{4}\neq 0$, the implementation essentially
requires the tuning of an additional RWM algorithm; that for sampling
the parameters of the model without $x_{4}$. In our example, the Bayes
factors are $1.68 \times 10^{3}$ and $1.74 \times 10^{3}$ for the LPTN
and normal models, respectively. If we exclude the outliers, they become
$2.80 \times 10^{3}$ and $2.12 \times 10^{3}$ for the LPTN and normal
models, respectively.

The Bayes factor is a robust measure under the model with a LPTN
distribution on the error term. Indeed, Result~(a) of
Theorem~\ref{thm-main} states that the marginal $m(\mathbf{y_{n}})$
behaves like $m(\mathbf{y_{k}})\prod _{i=1}^{n}[f(y_{i})]^{\ell _{i}}$.
Furthermore, the marginal $m(\mathbf{y_{n}}\mid H_{0})$ behaves like
$m(\mathbf{y_{k}}\mid H_{0})\prod _{i=1}^{n}[f(y_{i})$ $]^{\ell _{i}}$,
because when the assumptions of Theorem~\ref{thm-main} are satisfied for
the larger model, they are automatically satisfied for the smaller. As
a result, the Bayes factor $m(\mathbf{y_{n}})/m(\mathbf{y_{n}}\mid H
_{0})$ behaves like $m(\mathbf{y_{k}})/m(\mathbf{y_{k}}\mid H_{0})$.

\subsection{Performance Evaluation}%
\label{sec_numerical_study}
In this section, we evaluate the performance of the robust LPTN model
through a simulation study. We consider the same data set and model as
in Section~\ref{sec_case_study}, but get rid of $\mathbf{y_{n}}$ which
are generated. Several values for $\rho $ are considered: $\rho = 0.80,
0.84, 0.90, 0.93, 0.95$, and $0.98$. As in the last Section, it is
compared with the nonrobust normal model. We add the Bayesian approach
of \cite{1968tiao119} with normal mixtures and the model with the
Student distribution. For the latter, we consider different degrees of
freedom (df): 1, 2, 4, 6, and 10. We set $\pi (\boldsymbol{\beta },
\sigma )\propto 1$ and estimate the parameters using the MAP, which
therefore corresponds to the MLE. The Bayesian methods thus become
direct competitors to the frequentist robust estimators like the popular
M- and S-estimators. These as well as MM-, REWLSE (the two best
frequentist methods according to the recent review by
\cite{yuyao2017review}) and LTS estimators are included in the
simulation study.

The data $\mathbf{y_{n}}$ are generated through the errors $\epsilon
_{i}\mid \sigma \, \simdist \, (1/\sigma )f(\epsilon _{i}/\sigma )$ under
the following scenarios:
\begin{description}
\setlength{\itemsep }{1pt}
\item[\textbullet ] \textbf{Scenario 0}: $f=\mathcal{N}(0, 1)$,
\item[\textbullet ] \textbf{Scenario 1}: $f=95\% \, \mathcal{N}(0, 1)
+ 5\% \, \mathcal{N}(7, 1)$,
\item[\textbullet ] \textbf{Scenario 2}: $f=90\% \, \mathcal{N}(0, 1)
+ 10\% \, \mathcal{N}(7, 1)$,
\item[\textbullet ] \textbf{Scenario 3}: $f=95\% \, \mathcal{N}(0, 1)
+ 5\% \, \mathcal{N}(3, 1)$, where the $\mathbf{x}_{i}$ of the outliers
are modified to make them high-leverage points (the procedure is
explained in detail below),
\item[\textbullet ] \textbf{Scenario 4}: $f=90\% \, \mathcal{N}(0, 1)
+ 10\% \, \mathcal{N}(3, 1)$, where the $\mathbf{x}_{i}$ of the outliers
are modified to make them high-leverage points.
\end{description}

Nonoutliers are generated from the first mixture component, whereas
outliers are generated from the second one. The choice of locations for
the outliers aims at producing challenging and interesting situations,
where a vast spectrum of behaviours are observed for especially the LPTN
and Student models with their different sets of parameters $\rho $ and
df. Scenarios 2 and 4 are studied to show how performance varies when
the number of outliers is doubled, from 5\% to 10\% of the sample size.
For each scenario, we consider two sample sizes: $n=50$ and
$n=100$. The case $n=50$ corresponding to the original $\mathbf{x}
_{1},\ldots , \mathbf{x}_{50}$, $50$ additional observations from the
explanatory variables are generated in the same fashion as the original
ones for the case $n=100$.

For Scenarios 3 and 4, when an error is generated from the second
mixture component (that generating extreme values), say $\epsilon _{i
_{0}}$, we modify one of the coordinates of the associated $
\mathbf{x}_{i_{0}}$ to make the observation an high-leverage point. More
precisely, we randomly choose a covariable number, say $j_{0}\in \{2,3,4
\}$, and set $x_{i_{0} j_{0}}=1.5 \, \max _{i} x_{i j_{0}}$.

The performance of each model/estimator is evaluated through the
\textit{premium versus protection} approach of
\cite{anscombe1960outliers}. This approach consists in computing the
premium to pay for using a robust alternative $\mathcal{R}$ to the
normal $\mathcal{N}$ when there are no outliers (Scenario 0), and the
protection provided by this alternative when the data sets are
contaminated (which is likely in the other scenarios). The premium and
protections associated with a robust alternative $\mathcal{R}$ are
evaluated through the following:
\begin{align*}
&\text{Premium}(\mathcal{R}, \hat{\boldsymbol{\beta }}) := \frac{
\mathcal{M}_{\mathcal{R}}(\hat{\boldsymbol{\beta }}) - \mathcal{M}
_{\mathcal{N}}(\hat{\boldsymbol{\beta }}) }{\mathcal{M}_{\mathcal{N}}(
\hat{\boldsymbol{\beta }})},
\cr
&\text{Protection}(\mathcal{R}, \hat{\boldsymbol{\beta }}\mid
\mathcal{S}) := \frac{\mathcal{M}_{\mathcal{N}}(
\hat{\boldsymbol{\beta }}\mid \mathcal{S}) -
\mathcal{M}_{\mathcal{R}}(
\hat{\boldsymbol{\beta }}\mid \mathcal{S})}{\mathcal{M}_{\mathcal{N}}(
\hat{\boldsymbol{\beta }}\mid \mathcal{S})},
\end{align*}
where $\mathcal{S}$ represents the scenario under which the protection
is evaluated (1, 2, 3 or~4), and $\mathcal{M}_{\mathcal{N}}(
\hat{\boldsymbol{\beta }}\mid \mathcal{S})$, for instance, denotes an
error measure $\mathcal{M}$ for estimating $\boldsymbol{\beta }$ by
$\hat{\boldsymbol{\beta }}$ using the normal model $\mathcal{N}$, in
Scenario $\mathcal{S}$. The scenario is not specified for the premium
because it does not vary; it is Scenario 0. The premiums and protections
with respect to $\hat{\sigma }$ --- $\text{Premium}(\mathcal{R},
\hat{\sigma })$ and $\text{Protection}(\mathcal{R}, \hat{\sigma }
\mid \mathcal{S})$ --- have the analogous definitions.

We consider two distinct error measures ($\mathcal{M}_{\mathcal{R}}(
\hat{\boldsymbol{\beta }}\mid \mathcal{S})$ and $\mathcal{M}_{
\mathcal{R}}(\hat{\sigma }\mid \mathcal{S})$) to highlight the
difference between them, and also because there is no natural way of
combining them. We propose to define $\mathcal{M}_{\mathcal{R}}(
\hat{\boldsymbol{\beta }}\mid \mathcal{S})$ as the square root of the
expectation with respect to $\mathbf{Y_{n}}$ (and therefore the
estimates associated with each realisation) of the average squared
vertical distances between the estimated and true hyperplanes measured
at each observation $\mathbf{x}_{i}$:
\begin{align*}
\mathcal{M}_{\mathcal{R}}(\hat{\boldsymbol{\beta }}\mid \mathcal{S})
&:=
\left (\mathbb{E}\left [\frac{1}{n}\sum _{i=1}^{n} (\mathbf{x}_{i}^{T}
\hat{\boldsymbol{\beta }} - \mathbf{x}_{i}^{T}\boldsymbol{\beta })^{2}\right ]\right )
^{1/2} = \left (\frac{1}{n}\, \mathbb{E}\left [(
\hat{\boldsymbol{\beta }} - \boldsymbol{\beta })^{T} \mathbf{X}^{T}
\mathbf{X} (\hat{\boldsymbol{\beta }} - \boldsymbol{\beta })\right ]\right )
^{1/2},
\end{align*}
where $\mathbf{X}$ is the design matrix with rows $\mathbf{x}_{1}^{T},
\ldots ,\mathbf{x}_{n}^{T}$. The expression after the second equality
provides us with another interpretation. The measure represents an
alternative to $ (\mathbb{E} [(\hat{\boldsymbol{\beta }} -
\boldsymbol{\beta })^{T} (\hat{\boldsymbol{\beta }} -
\boldsymbol{\beta }) ] )^{1/2}$, the square root of the
trace of the mean square error (MSE) matrix for $
\hat{\boldsymbol{\beta }}$. Given that under the normal model
$\sigma ^{2} (\mathbf{X}^{T}\mathbf{X})^{-1}$ is the covariance matrix
of $\hat{\boldsymbol{\beta }}$, standardisation is applied to
$\hat{\boldsymbol{\beta }}$ in our measure. For $\hat{\sigma }$, we
simply use the square root of its MSE: $\mathcal{M}_{\mathcal{R}}(
\hat{\sigma }\mid \mathcal{S}):= (\mathbb{E}[(\hat{\sigma }-
\sigma )^{2}] )^{1/2}$. Note that the expectations are
approximated through the simulation of 250,000 vectors $
\mathbf{y_{n}}$.

The premium and protection for a given robust alternative $
\mathcal{R}$ in a given scenario $\mathcal{S}$ are therefore the
relative increase and decrease in $\mathcal{M}_{\mathcal{R}}(\cdot
\mid \mathcal{S})$ due to the use of the robust alternative instead of
the normal (the benchmark model), respectively. For each robust
alternative, there are four premiums to compute: one for the measure for
$\hat{\boldsymbol{\beta }}$ and one for the measure for $
\hat{\sigma }$, in the cases $n=50$ and $n=100$. There are sixteen
protections to compute given that we also do this for Scenarios 1, 2,
3, and 4. The idea is to graphically present the results by plotting the
couples $(\text{Premium}(\mathcal{R}, \hat{\boldsymbol{\beta }}),
\text{Protection}(\mathcal{R}, \hat{\boldsymbol{\beta }} \mid
\mathcal{S}))$ for all robust alternatives. The results for Scenarios
1 and 2 are shown in Figure~\ref{fig_sce_1}, and those for Scenarios 3
and 4 in Figure~\ref{fig_sce_2}.

\begin{figure}[t!]
\includegraphics[scale=0.99]{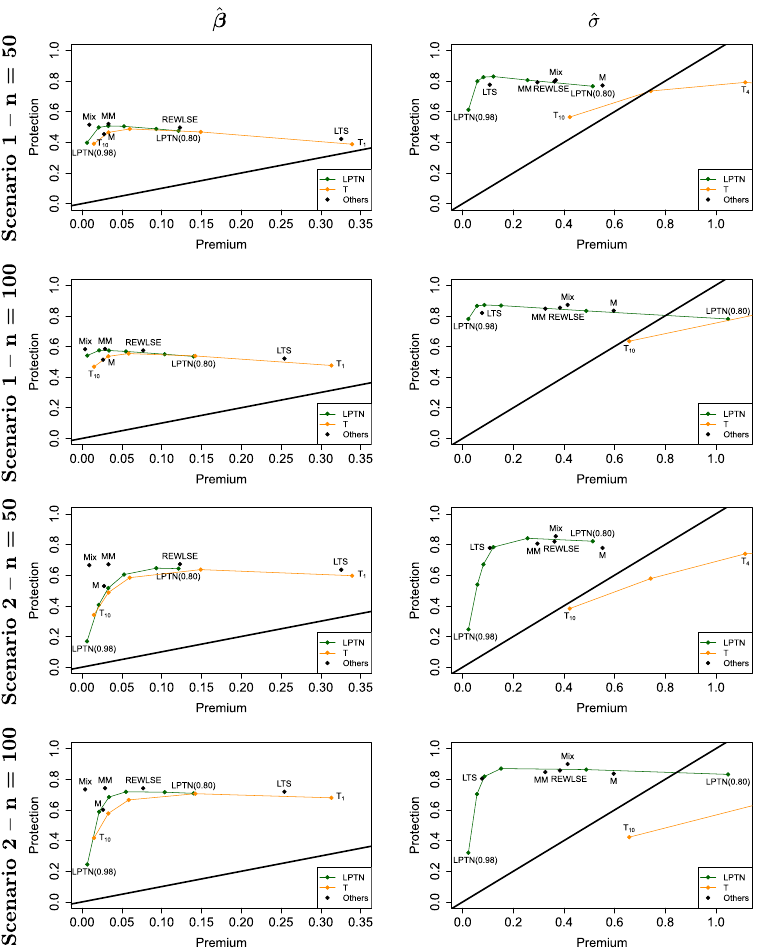}
\caption{Premiums vs protections in Scenarios 1 and 2, and lines premium $=$
protection to identify the robust alternatives that offer better protections
than their premium.}
\label{fig_sce_1}
\end{figure}

\begin{figure}[t!]
\includegraphics[scale=0.99]{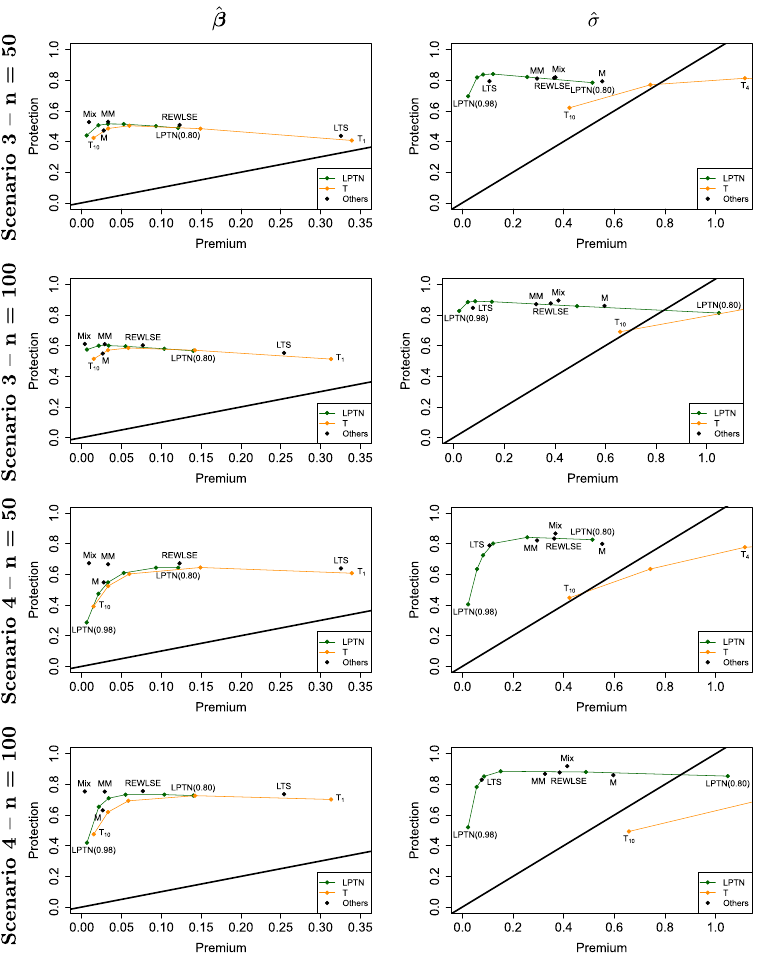}
\caption{Premiums vs protections in Scenarios 3 and 4, and lines premium $=$
protection to identify the robust alternatives that offer better protections
than their premium.}
\label{fig_sce_2}
\end{figure}

From this \textit{premium versus protection} perspective, a robust
alternative dominates another if its premium is smaller and protection
larger. This means that in Figures~\ref{fig_sce_1} and
\ref{fig_sce_2}, we are looking for points in the upper left parts. It
is noticed that the robust alternatives are all excellent candidates,
except maybe for \textit{S}-estimator that we choose not to show because
of its large premium for $\hat{\boldsymbol{\beta }}$ and its same
behaviour as \textit{MM}-estimator for $\hat{\sigma }$. In particular,
the presented robust alternatives all handle high-leverage points.

By looking at Figures~\ref{fig_sce_1} and \ref{fig_sce_2}, we notice
that the LPTN curve (in green) dominates the Student curve (in orange),
more remarkably for $\hat{\sigma }$, but also for $
\hat{\boldsymbol{\beta }}$. We also notice that the optimal values for
$\rho $ for the LPTN are around the nonoutlier percentages, i.e. around
0.95 (the second point starting from the lower left corner) in Scenarios
1 and 3, and around 0.90 (the fourth point starting from the lower left
corner) in Scenarios 2 and 4. This justifies our suggestion in
Section~\ref{sec-LPTN} for selecting $\rho $ based on prior knowledge
about probable proportions of outliers, if users do not have other
preferences. The best LPTN models in all scenarios essentially dominate
all the other alternatives with respect to $\hat{\sigma }$. As for
$\hat{\boldsymbol{\beta }}$, the performance of these LPTN models is
among the best. The mixture model appears better in this case, but often
by little. The difference varies depending on the number of outliers and
the sample size. For instance, look at the LPTN(0.95) in Scenarios 1 and
3 (and also at the scale of the $x$ axis), and notice how the LPTN(0.98)
gets closer to the mixture model in these scenarios when doubling the
sample size, which makes this model almost the best. This allows to make
an interesting remark: for a given percentage of outliers (and therefore
of nonoutliers), a larger sample size translates into enhanced
protection, because there are more nonoutliers. This is especially true
for LPTN models with $\rho $ close to 1.

\section{Conclusion}%
\label{sec_conclusion_linear}
The goal of this paper, which was to provide a solution that reaches
gold standards in terms of \textit{premium versus protection} for all
parameters, is now achieved. The foundations for great protection were
established through our main theoretical contribution: the proof of
whole robustness results for linear regression. The key result is the
convergence of the posterior distribution towards that based on the
nonoutliers only when the outliers approach plus or minus infinity
(Result~(c), Theorem~\ref{thm-main}). The robustness results hold under
two simple and intuitive conditions. Firstly, the error term must follow
a super heavy-tailed distribution, namely a LRVD, to accommodate for the
presence of outliers. Secondly, the number of outliers must not exceed
half the sample $n/2$ minus $p-1/2$ (the number of regression
coefficients minus $1/2$). This last condition translates into a
limiting breakdown point of $0.5$ as $n\rightarrow \infty $.

Although the whole robustness results are theoretical and asymptotic,
their practical relevance has been shown through a comprehensive study
of the LPTN model. This specific choice of super heavy-tailed
distribution represented our main practical contribution as the
resulting model is remarkably efficient and deals with outlying
observations in an automatic and sensitive manner, succeeding in
achieving low premium in addition to large protection. The procedure for
analysing data sets to which it gives rise is also easy to use. These
characteristics of the LPTN model make it a particularly appealing
Bayesian alternative to the partially robust Student model.

\section{Proofs}%
\label{sec-proof}
We in fact provide in this section sketches of the proofs of
Proposition~\ref{proposition-proper} and Theorem~\ref{thm-main} for
space considerations. The detailed proofs can be found in the
supplementary material in Section~\ref{sec_supp}.

\subsection{Proof of Proposition~\ref{proposition-proper}}%
\label{proof-proposition-proper}
Let us pretend for now that the scale parameter is known and that its
value is $\sigma _{0}$. To simplify, we denote the posterior density as
$\pi (\boldsymbol{\beta }\mid \mathbf{y_{n}}):=\pi (
\boldsymbol{\beta },\sigma = \sigma _{0} \mid \mathbf{y_{n}})$. To prove
that it is proper, we show that the marginal $m(\mathbf{y_{n}})$ is
finite. We have that
\begin{align*}
&\int _{\re ^{p}}\pi (\boldsymbol{\beta },\sigma _{0}) \prod _{i=1}^{n}\frac{1}{
\sigma _{0}}f\left (\frac{y_{i}-\mathbf{x}_{i}^{T} \boldsymbol{\beta }}{
\sigma _{0}}\right ) \,d\boldsymbol{\beta }
\cr
&
\qquad
\leq B^{n-p+1}\max \left (1, \frac{1}{\sigma _{0}}\right )\frac{1}{
\sigma _{0}^{n-p}} \int _{\re ^{p}} \prod _{i=1}^{p}\frac{1}{\sigma _{0}}
f\left (\frac{y_{i}-\mathbf{x}_{i}^{T} \boldsymbol{\beta }}{\sigma
_{0}}\right ) \,d\boldsymbol{\beta }
\cr
&
\qquad
\leq B^{n-p+1}\max \left (1, \frac{1}{\sigma _{0}}\right )\frac{1}{
\sigma _{0}^{n-p}} \abs{\text{det}\left (\begin{array}{c}\mathbf{x}_{1}^{T} \cr \vdots \cr \mathbf{x}_{p}^{T}\end{array}\right )}
^{-1} \prod _{i=1}^{p} \int _{\re }f(u_{i}) \, du_{i},
\end{align*}
using $\pi (\boldsymbol{\beta },\sigma _{0})\leq B\max (1,1/\sigma _{0})$
(by assumption) and $f\leq B$ (because of the assumptions on this PDF),
and the changes of variables $u_{i}=(y_{i} - \mathbf{x}_{i}^{T}
\boldsymbol{\beta })/\sigma _{0}, i=1,\ldots ,p$, $B$ being a positive
constant. The last quantity above is finite given that the determinant
is different from 0 because all explanatory variables are continuous.
Note that this justifies also the assumption mentioned in
Remark~\ref{rmk_proper} about the full rank of the design matrix when
any type of explanatory variables is considered.

An additional integral with respect to $\sigma $ is added in front when
$\pi (\boldsymbol{\beta },\sigma \mid \mathbf{y_{n}})$ is considered.
For $\sigma $ not too small (bounded from below), it is easy to see that
the additional integral is finite because $\max (1,1/\sigma )$ is
bounded and $\sigma ^{-(n-p)}$ is integrable if $n-p\geq 2$. This is the
case because $n>p+1$ by assumption. For small $\sigma $, the proof is
more technical and requires to bound more carefully the densities
$f$ than above. See the supplementary material for details.

Proving that $\pi (\boldsymbol{\beta },\sigma \mid \mathbf{y_{k}})$ is
proper is similar. For the moments, we use that
\begin{align*}
\mathbb{E}[\sigma ^{M}\mid \mathbf{y_{n}}]
&=\int \sigma ^{M} \pi (
\boldsymbol{\beta },\sigma \mid \mathbf{y_{n}}) \, d
\boldsymbol{\beta }\, d\sigma
\cr
&\leq [m(\mathbf{y_{n}})]^{-1}B^{M} \int \pi (\boldsymbol{\beta },
\sigma ) \prod _{i=M+1}^{n}\frac{1}{\sigma }f\left (\frac{y_{i}-
\mathbf{x}_{i}^{T} \boldsymbol{\beta }}{\sigma }\right ) d
\boldsymbol{\beta }\, d\sigma ,
\end{align*}
using $f\leq B$. This is finite given that $m(\mathbf{y_{n}})<\infty
$ and the integral is finite because it corresponds to the marginal of
$n-M$ observations, and $n-M>p+1$ by assumption.

For the moments of $\beta _{j}$, it is more technical. Consider the first
moment. We would like to compute instead the first moment of
$|y_{i} - \mathbf{x}_{i}^{T}\boldsymbol{\beta }|$ because $(|y_{i} -
\mathbf{x}_{i}^{T}\boldsymbol{\beta }|/\sigma ) f(|y_{i} - \mathbf{x}
_{i}^{T}\boldsymbol{\beta }|/\sigma )\leq B$ (because of the assumptions
on $f$), and as for the moments of $\sigma $, it would be easy to show
that the integral is finite. The strategy is to write $\beta _{j}$ as
$\mathbf{e}_{j}^{T}\boldsymbol{\beta }$, where $\mathbf{e}_{j}$ is a
vector of size $p$ having 1 at the $j$-th position and 0's elsewhere,
and to write $\mathbf{e}_{j}^{T}$ as a linear combination of $p$ vectors
$\mathbf{x}_{i_{1}},\ldots ,\mathbf{x}_{i_{p}}$ to essentially retrieve
what we were looking for. See the supplementary material for details.

\subsection{Proof of Theorem~\ref{thm-main}}%
\label{proof-thm}
\begin{proof}[Proof of Result (a)]
To prove this result, we use that
\begin{align*}
\frac{m(\mathbf{y_{n}})}{m(\mathbf{y_{k}})\prod _{i=1}^{n}[f(y_{i})]^{
\ell _{i}}}
&= \frac{m(\mathbf{y_{n}})}{m(\mathbf{y_{k}})\prod _{i=1}
^{n}[f(y_{i})]^{\ell _{i}}}
\int _{\re ^{p}}\int _{0}^{\infty }\pi (
\boldsymbol{\beta },\sigma \mid \mathbf{y_{n}})\,d\sigma \,d
\boldsymbol{\beta }
\\
&= \int _{\re ^{p}}\int _{0}^{\infty }\frac{\pi (\boldsymbol{\beta },
\sigma )\prod _{i=1}^{n}
\left [(1/\sigma )f((y_{i}-\mathbf{x}_{i}^{T}
\boldsymbol{\beta })/\sigma )\right ]^{k_{i}+\ell _{i}}}{m(
\mathbf{y_{k}})\prod _{i=1}^{n}[f(y_{i})]^{\ell _{i}}}\,d\sigma \,d
\boldsymbol{\beta }
\\
&= \int _{\re ^{p}}\int _{0}^{\infty }
\pi (\boldsymbol{\beta },\sigma
\mid \mathbf{y_{k}}) \prod _{i=1}^{n}\left [\frac{(1/\sigma )f((y_{i}-
\mathbf{x}_{i}^{T}\boldsymbol{\beta })/\sigma )}{f(y_{i})}\right ]
^{\ell _{i}}\,d\sigma \,d\boldsymbol{\beta },
\end{align*}
and show that this integral converges towards 1 as $\omega \rightarrow
\infty $. Assuming that we can interchange the limit and the integral,
we have that
\begin{align*}
&\lim _{\omega \rightarrow \infty }\int _{\re ^{p}}\int _{0}^{\infty }
\pi (\boldsymbol{\beta },\sigma \mid \mathbf{y_{k}})
\prod _{i=1}^{n}
\left [\frac{(1/\sigma )f((y_{i}-\mathbf{x}_{i}^{T}
\boldsymbol{\beta })/\sigma )}{f(y_{i})}\right ]^{\ell _{i}}\,d\sigma
\,d\boldsymbol{\beta }
\\
&
\qquad
= \int _{\re ^{p}}\int _{0}^{\infty }\lim _{\omega \rightarrow \infty }
\pi (\boldsymbol{\beta },\sigma \mid \mathbf{y_{k}})\prod _{i=1}^{n}
\left [\frac{(1/\sigma )f((y_{i}-\mathbf{x}_{i}^{T}
\boldsymbol{\beta })/\sigma )}{f(y_{i})}\right ]^{\ell _{i}}
\,d\sigma
\,d\boldsymbol{\beta }
\\
&
\qquad
= \int _{\re ^{p}}\int _{0}^{\infty }
\pi (\boldsymbol{\beta },\sigma
\mid \mathbf{y_{k}})
\,d\sigma \,d\boldsymbol{\beta }= 1,
\end{align*}
using Proposition~\ref{prop-location-scale-transformation} in the second
equality, and next Proposition~\ref{proposition-proper}. Note that the
conditions of Proposition~\ref{proposition-proper} are satisfied given
that $k\geq \ell +2p-1\Rightarrow k\geq p+2$, assuming that
$\ell \geq 1$ (otherwise the proof is trivial) and because $p\geq 2$.

To interchange the limit and the integral, we need to prove that the
integrand is bounded by an integrable function of $
\boldsymbol{\beta }$ and $\sigma $ that does not depend on $\omega $.
As in Section~\ref{proof-proposition-proper}, let us set for now the
scale parameter to a positive value $\sigma _{0}$. We know that
%
\begin{align}
\label{eqn_fct_to_bound}
&\pi (\boldsymbol{\beta },\sigma _{0}\mid \mathbf{y_{k}})
\prod _{i=1}
^{n}\left [\frac{(1/\sigma _{0})f((y_{i}-\mathbf{x}_{i}^{T}
\boldsymbol{\beta })/\sigma _{0})}{f(y_{i})}\right ]^{\ell _{i}}
\cr
&  =[m(\mathbf{y_{k}})]^{-1}\pi (\boldsymbol{\beta },\sigma _{0})
\prod _{i=1}^{n}[(1/\sigma _{0})f((y_{i}-\mathbf{x}_{i}^{T}
\boldsymbol{\beta })/\sigma _{0})]^{k_{i}}\left [\frac{(1/\sigma _{0})f((y
_{i}-\mathbf{x}_{i}^{T}\boldsymbol{\beta })/\sigma _{0})}{f(y_{i})}\right ]
^{\ell _{i}}.\cr
\end{align}
Consider that $\boldsymbol{\beta }\in \mathcal{F}$, a set such that the
hyperplanes pass (relatively) close to the nonoutliers (fixed
observations), and therefore, (relatively) far to the outliers. In this
case, for large enough $\omega $, we have that
\begin{equation*}
\prod _{i=1}^{n}\left [\frac{(1/\sigma _{0})f((y_{i}-\mathbf{x}_{i}^{T}
\boldsymbol{\beta })/\sigma _{0})}{f(y_{i})}\right ]^{\ell _{i}}
\end{equation*}
is bounded above using
Proposition~\ref{prop-location-scale-transformation} because
$\mathbf{x}_{i}^{T}\boldsymbol{\beta }$ is bounded (recall that
$y_{i}=a_{i}+b_{i}\omega $), and the remaining terms on the right-hand
side (RHS) in (\ref{eqn_fct_to_bound}) give $\pi (\boldsymbol{\beta },
\sigma _{0}\mid \mathbf{y_{k}})$ which is integrable.

\begin{figure}[b]
\includegraphics{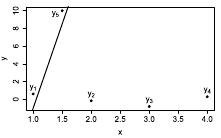}
\caption{Example of a case where the line passes close to a nonoutlier and an
outlier.}
\label{fig_proof}
\end{figure}

Consider now that $\boldsymbol{\beta }\in \mathcal{O}$, a set such that
the hyperplanes pass (relatively) close to the outliers. The difference
is that we are not sure that these hyperplanes do not pass close to the
nonoutliers (see Figure~\ref{fig_proof}). In this example, $n=5$,
$k=4$ and $\ell =1$, which satisfy the assumptions in
Theorem~\ref{thm-main}: $k-\ell =3\geq 2(p-1/2)=3$. We also have that
\begin{equation*}
\frac{(1/\sigma _{0})f((y_{4}-\mathbf{x}_{4}^{T}\boldsymbol{\beta })/
\sigma _{0})}{f(y_{5})}
\end{equation*}
is bounded above using again
Proposition~\ref{prop-location-scale-transformation} but now because
$|y_{4} -\mathbf{x}_{4}^{T}\boldsymbol{\beta }|$ is close to
$\omega $ (this is explained in greater detail in the supplementary
material). Note that it would not be true if $\mathbf{x}_{1} =
\mathbf{x}_{4}$, which is why we require to have enough of different
vectors $\mathbf{x}_{i}$ in Remark~\ref{rmk_thm}. The remaining terms
on the RHS in (\ref{eqn_fct_to_bound}) are
\begin{equation*}
[m(\mathbf{y_{k}})]^{-1}\pi (\boldsymbol{\beta },\sigma _{0})
\prod _{i=1 (i\neq 4)}^{n}[(1/\sigma _{0})f((y_{i}-\mathbf{x}_{i}^{T}
\boldsymbol{\beta })/\sigma _{0})],
\end{equation*}
which after multiplying and dividing by the right marginal is
proportional to the posterior density based on $y_{1},y_{2},y_{3},y
_{5}$, which is integrable given that $4=n-\ell \geq p+2=2p+\ell - 1=
4$. This justifies the assumption on the number of nonoutliers in
Theorem~\ref{thm-main} given by $k=n-\ell \geq 2p+\ell - 1$.

The strategy to do the proof in general is to rewrite the domain of
$\boldsymbol{\beta }$ (which is $\re ^{p}$) as a finite number of
mutually exclusive sets, in which it is always possible to proceed as
above. The function to bound thus becomes a finite sum, where each term
is bounded above by integrable function. When $\sigma $ is free, an
additional level of technicalities is added because $|y_{i}-
\mathbf{x}_{i}^{T}\boldsymbol{\beta }|$ can be large, but not
$|y_{i}-\mathbf{x}_{i}^{T}\boldsymbol{\beta }|/\sigma $. See the
supplementary material for all the details.
\end{proof}

\begin{proof}[Proof of Result (b)]
We have that
\begin{align*}
\abs{\pi (\boldsymbol\beta ,\sigma \mid \mathbf{y_{n}}) - \pi (\boldsymbol\beta ,\sigma \mid \mathbf{y_{k}})}=
\pi (\boldsymbol{\beta },\sigma \mid \mathbf{y_{k}})\abs{\frac{m(\mathbf{y_{k}})}{m(\mathbf{y_{n}})}\prod _{i=1}^{n} [(1/\sigma )f((y_{i}-\mathbf{x}_{i}^{T} \boldsymbol\beta )/\sigma )]^{\ell _{i}} - 1}.
\end{align*}
The absolute value on the RHS converges to 0 as $\omega \rightarrow
\infty $ uniformly on $(\boldsymbol{\beta },\sigma )\in [-\vartheta ,
\vartheta ]^{p}\times [1/\eta ,\eta ]$ using
Proposition~\ref{prop-location-scale-transformation} and Result (a), for
any $\vartheta \ge 0$ and $\eta \ge 1$. On this set, $\pi (
\boldsymbol{\beta },\sigma \mid \mathbf{y_{k}})$ is bounded using the
assumptions on the prior and $f$ and the fact that $m(\mathbf{y_{k}})$
is finite. This concludes the proof.
\end{proof}

\begin{proof}[Proof of Result (c)]
Result (c) is a direct consequence of Result (b) using Scheff\'{e}'s
theorem (see \cite{scheffe1947useful}). See the supplementary
material for details.
\end{proof}

\begin{proof}[Proof of Result (d)]
Result (d) is proved through a mix of the strategies used to show Result
(a) and that the moments exist in Proposition
\ref{proposition-proper}. Assuming that we can interchange the limit and
the integral, we have
\begin{align*}
\lim _{\omega \rightarrow \infty }\mathbb{E}[\sigma ^{M}\mid
\mathbf{y_{n}}]
&= \lim _{\omega \rightarrow \infty } \int _{0}^{\infty
}\int _{\re ^{p}} \sigma ^{M} \pi (\boldsymbol{\beta },\sigma \mid
\mathbf{y_{n}})\,d\boldsymbol{\beta }\,d\sigma
\cr
&= \int _{0}^{\infty }\int _{\re ^{p}} \lim _{\omega \rightarrow \infty
} \sigma ^{M} \pi (\boldsymbol{\beta },\sigma \mid \mathbf{y_{n}})\,d
\boldsymbol{\beta }\,d\sigma
\cr
&= \int _{0}^{\infty }\int _{\re ^{p}} \sigma ^{M} \pi (
\boldsymbol{\beta },\sigma \mid \mathbf{y_{k}})\,d\boldsymbol{\beta }
\,d\sigma =\mathbb{E}[\sigma ^{M}\mid \mathbf{y_{k}}],
\end{align*}
using Result (b). Again, we have to prove the integrand is bounded by
an integrable function of $\boldsymbol{\beta }$ and $\sigma $ that does
not depend on $\omega $. To achieve this, we bound above $\sigma ^{M}
\pi (\boldsymbol{\beta },\sigma \mid \mathbf{y_{n}})$ by a constant
times a function similar to the one that is shown to be bounded by an
integrable function of $\boldsymbol{\beta }$ and $\sigma $ in the proof
of Result (a). See the supplementary material for details. We proceed
with the same strategy for $\mathbb{E}[\beta _{j}^{M}\mid
\mathbf{y_{n}}]$.
\end{proof}

\bibliographystyle{imsart-nameyear}
\bibliography{reference}


\section{Acknowledgements}

The authors acknowledge support from NSERC (Natural Sciences and
Engineering Research Council of Canada), FRQNT (Le Fonds de recherche
du Qu\'{e}bec -- Nature et technologies) and SOA (Society of Actuaries).
They also acknowledge enlightening discussions with Professor Judith
Rousseau about consistency of Bayes estimates. They finally thank an
anonymous referee and an associate editor for their helpful comments.

\section{Supplementary Material}\label{sec_supp}

Proposition~\ref{proposition-proper} and Theorem~\ref{thm-main} are proved in detail in Sections~\ref{proof-proposition-proper-supp} and \ref{sec-proof-a}, respectively. In Section~\ref{sec_complement_3_2}, we complete Section~\ref{sec_efficiency} regarding the claims about the divergence and the regularity conditions in \cite{bunke1998asymptotic}. Finally, we provide a result in Section~\ref{sec_other} that was used to verify that all point estimators of $\boldsymbol\beta$ under the normal model correspond to OLS, as mentioned in Section~\ref{sec_case_study}.

\subsection{Proofs}\label{sec-proof-supp}

Recall the assumptions on $f$: $f$ is a strictly positive continuous PDF on $\re$ that is symmetric with respect to the origin, and such that both tails of $|z| f(z)$ are monotonic, which implies that the tails of $f(z)$ are also monotonic. The monotonicity of the tails of $f(z)$ and $|z| f(z)$ implies that there exists a constant $M> 0$ such that
\begin{equation}\label{eqn-monotonic}
|y|\ge |z|\ge M\Rightarrow f(y)\le f(z) \,\text{ and }\, |y| f(y)\le |z| f(z).
\end{equation}
All these assumptions on $f$ imply that $f(z)$ and  $|z| f(z)$ are bounded on the real line, and both converge to 0 as $|z|\rightarrow \infty$. We can therefore define the constant $B>0$ as follows:
\begin{equation*}
B:=\max\left\{\sup_{z\in\re}f(z),\sup_{z\in\re}|z| f(z),\sup_{\boldsymbol\beta\in\re^p,\,\sigma>0}\pi(\boldsymbol\beta,\sigma)/\max(1,1/\sigma)\right\}.
\end{equation*}

\subsubsection{Proof of Proposition~2.1}\label{proof-proposition-proper-supp}

To prove that  $\pi(\boldsymbol\beta,\sigma\mid\mathbf{y_n})$ is proper (the proof for $\pi(\boldsymbol\beta,\sigma\mid\mathbf{y_k})$ is omitted because it is similar), it suffices to show that the marginal $m(\mathbf{y_n})$ is finite. Recall that we require that $n>p+1$. The reader will notice that only $n\geq p+1$ is required if $\pi(\boldsymbol\beta,\sigma)$ is bounded by $B/\sigma$ for all $\sigma >0$ and $\boldsymbol\beta\in\re^p$ (instead of $\pi(\boldsymbol\beta,\sigma)$ is bounded by $B\max(1,1/\sigma)$).

We first show that the function is integrable on the area where the ratio $1/\sigma$ is bounded. More precisely, we consider $\boldsymbol\beta\in\re^p$ and  $\delta M^{-1}\le \sigma<\infty$, where $\delta$ is a positive constant that can be chosen as small as we want (upper bounds are provided in the proof). We next show that the function is integrable on the area where the ratio $1/\sigma$ approaches infinity, that is $0< \sigma<\delta M^{-1}$. We have
\begin{align*}
&\int_{\delta M^{-1}}^{\infty}\int_{\re^p}\pi(\boldsymbol\beta,\sigma)
 \prod_{i=1}^{n}(1/\sigma)f((y_i-\mathbf{x}_i^T \boldsymbol\beta)/\sigma )\,d\boldsymbol\beta\,d\sigma\\
&\quad\za{\le}B^{n-p+1} \int_{\delta M^{-1}}^{\infty}\max\left(1,\frac{1}{\sigma}\right)\frac{1}{\sigma^{n-p }}\int_{\re^p}\prod_{i=1}^p\frac{1}{\sigma}f\left(\frac{y_i-\mathbf{x}_i^T \boldsymbol\beta}{\sigma}\right)\,d\boldsymbol\beta\,d\sigma\\
&\quad\zb{\leq} \max\left(1,\frac{M}{\delta}\right)B^{n-p+1}\abs{\text{det}\left(\begin{array}{c}\mathbf{x}_1^T \cr \vdots \cr \mathbf{x}_p^T\end{array}\right)}^{-1}\int_{\delta M^{-1}}^{\infty}\frac{1}{\sigma^{n-p}}\,d\sigma \prod_{i=1}^p \int_{-\infty}^{\infty}f(u_i)\,du_i \cr
&\quad\propto\int_{\delta M^{-1}}^{\infty}\frac{1}{\sigma^{n-p}} d\sigma \zc{=} (M/\delta)^{n-p-1}/(n-p-1)<\infty.
\end{align*}
In Step $a$, we use $\pi(\boldsymbol\beta,\sigma)\leq B\max(1,1/\sigma)$ and we bound each of $n-p$ densities $f$ by $B$. In Step $b$, we use the change of variables $u_i=(y_i-\mathbf{x}_i^T\boldsymbol\beta)/\sigma$ for $i=1,\ldots,p$. The determinant is non-null because all explanatory variables are continuous. Indeed, consider the case $p=2$ for instance (i.e.\ the simple linear regression); the determinant is different from 0 provided that $x_{12}\neq x_{22}$, and this happens with probability 1. When any type of explanatory variables is considered, we need to be able to select $p$ observations, say those with $\mathbf{x}_{i_1},\ldots,\mathbf{x}_{i_p}$, such that the matrix with rows $\mathbf{x}_{i_1}^T,\ldots,\mathbf{x}_{i_p}^T$ has a non-null determinant. This is possible when the design matrix has full rank, which is specified in Remark~\ref{rmk_proper}. In Step $c$, we use $n>p+1$. Note that if, instead, we bound $\pi(\boldsymbol\beta,\sigma)$ by $B/\sigma$ in Step $a$, one can verify that the condition $n\ge p+1$ is sufficient to bound above the integral.

We now show that the integral is finite on $\boldsymbol\beta\in\re^p$ and $0<\sigma<\delta M^{-1}$. In this area, the ratio $(1/\sigma)$ approaches infinity. We have to carefully analyse the subareas where $y_i-\mathbf{x}_i^T\boldsymbol\beta$ is close to 0 in order to deal with the $0/0$ form of the ratios $(y_i-\mathbf{x}_i^T \boldsymbol\beta)/\sigma$. In order to achieve this, we split the domain of $\boldsymbol\beta$ as follows:
 \begin{align}\label{eqn_R_p}
  \re^p&=\left[\cap_{i_1=1}^n \mathcal{R}_{i_1}^c\right]\cup \left[\cup_{i_1=1}^n\left(\mathcal{R}_{i_1}\cap\left(\cap_{i_2=1 (i_2\neq i_1)}^n \mathcal{R}_{i_2}^c\right)\right)\right] \cr
   &\quad\cup\left[\cup_{i_1,i_2=1 (i_1\neq i_2)}^n  \left(\mathcal{R}_{i_1}\cap \mathcal{R}_{i_2}\cap\left(\cap_{i_3=1 (i_3\neq i_1,i_2)}^n\mathcal{R}_{i_3}^c \right)\right)\right] \cr
  &\quad \cup \cdots \cup \left[\cup_{i_1,i_2,\ldots,i_p=1 (i_{j}\neq i_s\,\forall i_j,i_s\text{ s.t. }j\neq s)}^n  \left(\mathcal{R}_{i_1}\cap \mathcal{R}_{i_2}\cap \ldots\cap \mathcal{R}_{i_p}\right)\right],
 \end{align}
 where $\mathcal{R}_{i}:=\{\boldsymbol\beta: |y_i-\mathbf{x}_i^T \boldsymbol\beta|<\delta \}, i\in\{1,\ldots,n\}$. The set $\mathcal{R}_i$ represents the hyperplanes $y=\mathbf{x}_i^T\boldsymbol\beta$ characterised by the different values of $\boldsymbol\beta$ that satisfy $|y_i-\mathbf{x}_i^T \boldsymbol\beta|<\delta$. In other words, it represents the hyperplanes passing near the point $(\mathbf{x}_i,y_i)$, and more precisely, at a vertical distance of less than $\delta$. The set $\cap_{i_1=1}^n \mathcal{R}_{i_1}^c$ is therefore comprised of the hyperplanes that are not passing close to any point. The set $\cup_{i_1=1}^n(\mathcal{R}_{i_1}\cap(\cap_{i_2=1 (i_2\neq i_1)}^n \mathcal{R}_{i_2}^c))$ represents the hyperplanes passing near one (and only one) point. The set $\cup_{i_1,i_2=1 (i_1\neq i_2)}^n  (\mathcal{R}_{i_1}\cap \mathcal{R}_{i_2}\cap(\cap_{i_3=1 (i_3\neq i_1,i_2)}^n\mathcal{R}_{i_3}^c ))$ represents the hyperplanes passing near two (and only two) points, and so on.

We choose $\delta$ small enough to ensure that $\mathcal{R}_{i_1}\cap \mathcal{R}_{i_2}\cap \ldots\cap \mathcal{R}_{i_p}\cap \mathcal{R}_{i_{p+1}}=\varnothing$ when $i_1,\ldots,i_{p+1}$ are all different. It is possible to do so because an hyperplane passes through no more than $p$ points. This implies that
\begin{align*}
 &\left[\cup_{i_1,i_2,\ldots,i_p=1 (i_{j}\neq i_s\,\forall i_j,i_s\text{ s.t. }j\neq s)}^n  \left(\mathcal{R}_{i_1}\cap \mathcal{R}_{i_2}\cap \ldots\cap \mathcal{R}_{i_p}\right)\right] \cr
 &\qquad=\left[\cup_{i_1,i_2,\ldots,i_p=1 (i_{j}\neq i_s\,\forall i_j,i_s\text{ s.t. }j\neq s)}^n  \left(\mathcal{R}_{i_1}\cap \mathcal{R}_{i_2}\cap \ldots\cap \mathcal{R}_{i_p}\right.\right. \cr
 &\hspace{60mm}\left.\left.\cap\left(\cap_{i_{p+1}=1 (i_{p+1}\neq i_1,i_2,\ldots,i_p)}^n\mathcal{R}_{i_{p+1}}^c \right)\right)\right].
\end{align*}
Note that all sets $\mathcal{R}_{i_1}\cap \mathcal{R}_{i_2}\cap \ldots\cap \mathcal{R}_{i_p}$ are nonempty when $i_1,\ldots,i_{p}$ are all different, because all explanatory variables are continuous (which implies that the $p\times p$ matrix with rows given by $\mathbf{x}_{i_1}^T,\ldots,\mathbf{x}_{i_p}^T$ has a determinant different from 0). As mentioned above, if they are not all continuous, we have to select them in order to have to have a matrix with a non-null determinant. Note also that $\mathcal{R}_{i_1}\cap(\cap_{i_2=1 (i_2\neq i_1)}^n \mathcal{R}_{i_2}^c)$ is nonempty for all $i_1$, $\mathcal{R}_{i_1}\cap \mathcal{R}_{i_2}\cap(\cap_{i_3=1 (i_3\neq i_1,i_2)}^n\mathcal{R}_{i_3}^c )$ is nonempty for all $i_1,i_2$ such that $i_1\neq i_2$, and so on. Finally note that the decomposition of $\re^p$ given in (\ref{eqn_R_p}) is comprised of $\sum_{i=0}^p {{n} \choose {i}}$ mutually exclusive sets given by $\cap_{i_1=1}^n \mathcal{R}_{i_1}^c$, $\mathcal{R}_{i_1}\cap(\cap_{i_2=1 (i_2\neq i_1)}^n \mathcal{R}_{i_2}^c),i_1=1,\ldots,n$, $\mathcal{R}_{i_1}\cap \mathcal{R}_{i_2}\cap(\cap_{i_3=1 (i_3\neq i_1,i_2)}^n\mathcal{R}_{i_3}^c ),i_1,i_2=1,\ldots,n$ with $i_1\neq i_2$, and so on.

We now consider $0<\sigma<\delta M^{-1}$ and $\boldsymbol\beta$ in one of the $\sum_{i=0}^p {{n} \choose {i}}$ mutually exclusive sets given in (\ref{eqn_R_p}). As explained above, the difficulty lies in dealing with the hyperplanes $\boldsymbol\beta$ that are such that $|y_i-\mathbf{x}_i^T \boldsymbol\beta|<\delta$ for some points $(\mathbf{x}_i,y_i)$. The strategy is essentially to use the product of $(1/\sigma)f((y_i-\mathbf{x}_i^T \boldsymbol\beta)/\sigma )$ of these points to integrate over $\boldsymbol\beta$, and to bound the other terms of $m(\mathbf{y_n})$. Therefore, if $\boldsymbol\beta\in\mathcal{R}_{i_1}\cap \mathcal{R}_{i_2}\cap \ldots\cap \mathcal{R}_{i_p}$, we consider the points $(\mathbf{x}_{i_1},y_{i_1}),(\mathbf{x}_{i_2},y_{i_2}), \ldots, (\mathbf{x}_{i_p},y_{i_p})$ to integrate over $\boldsymbol\beta$. If $\boldsymbol\beta\in\mathcal{R}_{i_1}\cap \mathcal{R}_{i_2}\cap \ldots\mathcal{R}_{i_{p-1}}\cap (\cap_{i_p=1 (i_p\neq i_1,\ldots,i_{p-1})}^n\mathcal{R}_{i_p}^c)$, we consider the points $(\mathbf{x}_{i_1},y_{i_1}),(\mathbf{x}_{i_2},y_{i_2}), \ldots, (\mathbf{x}_{i_{p-1}},y_{i_{p-1}})$, and any other point $(\mathbf{x}_{i_{p}},y_{i_{p}})$ (leading to a matrix with a non-null determinant) to integrate over $\boldsymbol\beta$, and so on. We have
 \begin{align*}
  \pi(\boldsymbol\beta,\sigma) \prod_{i=1}^{n}(1/\sigma)f((y_i-\mathbf{x}_i^T \boldsymbol\beta)/\sigma )&\za{\leq}(B/\sigma)\max(\delta M^{-1},1)\prod_{i=1}^{n}(1/\sigma)f((y_i-\mathbf{x}_i^T \boldsymbol\beta)/\sigma ) \cr
  &\hspace{-35mm}\propto (1/\sigma)\prod_{i\in\{i_1,\ldots,i_p\}}(1/\sigma)f((y_{i}-\mathbf{x}_i^T \boldsymbol\beta)/\sigma) \prod_{i\notin\{i_1,\ldots,i_p\}}(1/\sigma)f((y_i-\mathbf{x}_i^T\boldsymbol\beta)/\sigma) \cr
  &\hspace{-35mm}\zb{\leq}(1/\sigma)[(1/\sigma)f\left(\delta/\sigma \right)]^{n-p}\prod_{i\in\{i_1,\ldots,i_p\}}(1/\sigma)f((y_{i}-\mathbf{x}_i^T \boldsymbol\beta)/\sigma)  \cr
  &\hspace{-35mm}\zc{\leq} [B/\delta]^{n-p-1}(1/\sigma^2)f\left(\delta/\sigma \right)\prod_{i\in\{i_1,\ldots,i_p\}}(1/\sigma)f((y_{i}-\mathbf{x}_i^T \boldsymbol\beta)/\sigma) .
 \end{align*}
In Step $a$, we use $\pi(\boldsymbol\beta,\sigma)\le B\max(1,1/\sigma)= (B/\sigma) \max(\sigma,1)\le (B/\sigma) \max(\delta M^{-1},$ $1)$. In Step $b$, for all $i\notin\{i_1,\ldots,i_p\}$ we use $f((y_i-\mathbf{x}_i^T \boldsymbol\beta)/\sigma)\le f(\delta/\sigma)$ by the monotonicity of the tails of $f$ because $|y_i-\mathbf{x}_i^T\boldsymbol\beta|/\sigma\ge \delta/\sigma\ge \delta\delta^{-1} M=M$. In Step $c$, we bound $n-p-1$ terms $(1/\sigma)f(\delta/\sigma)$ by $B/\delta$.

Finally, we bound the integral of $(1/\sigma^2)f\left(\delta/\sigma \right)\prod_{i\in\{i_1,\ldots,i_p\}}(1/\sigma)f((y_{i}-\mathbf{x}_i^T \boldsymbol\beta)/\sigma)$ by
\begin{align*}
&\int_{0}^{\infty}(1/\sigma^2)f\left(\delta/\sigma \right)\int_{\re^p}\prod_{i\in\{i_1,\ldots,i_p\}}(1/\sigma)f((y_{i}-\mathbf{x}_i^T \boldsymbol\beta)/\sigma)\,d\boldsymbol\beta\,d\sigma \cr
&\za{=}\abs{\text{det}\left(\begin{array}{c}\mathbf{x}_{i_1}^T \cr \vdots \cr \mathbf{x}_{i_p}^T\end{array}\right)}^{-1}\int_{0}^{\infty}(1/\sigma^2)f\left(\delta/\sigma \right)\,d\sigma\zb{=}\abs{\text{det}\left(\begin{array}{c}\mathbf{x}_{i_1}^T \cr \vdots \cr \mathbf{x}_{i_p}^T\end{array}\right)}^{-1}\int_{0}^{\infty}f\left(\sigma' \right)\,d\sigma'<\infty.
\end{align*}
In Step $a$, we use the same change of variables as above: $u_j=(y_{i_j}-\mathbf{x}_{i_j}^T\boldsymbol\beta)/\sigma$ for $j=1,\ldots,p$. In Step $b$, we use the change of variable  $\sigma'=\delta/\sigma$.

We now prove that the $M$-moments exist if $n>p+1+M$ (considering the whole data set, the proof for the posterior expectations based on the nonoutliers only is omitted because it is similar). It is easy to prove that $\mathbb{E}[\sigma^M\mid \mathbf{y_n}]<\infty$, from what has been demonstrated above. Indeed,
\begin{align*}
 \mathbb{E}[\sigma^M\mid \mathbf{y_n}] &= [m(\mathbf{y_n})]^{-1}\int_{0}^{\infty}\int_{\re^p} \sigma^M \pi(\boldsymbol\beta,\sigma)
 \prod_{i=1}^{n}(1/\sigma)f((y_i-\mathbf{x}_i^T \boldsymbol\beta)/\sigma )\,d\boldsymbol\beta\,d\sigma \cr
 &\leq B^M [m(\mathbf{y_n})]^{-1} \int_{0}^{\infty}\int_{\re^p}  \pi(\boldsymbol\beta,\sigma)
 \prod_{i=M+1}^{n}(1/\sigma)f((y_i-\mathbf{x}_i^T \boldsymbol\beta)/\sigma )\,d\boldsymbol\beta\,d\sigma,
\end{align*}
where $M$ densities $f$ have been bounded by $B$. We know that $m(\mathbf{y_n})$ is finite because $n>p+1$. We also know that the last integral above is finite because it corresponds to the marginal of $n-M$ observations, which is finite given that $n-M>p+1$.

For the expectations $\mathbb{E}[|\beta_j|^M \mid \mathbf{y_n}]$, we detail the proof for the cases $M=1$ and $M=2$. From these, it will be clear that the result holds in general, with further technicalities. For the proof, we use that $\beta_j$ can be rewritten as $\mathbf{e}_{j}^T \boldsymbol\beta$, where $\mathbf{e}_{j}$ is a vector of size $p$ having 1 at the $j$-th position and 0's elsewhere. This vector can be expressed as a linear combination of $p$ vectors $\mathbf{x}_{i_1}, \ldots, \mathbf{x}_{i_p}$, $i_1,\ldots,i_p\in\{1,\ldots,n\}$, because these vectors are linearly independent and form a basis of $\re^p$ (given that all explanatory variables are continuous). Using the triangle inequality, we have
\begin{align*}
 \mathbb{E}[|\beta_{j}|\mid \mathbf{y_n}]=\mathbb{E}[|\mathbf{e}_{j}^T \boldsymbol\beta|\mid \mathbf{y_n}]&=\mathbb{E}\left[\left|\sum_{s=1}^p a_s \mathbf{x}_{i_s}^T \boldsymbol\beta\right|\mid \mathbf{y_n}\right]\leq \sum_{s=1}^p |a_s| \mathbb{E}\left[\left| \mathbf{x}_{i_s}^T \boldsymbol\beta\right|\mid \mathbf{y_n}\right] \cr
 &\hspace{10mm}\leq \sum_{s=1}^p |a_s|\left(|y_{i_s}|+ \mathbb{E}\left[\left|y_{i_s}- \mathbf{x}_{i_s}^T \boldsymbol\beta\right|\mid \mathbf{y_n}\right]\right),
\end{align*}
where $a_1,\ldots,a_p\in\re$. We can prove that $\mathbb{E}[|y_{i_s}- \mathbf{x}_{i_s}^T \boldsymbol\beta|\mid \mathbf{y_n}]<\infty$ in the same way that we proved that $\mathbb{E}[\sigma^M\mid \mathbf{y_n}]<\infty$, using instead that
\[
 (|y_{i_s}- \mathbf{x}_{i_s}^T \boldsymbol\beta|/\sigma)  f((y_{i_s}- \mathbf{x}_{i_s}^T \boldsymbol\beta)/\sigma ) \leq B.
\]
Therefore, $\mathbb{E}[|\beta_{j}|\mid \mathbf{y_n}]<\infty$ if $n>p+2$. For the second moment, using again the triangle inequality we have
\begin{align*}
 \mathbb{E}\left[\left|\beta_{j}^2\right|\mid \mathbf{y_n}\right]&=\mathbb{E}\left[\left|\left(\sum_{s=1}^p a_s \mathbf{x}_{i_s}^T\boldsymbol\beta\right)^2 \right|\mid \mathbf{y_n}\right] \cr
 &= \mathbb{E}\left[\left|\sum_{s=1}^p (a_s \mathbf{x}_{i_s}^T\boldsymbol\beta)^2 + \sum_{s,t (s\neq t)} a_s \mathbf{x}_{i_s}^T\boldsymbol\beta \, a_t \mathbf{x}_{i_t}^T\boldsymbol\beta   \right|\mid \mathbf{y_n}\right] \cr
 &\leq \mathbb{E}\left[\sum_{s=1}^p (a_s \mathbf{x}_{i_s}^T\boldsymbol\beta)^2 \mid \mathbf{y_n}\right]  + \mathbb{E}\left[\left| \sum_{s,t (s\neq t)} a_s \mathbf{x}_{i_s}^T\boldsymbol\beta \, a_t \mathbf{x}_{i_t}^T\boldsymbol\beta   \right|\mid \mathbf{y_n}\right].
\end{align*}
We analyse the two last terms separately, starting with the second one. Using again the triangle inequality we have,
\begin{align*}
 &\mathbb{E}\left[\left| \sum_{s,t (s\neq t)} a_s \mathbf{x}_{i_s}^T\boldsymbol\beta \, a_t \mathbf{x}_{i_t}^T\boldsymbol\beta   \right|\mid \mathbf{y_n}\right]\leq \sum_{s,t (s\neq t)} \left|a_s a_t\right|\mathbb{E}\left[\left|\mathbf{x}_{i_s}^T\boldsymbol\beta\right| \left| \mathbf{x}_{i_t}^T \boldsymbol\beta\right| \mid \mathbf{y_n}\right] \cr
 &\leq \sum_{s,t (s\neq t)} \left|a_s a_t\right|\left(\mathbb{E}\left[\left|y_{i_s} - \mathbf{x}_{i_s}^T\boldsymbol\beta\right| \left|y_{i_t} - \mathbf{x}_{i_t}^T \boldsymbol\beta\right| \mid \mathbf{y_n}\right]+ |y_{i_s}|\mathbb{E}\left[\left|y_{i_t} - \mathbf{x}_{i_t}^T \boldsymbol\beta\right| \mid \mathbf{y_n}\right]\right. \cr
&\hspace{30mm} \left.+ |y_{i_t}|\mathbb{E}\left[\left|y_{i_s} - \mathbf{x}_{i_s}^T\boldsymbol\beta\right| \mid \mathbf{y_n}\right]+|y_{i_s}y_{i_t}|\right).
\end{align*}
All the terms in the sum are finite if $n>p+3$. Also,
\begin{align*}
 \mathbb{E}\left[\sum_{s=1}^p (a_s \mathbf{x}_{i_s}^T\boldsymbol\beta)^2 \mid \mathbf{y_n}\right]&=\sum_{s=1}^p a_s^2 \mathbb{E}\left[\left|\mathbf{x}_{i_s}^T\boldsymbol\beta \, \mathbf{x}_{i_s}^T\boldsymbol\beta \right| \mid \mathbf{y_n}\right] \cr
 &= \sum_{s=1}^p a_s^2 \mathbb{E}\left[\left|\sum_{t=1}^p b_t \mathbf{x}_{i_t}^T\boldsymbol\beta \, \mathbf{x}_{i_s}^T\boldsymbol\beta \right| \mid \mathbf{y_n}\right] \cr
 & \leq \sum_{s=1}^p a_s^2 \sum_{t=1}^p b_t \mathbb{E}\left[\left|\mathbf{x}_{i_t}^T\boldsymbol\beta \, \mathbf{x}_{i_s}^T\boldsymbol\beta \right| \mid \mathbf{y_n}\right],
\end{align*}
and we proceed as before. In the second equality, we write $\mathbf{x}_{i_s}$ as a linear combination of $\mathbf{x}_{i_1},\ldots,\mathbf{x}_{i_p}$, $i_1,\ldots,i_p\in\{1,\ldots,n\}\setminus\{i_s\}$. To be able to do this, we need to select $p$ linearly independent vectors among the remaining $n-1\geq p$. It is possible given that $n>p+3$ and all explanatory variables are continuous. Therefore, $\mathbb{E}[|\beta_{j}|^2\mid \mathbf{y_n}]<\infty$ if $n>p+3$.

\subsubsection{Proof of Theorem~\ref{thm-main}}\label{sec-proof-a}

Recall that we assume that $\ell\leq n/2-(p-1/2)\Leftrightarrow k\geq n/2+(p-1/2)\Leftrightarrow k\geq \ell+2p-1$. In addition, we will assume that $\ell\geq 1$, i.e.\ that there is at least one outlier, otherwise the proof is trivial. A proposition and a lemma that are used in the proof are first given, and the proofs of Results~(a) to (d) follow. The proofs of this proposition and this lemma can be found in \cite{desgagne2015robustness}.

\begin{proposition}[Dominance]\label{prop-dominance}
If $s\in L_{0}(\infty)$ and $g\in L_{\rho}(\infty)$, then for all $\delta>0$, there exists a constant $A(\delta)> 1$ such that
$z\ge A(\delta)$ implies that
\begin{equation*}
(\log z)^{-\delta}< s(z) < (\log z)^{\delta}\hspace{3mm}\text{ and }\hspace{3mm}(\log z)^{-\rho-\delta}< g(z) < (\log z)^{-\rho+\delta}.
\end{equation*}
\end{proposition}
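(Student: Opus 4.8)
The plan is to transport the statement into the classical Karamata setting by the change of variable $t=\log z$, where it becomes the familiar Potter-type bound for slowly varying functions, and then to transport the resulting estimates back. Concretely, I would set $\tilde s(t):=s(\ee^t)$ and $\tilde g(t):=g(\ee^t)$. Writing $\nu=\ee^c$ and $z=\ee^t$ in Definition~\ref{def-log-regularly}, the defining limits $s(z^\nu)/s(z)\to 1$ and $g(z^\nu)/g(z)\to\nu^{-\rho}$ become $\tilde s(\ee^c t)/\tilde s(t)\to 1$ and $\tilde g(\ee^c t)/\tilde g(t)\to \ee^{-\rho c}$; that is, $\tilde s$ is slowly varying and $\tilde g$ is regularly varying of index $-\rho$ in the ordinary sense, and the convergence is uniform for $c$ in any bounded interval, since Definition~\ref{def-log-regularly} already builds in uniformity over $\nu\in[1/\eta,\eta]$. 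Using the representation recalled after Definition~\ref{def-log-regularly}, namely $g(z)=(\log z)^{-\rho}s_0(z)$ for some $s_0\in L_{0}(\infty)$ and $z$ large, one has $\tilde g(t)=t^{-\rho}\tilde s_0(t)$ with $\tilde s_0(t):=s_0(\ee^t)$. It therefore suffices to prove a single two-sided bound valid for an arbitrary slowly varying function, to be applied both to $s$ and to the factor $s_0$, and then to multiply the latter by $t^{-\rho}$.

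The heart of the argument is to show that a slowly varying $\tilde s$ satisfies $\log \tilde s(t)=o(\log t)$, equivalently $t^{-\delta}<\tilde s(t)<t^\delta$ for all large $t$. For this I would perform a second logarithmic substitution $u=\log t$ and set $H(u):=\log\tilde s(\ee^u)$. The uniform slow variation of $\tilde s$ translates into $H(u+c)-H(u)\to 0$ as $u\to\infty$, uniformly for $c$ in a bounded interval (taking $\eta=\ee$ covers increments of length $1$). A telescoping argument then upgrades this to $H(u)/u\to 0$: fixing $\epsilon>0$, choose $N$ so that $|H(j+1)-H(j)|<\epsilon$ for every integer $j\ge N$; summing the increments gives $|H(n)-H(0)|\le C_N+(n-N)\epsilon$ with $C_N$ a fixed constant, whence $\limsup_n |H(n)|/n\le\epsilon$, and the fractional part $u-\lfloor u\rfloor\in[0,1)$ is controlled by the same uniform estimate over $c\in[0,1]$. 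Since $\epsilon>0$ is arbitrary, $H(u)/u\to 0$, and undoing the substitution yields $t^{-\delta}<\tilde s(t)<t^\delta$ for $t\ge T(\delta)$.

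Finally I would reverse both changes of variable. Setting $t=\log z$ and $A(\delta)=\max(A,\ee^{T(\delta)})$, where $A$ is the threshold appearing in the representation of $g$, the bound for $\tilde s$ gives $(\log z)^{-\delta}<s(z)<(\log z)^{\delta}$ for $z\ge A(\delta)$, and applying the same bound to $s_0$ and multiplying by $(\log z)^{-\rho}$ delivers $(\log z)^{-\rho-\delta}<g(z)<(\log z)^{-\rho+\delta}$. Strictness of all inequalities is arranged by enlarging $A(\delta)$ slightly, or equivalently by running the previous step with $\delta/2$ in place of $\delta$. The main obstacle is the telescoping step: although uniform convergence of the increments is handed to us by Definition~\ref{def-log-regularly}, sparing us the Uniform Convergence Theorem one would otherwise need, one must carefully handle the passage from integer to real arguments and ensure that the accumulated constant $C_N$ does not contaminate the $o(u)$ conclusion. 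Everything else is routine bookkeeping around the two exponential substitutions.
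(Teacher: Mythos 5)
Your proposal is correct, but there is no in-paper proof to compare it against: the paper explicitly defers the proofs of Proposition~\ref{prop-dominance} and of Lemma~\ref{cor-location-scale-transformation} to \cite{desgagne2015robustness}. Your argument is therefore a self-contained alternative, and it is sound. The substitution $\tilde s(t)=s(\ee^t)$ correctly converts log-regular variation into classical Karamata regular variation, with the uniformity over $\nu\in[1/\eta,\eta]$ built into Definition~\ref{def-log-regularly} carrying over to uniform convergence of $\tilde s(\lambda t)/\tilde s(t)\rightarrow 1$ on compact sets of $\lambda$; the second substitution $H(u)=\log\tilde s(\ee^u)$ and the telescoping step then give $H(u)=o(u)$, which is precisely the two-sided bound $(\log z)^{-\delta}<s(z)<(\log z)^{\delta}$, and the bound for $g$ follows by multiplying the analogous bound for $s_0$ by $(\log z)^{-\rho}$. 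What your route buys is elementarity: because uniform convergence is part of the hypothesis here, your telescoping argument needs neither Karamata's Uniform Convergence Theorem nor the representation theorem for slowly varying functions, which is what a classical treatment (and, in essence, the proof in the cited reference) would invoke. One refinement worth making: you can also dispense with the quoted representation $g(z)=(\log z)^{-\rho}s_0(z)$ (which the paper itself only justifies by citation) by defining $\hat g(z):=g(z)(\log z)^{\rho}$ and checking directly that $\hat g(z^\nu)/\hat g(z)=\nu^{\rho}\,g(z^\nu)/g(z)\rightarrow 1$ uniformly, so that $\hat g\in L_{0}(\infty)$ and your bound for log-slowly varying functions applies to it verbatim; this makes the whole proposition depend only on Definition~\ref{def-log-regularly}. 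The remaining technical points --- finiteness of $H$ at the anchor point (guaranteed by positivity of $s$ there), covering the fractional part of $u$ with the uniform estimate over $c\in[0,1]$, and recovering strict inequalities by running the argument at $\delta/2$ --- are exactly the ones you flag, and all are benign.
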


\begin{lemma}\label{cor-location-scale-transformation}
For all $\lambda\ge 0$, $\forall\tau\ge 1$, there exists a constant $D(\lambda,\tau)\ge 1$  such that $z\in\re$  and
$(\mu,\sigma)\in [-\lambda,\lambda]\times[1/\tau, \tau]$ implies that
\begin{equation*}
1/D(\lambda,\tau)\le (1/\sigma) f((z-\mu)/\sigma)/f(z)\le D(\lambda,\tau).
\end{equation*}
\end{lemma}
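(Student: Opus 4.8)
The plan is to split the real line into a compact core $\{|z|\le A\}$ and the two tails $\{|z|\ge A\}$, control the ratio by a constant of order one in the tails using the location--scale invariance already established, and control it on the core by a direct compactness argument exploiting the continuity and strict positivity of $f$. In this sense the lemma is merely the upgrade of the \emph{pointwise} (in $z$) limit of Proposition~\ref{prop-location-scale-transformation} into a uniform two-sided bound valid for every $z\in\re$.

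First I would fix $\lambda\ge 0$ and $\tau\ge 1$ and set $\vartheta=\lambda$, $\eta=\tau$ in Proposition~\ref{prop-location-scale-transformation}. The uniform convergence
\[
(1/\sigma) f((z-\mu)/\sigma)/f(z)\rightarrow 1 \text{ as } z\rightarrow\infty
\]
on $(\mu,\sigma)\in[-\lambda,\lambda]\times[1/\tau,\tau]$ supplies a threshold $A=A(\lambda,\tau)>0$ such that for every $z\ge A$ and every $(\mu,\sigma)$ in this box the ratio lies in $(1/2,3/2)$. For the left tail I would invoke the symmetry of $f$: writing $z=-z'$ with $z'\ge A$ and using $f(-w)=f(w)$ gives $(1/\sigma) f((z-\mu)/\sigma)/f(z)=(1/\sigma) f((z'-(-\mu))/\sigma)/f(z')$, and since $-\mu\in[-\lambda,\lambda]$ the same Proposition forces this quantity into $(1/2,3/2)$ once $z'\ge A$ (enlarging $A$ if necessary so a single threshold serves both tails). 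Hence on $\{|z|\ge A\}$ the ratio is sandwiched between $1/2$ and $3/2$, uniformly in $(\mu,\sigma)$.

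Next I would treat the core. On $|z|\le A$ the argument $(z-\mu)/\sigma$ of the numerator ranges over the compact interval $[-(A+\lambda)\tau,(A+\lambda)\tau]$, while $1/\sigma\in[1/\tau,\tau]$ and $z$ itself ranges over $[-A,A]$. Since $f$ is continuous and strictly positive on $\re$, it attains a finite maximum and a positive minimum on any compact set; consequently both the numerator $(1/\sigma) f((z-\mu)/\sigma)$ and the denominator $f(z)$ are bounded above and bounded away from $0$ on the compact set $(z,\mu,\sigma)\in[-A,A]\times[-\lambda,\lambda]\times[1/\tau,\tau]$. Thus on the core the ratio satisfies $L\le (1/\sigma) f((z-\mu)/\sigma)/f(z)\le U$ for some constants $U<\infty$ and $L>0$ depending only on $\lambda,\tau$ (through $A$, which itself depends only on $\lambda,\tau$). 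Setting $D(\lambda,\tau)=\max\{2,\,U,\,1/L\}$ then delivers $1/D(\lambda,\tau)\le (1/\sigma) f((z-\mu)/\sigma)/f(z)\le D(\lambda,\tau)$ for all $z\in\re$ and all $(\mu,\sigma)\in[-\lambda,\lambda]\times[1/\tau,\tau]$.

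The only substantive ingredient is the tail estimate, and that is furnished \emph{verbatim} by Proposition~\ref{prop-location-scale-transformation}, so the genuine analytic work has been done upstream; the core estimate is a routine application of the extreme value theorem. The sole point demanding a little care is the left tail, since the invariance Proposition is stated only for $z\rightarrow+\infty$: I would reduce it to the right tail through the evenness of $f$ together with the symmetry $\mu\mapsto-\mu$ of the box, and then select a single threshold $A$ valid on both sides so that $D$ genuinely does not depend on $z$.
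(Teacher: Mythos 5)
Your proof is correct and takes essentially the same route the paper intends: the paper gives no self-contained proof, merely noting that the lemma is a corollary of Proposition~\ref{prop-location-scale-transformation} (with details deferred to \cite{desgagne2015robustness}), and your argument---the uniform convergence of that proposition for the right tail, the symmetry of $f$ together with $\mu\mapsto-\mu$ for the left tail, and continuity plus strict positivity of $f$ on a compact core---is precisely the standard way to realise that corollary. The bookkeeping with $D(\lambda,\tau)=\max\{2,U,1/L\}$ is sound, so there is no gap.
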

  Note that Lemma~\ref{cor-location-scale-transformation} is a corollary of Proposition~\ref{prop-location-scale-transformation}.

\begin{proof}[Proof of Result~(a)]

We first observe that
\begin{align*}
   \frac{m(\mathbf{y_n})}{m(\mathbf{y_k})\prod_{i=1}^{n}[f(y_i)]^{\ell_i}}&= \frac{m(\mathbf{y_n})}{m(\mathbf{y_k})\prod_{i=1}^{n}[f(y_i)]^{\ell_i}}
      \int_{\re^p}\int_{0}^{\infty}\pi(\boldsymbol\beta,\sigma\mid \mathbf{y_n})\,d\sigma\,d\boldsymbol\beta\\
   &=   \int_{\re^p}\int_{0}^{\infty}\frac{\pi(\boldsymbol\beta,\sigma)\prod_{i=1}^{n}
        \left[(1/\sigma)f((y_i-\mathbf{x}_i^T\boldsymbol\beta)/\sigma)\right]^{k_i+\ell_i}}{m(\mathbf{y_k})\prod_{i=1}^{n}[f(y_i)]^{\ell_i}}\,d\sigma\,d\boldsymbol\beta\\
   &=   \int_{\re^p}\int_{0}^{\infty}
        \pi(\boldsymbol\beta,\sigma\mid \mathbf{y_k})  \prod_{i=1}^{n}\left[\frac{(1/\sigma)f((y_i-\mathbf{x}_i^T\boldsymbol\beta)/\sigma)}{f(y_i)}\right]^{\ell_i}\,d\sigma\,d\boldsymbol\beta.
\end{align*}

We show that the last integral converges towards 1 as $\omega\rightarrow\infty$ to prove Result~(a).
If we use Lebesgue's dominated convergence theorem to interchange the limit $\omega\rightarrow\infty$ and the integral, we have
\begin{align*}
&\lim_{\omega\rightarrow\infty}\int_{\re^p}\int_{0}^{\infty}\pi(\boldsymbol\beta,\sigma\mid \mathbf{y_k})
 \prod_{i=1}^{n}\left[\frac{(1/\sigma)f((y_i-\mathbf{x}_i^T\boldsymbol\beta)/\sigma)}{f(y_i)}\right]^{\ell_i}\,d\sigma\,d\boldsymbol\beta\\
 &\qquad = \int_{\re^p}\int_{0}^{\infty}\lim_{\omega\rightarrow\infty}
  \pi(\boldsymbol\beta,\sigma\mid \mathbf{y_k})\prod_{i=1}^{n}\left[\frac{(1/\sigma)f((y_i-\mathbf{x}_i^T\boldsymbol\beta)/\sigma)}{f(y_i)}\right]^{\ell_i}
  \,d\sigma\,d\boldsymbol\beta \\
 &\qquad = \int_{\re^p}\int_{0}^{\infty}
  \pi(\boldsymbol\beta,\sigma\mid \mathbf{y_k})
  \,d\sigma\,d\boldsymbol\beta = 1,
\end{align*}
using Proposition~\ref{prop-location-scale-transformation} in the second equality, since $\mathbf{x}_1,\ldots,\mathbf{x}_n$ are fixed, and then Proposition~\ref{proposition-proper}. Note that the conditions of Proposition~\ref{proposition-proper} are satisfied because $k\geq \ell+2p-1\Rightarrow k\geq p+2$ (because $\ell\geq 1$ and $p\geq 2$). When any type of explanatory variables is considered, we select $p$ observations, say those with $\mathbf{x}_{i_1},\ldots,\mathbf{x}_{i_p}$, such that the matrix with rows $\mathbf{x}_{i_1}^T,\ldots,\mathbf{x}_{i_p}^T$ has a non-null determinant. This is possible given the assumption mentioned in Remark~\ref{rmk_thm}. Note also that pointwise convergence is sufficient, for any value of $\boldsymbol\beta\in\re^p$ and $\sigma>0$, once the limit is inside the integral. However, in order to use Lebesgue's dominated convergence theorem, we need to  prove that the integrand is bounded by an integrable function of $\boldsymbol\beta$ and $\sigma$ that does not depend on $\omega$, for any value of $\omega\ge \yo$, where $\yo$ is a constant. The constant $\yo$ can be chosen as large as we want, and minimum values for $\yo$ will be given throughout the proof. In order to bound the integrand, we divide the domain of integration into two areas: $1\leq\sigma<\infty$ and $0<\sigma<1$. Again, we want to separately analyse the area where the ratio $1/\sigma$ approaches infinity.

We assumed that $y_i$ can be written as $y_i=a_i+b_i \omega$, where $\omega\rightarrow\infty$, and $a_i$ and $b_i$ are constants such that $a_i\in\re$ and $b_i\ne 0$ if $\ell_i=1$ (if the observation is an outlier). Therefore, the ranking of the elements in the set $\{|y_i| : \ell_i=1\}$ is primarily determined by the values $|b_1|,\ldots,|b_n|$, and we can choose the constant $\yo$ larger than a certain threshold to ensure that this ranking remains unchanged for all $\omega\ge \yo$. Without loss of generality, we assume for convenience that
\begin{equation*}
\omega=\min_{\{i:\,\ell_i=1\}}|y_i|\hspace{5mm}\text{ and consequently }\hspace{5mm} \min_{\{i:\,\ell_i=1\}} |b_i|=1.
\end{equation*}
We now bound above the integrand on the first area.

\textbf{Area~1:} Consider $1\le\sigma<\infty$ and assume without loss of generality that $y_1,\ldots,$ $y_{\ell+2p-1}$ are $\ell+2p-1$ nonoutliers (therefore $k_1=\ldots=k_{\ell+2p-1}=1$). We have
\begin{align*}
&\pi(\boldsymbol\beta,\sigma\mid \mathbf{y_k})\prod_{i=1}^{n}\left[\frac{(1/\sigma)f\left((y_i-\mathbf{x}_i^T\boldsymbol\beta)/\sigma\right)}{f(y_i)}\right]^{\ell_i}
\propto\frac{\pi(\boldsymbol\beta,\sigma)}{\sigma^{n}}\prod_{i=1}^{n}\frac{f((y_i-\mathbf{x}_i^T\boldsymbol\beta)/\sigma)}{\left[f(y_i)\right]^{\ell_i}}\\
&\za{\le}\frac{B}{\sigma^{n}}\prod_{i=1}^{n} \frac{D(|a_i|,1)f((b_i \omega-\mathbf{x}_i^T\boldsymbol\beta)/\sigma)}{\left[f(y_i)\right]^{\ell_i}}\\
&\zb{\le}\frac{1}{[f(\omega)]^{\ell}}\frac{B}{\sigma^{n}}\prod_{i=1}^{n} D(|a_i|,1)f((b_i \omega-\mathbf{x}_i^T\boldsymbol\beta)/\sigma)\left[|b_i|D(|a_i|,|b_i|)\right]^{\ell_i}\\
&\propto\frac{1}{[f(\omega)]^{\ell}}\frac{1}{\sigma^{n}}\prod_{i=1}^{n} f((b_i \omega-\mathbf{x}_i^T\boldsymbol\beta)/\sigma)\\
&\zc{=}\frac{1}{[f(\omega)]^{\ell}}\frac{1}{\sigma^{n}}
 \prod_{i=1}^{n}\left[f(\mathbf{x}_i^T\boldsymbol\beta/\sigma)\right]^{k_i}\left[f((b_i \omega-\mathbf{x}_i^T\boldsymbol\beta)/\sigma)\right]^{\ell_i}\\
 &\zd{=} \frac{\prod_{i=1}^p(1/\sigma)f(\mathbf{x}_i^T\boldsymbol\beta/\sigma)}{\sigma^{k-p-1/2}}\left[\frac{\omega/\sigma}{\omega f(\omega)}\right]^{\ell}\frac{1}{\sigma^{1/2}}\prod_{i=p+1}^{n}\left[f(\mathbf{x}_i^T\boldsymbol\beta/\sigma)\right]^{k_i}\left[f((b_i \omega-\mathbf{x}_i^T\boldsymbol\beta)/\sigma)\right]^{\ell_i}.
\end{align*}
In Step $a$, we use $y_i=a_i+b_i \omega$ and Lemma~\ref{cor-location-scale-transformation} to obtain
\begin{equation*}
f((y_i-\mathbf{x}_i^T\boldsymbol\beta)/\sigma )=f((b_i \omega-\mathbf{x}_i^T\boldsymbol\beta)/\sigma+a_i/\sigma)\le D(|a_i|,1)f((b_i \omega-\mathbf{x}_i^T\boldsymbol\beta)/\sigma),
\end{equation*}
because $|a_i/\sigma|\le |a_i|$ for all $i$. We also use $\pi(\boldsymbol\beta,\sigma)\le B \max(1,1/\sigma)= B$.
In Step $b$, we use again Lemma~\ref{cor-location-scale-transformation} to obtain $f(\omega)/f(y_i)=f((y_i-a_i)/b_i)/f(y_i)\le |b_i|D(|a_i|,|b_i|)$.
In Step $c$, we set $b_i=0$ if $k_i=1$ and we use the symmetry of $f$ to obtain $f(-\mathbf{x}_i^T\boldsymbol\beta/\sigma)=f(\mathbf{x}_i^T\boldsymbol\beta/\sigma)$.
In Step $d$, we use the assumption $k_1=\ldots=k_p=1$.

Now it suffices to demonstrate that
\begin{equation}\label{fct1_area1}
\left[\frac{\omega/\sigma}{\omega f(\omega)}\right]^{\ell}\frac{1}{\sigma^{1/2}}\prod_{i=p+1}^{n}\left[f(\mathbf{x}_i^T\boldsymbol\beta/\sigma)\right]^{k_i}\left[f((b_i \omega-\mathbf{x}_i^T\boldsymbol\beta)/\sigma)\right]^{\ell_i}
  \end{equation}
is bounded by a constant that does not depend on $\omega,\boldsymbol\beta$ and $\sigma$ since  $\prod_{i=1}^p(1/\sigma)f(\mathbf{x}_i^T\boldsymbol\beta/\sigma)$ \newline
$\times(1/\sigma)^{k-p-1/2}$ is an integrable function on area~1. Indeed, since $k>p+1$,  we have
\begin{align*}
&\int_{1}^{\infty}(1/\sigma)^{k-p-1/2}\int_{\re^p}\prod_{i=1}^p(1/\sigma) f(\mathbf{x}_i^T\boldsymbol\beta/\sigma) \,d\boldsymbol\beta\,d\sigma \cr
&\qquad=\abs{\text{det}\left(\begin{array}{c}\mathbf{x}_1^T \cr \vdots \cr \mathbf{x}_p^T\end{array}\right)}^{-1}  \int_{1}^{\infty}\frac{1}{\sigma^{k-p-1/2}}\,d\sigma<\infty,
\end{align*}
using the following change of variables: $u_i=\mathbf{x}_i^T\boldsymbol\beta/\sigma$, $i=1,\ldots,p$. The determinant is different from 0 because all the explanatory variables are continuous. When any type of explanatory variables is considered, we assume without loss of generality that $\mathbf{x}_{1},\ldots,\mathbf{x}_{p}$ are additionally linearly independent. Note that if instead, in Step $a$ above, we bound $\pi(\boldsymbol\beta,\sigma)$ by $B/\sigma$, one can verify that the condition $k\ge p+1$ is sufficient to bound above the integral.

In order to bound the function in (\ref{fct1_area1}), we split Area~1 into three parts: $1\leq \sigma<\omega^{1/2}$, $\omega^{1/2}\leq \sigma< \omega/(\gamma M)$ and $\omega/(\gamma M)\leq \sigma<\infty$, where $M$ is defined in (\ref{eqn-monotonic}) and $\gamma$ is a positive constant that can be chosen as large as we want (lower bounds are provided in the proof). Note that this split is well defined if $\yo> \max(1,(\gamma M)^2)$ because $\omega\ge \yo$.

First, consider $\omega/(\gamma M)\le\sigma<\infty$. We have
\begin{align*}
&\left[\frac{\omega/\sigma}{\omega f(\omega)}\right]^{\ell}\frac{1}{\sigma^{1/2}}\prod_{i=p+1}^{n}\left[f(\mathbf{x}_i^T\boldsymbol\beta/\sigma)\right]^{k_i} \left[f((b_i \omega-\mathbf{x}_i^T\boldsymbol\beta)/\sigma)\right]^{\ell_i}
\za{\le} \frac{B^{n-p}}{\sigma^{1/2}}\left[\frac{\omega/\sigma}{\omega f(\omega)}\right]^{\ell}\cr
&\hspace{10mm}\zb{\le}B^{n-p}(\gamma M)^{\ell+1/2}\frac{(1/\omega)^{1/2}}{[\omega f(\omega)]^{\ell}}
\zc{\le}B^{n-p}(\gamma M)^{\ell+1/2}\frac{(1/\omega)^{1/2}}{{(\log \omega)^{-(\rho+1)\ell}}}\\
&\hspace{10mm}\zd{\le}B^{n-p}(\gamma M)^{\ell+1/2}[2(\rho+1)\ell/\ee]^{(\rho+1)\ell}<\infty.
\end{align*}
In Step $a$, we use $f\le B$.
In Step $b$, we use $\omega/\sigma\le \gamma M$ and $1/\sigma\le \gamma M/\omega$.
In Step $c$, we use $\omega f(\omega)>(\log \omega)^{-\rho-1}$ if $\omega\ge \yo\ge A(1)$, where $A(1)$ comes from
    Proposition~\ref{prop-dominance}.
For Step $d$, it is purely algebraic to show that the maximum of $(\log \omega)^{\xi}/\omega^{1/2}$ is $(2\xi/\ee)^{\xi}$
      for $\omega>1$ and $\xi>0$, where $\xi=(\rho+1)\ell$ in our situation.

Now, consider the two other parts combined (we will split them in the next step), that is $1\le\sigma\le \omega/(\gamma M)$. We have
\begin{align*}
&\left[\frac{\omega/\sigma}{\omega f(\omega)}\right]^{\ell}\frac{1}{\sigma^{1/2}}\prod_{i=p+1}^{n}\left[f(\mathbf{x}_i^T\boldsymbol\beta/\sigma)\right]^{k_i}\left[f((b_i \omega-\mathbf{x}_i^T\boldsymbol\beta)/\sigma)\right]^{\ell_i}\\
&\qquad=\frac{1}{\sigma^{1/2}}\left[\frac{(\omega/\sigma)f(\omega/\sigma)}{\omega f(\omega)}\right]^{\ell}
  \prod_{i=p+1}^{n}\left[f(\mathbf{x}_i^T\boldsymbol\beta/\sigma)\right]^{k_i}\left[\frac{f((b_i \omega-\mathbf{x}_i^T\boldsymbol\beta)/\sigma)}{f(\omega/\sigma)}\right]^{\ell_i} \cr
&\qquad\za{\le} \frac{1}{\sigma^{1/2}}
  \left[\frac{(\omega/\sigma)f(\omega/\sigma)}{\omega f(\omega)}\right]^{\ell}B^{k-p}[D(0,\gamma)\gamma]^{\ell}.
\end{align*}
In Step $a$, we use
 \begin{align}\label{inequality1}
 &\prod_{i=p+1}^{n}\left[f(\mathbf{x}_i^T\boldsymbol\beta/\sigma)\right]^{k_i} \left[\frac{f((b_i \omega-\mathbf{x}_i^T\boldsymbol\beta)/\sigma)}{f(\omega/\sigma)}\right]^{\ell_i}\leq B^{k-p}[D(0,\gamma)\gamma]^{\ell}.
\end{align}
The proof of this inequality is substantial. Therefore, to ease the reading, it is deferred after the demonstration that the remaining term, i.e.\
$$
  \frac{1}{\sigma^{1/2}}
  \left[\frac{(\omega/\sigma)f(\omega/\sigma)}{\omega f(\omega)}\right]^{\ell},
$$
is bounded.

We first consider $\omega^{1/2}\le\sigma\le \omega/(\gamma M)$. We have
\begin{align*}
\frac{1}{\sigma^{1/2}}
  \left[\frac{(\omega/\sigma)f(\omega/\sigma)}{\omega f(\omega)}\right]^{\ell} \za{\le} B^{\ell}
 \frac{(1/\omega)^{1/4}}{[\omega f(\omega)]^{\ell}}&\zb{\le} B^{\ell}
 \frac{(1/\omega)^{1/4}}{(\log \omega)^{-(\rho+1)\ell}} \cr
 &\zc{\le} B^{\ell}[4(\rho+1)\ell/\ee]^{(\rho+1)\ell}<\infty.
\end{align*}
In Step $a$, we use $(\omega/\sigma)f(\omega/\sigma)\le B$ and $(1/\sigma)^{1/2}\le (1/\omega)^{1/4}$.
In Step $b$, we use $\omega f(\omega)>(\log \omega)^{-\rho-1}$ if $\omega\ge \yo\ge A(1)$, where $A(1)$ comes from
    Proposition~\ref{prop-dominance}.
In Step $c$, it is purely algebraic to show that the maximum of $(\log \omega)^{\xi}/\omega^{1/4}$ is $(4\xi/\ee)^{\xi}$
      for $\omega>1$ and $\xi>0$, where $\xi=(\rho+1)\ell$ in our situation.

We now consider $1\le\sigma\le \omega^{1/2}$. We have
\begin{equation*}
\frac{1}{\sigma^{1/2}}
  \left[\frac{(\omega/\sigma)f(\omega/\sigma)}{\omega f(\omega)}\right]^{\ell}\za{\le}
  \left[\frac{\omega^{1/2}f(\omega^{1/2})}{\omega f(\omega)}\right]^{\ell}\zb{\le} 2^{(\rho+1)\ell}<\infty.
\end{equation*}
In Step $a$, we use $1/\sigma\le 1$ and $(\omega/\sigma)f(\omega/\sigma)\le \omega^{1/2} f(\omega^{1/2})$ by the monotonicity of the tails of
$|z| f(z)$ since $\omega/\sigma\ge \omega^{1/2}\ge \yo^{1/2}\ge M$ if $\yo\ge M^2$. In Step $b$, we use $\omega^{1/2} f(\omega^{1/2})/(\omega
f(\omega))$ $\le 2(1/2)^{-\rho}=2^{\rho+1}$
if $\omega\ge \yo\ge A$, where $A$ is a positive constant, see the definition of log-regularly varying functions (Definition~\ref{def-log-regularly}).

Finally, we prove the inequality in (\ref{inequality1}). Recall that we assumed without loss of generality that the first $\ell+2p-1$ observations are nonoutliers (therefore $k_1=\ldots=k_{\ell+2p-1}=1$). We know that $\mathbf{x}_1,\ldots,\mathbf{x}_p$ have been used earlier to integrate over $\boldsymbol\beta$ and $\sigma$. We also know that there are at least $p$ remaining nonoutliers among observations $1$ to $\ell+2p-1$ because $\ell+2p-1 -p=\ell+p-1\geq p$ (because we assume that $\ell\geq1$).

In order to prove the result, we split the domain of $\boldsymbol\beta$ as follows:
\begin{align}\label{eqn_domain_beta}
 &\re^p=\left[\cap_i \mathcal{O}_i^c \right]\cup \left[\cup_i \left( \mathcal{O}_i\cap\left(\cap_{i_1} \mathcal{F}_{i_1}^c\right)\right)\right]\cup \left[\cup_{i,i_1}\left(\mathcal{O}_i\cap \mathcal{F}_{i_1}\cap\left(\cap_{i_2\neq i_1}\mathcal{F}_{i_2}^c\right)\right)\right] \cr
 &\cup\cdots\cup \left[\cup_{i,i_1,\ldots,i_{p-1} (i_j\neq i_s \, \forall i_j,i_s \text{ s.t. } j\neq s)}\left(\mathcal{O}_i\cap \mathcal{F}_{i_1}\cap \cdots\cap\mathcal{F}_{i_{p-1}}\cap \left(\cap_{i_p\neq i_1,\ldots,i_{p-1}}\mathcal{F}_{i_p}^c\right)\right)\right] \cr
 &\qquad\qquad\qquad \cup \left[\cup_{i,i_1,\ldots,i_{p} (i_j\neq i_s \, \forall i_j,i_s \text{ s.t. } j\neq s)}\left(\mathcal{O}_i\cap \mathcal{F}_{i_1}\cap \cdots\cap\mathcal{F}_{i_{p}}\right)\right],
\end{align}
where
\begin{align}\label{def_O_i}
\mathcal{O}_i&:=\{\boldsymbol\beta:|b_i\omega-\mathbf{x}_i^T\boldsymbol\beta|< \omega/2\},\forall i\in \mathcal{I}_\mathcal{O},
\end{align}
\begin{align}\label{def_F_i}
\mathcal{F}_i&:=\{\boldsymbol\beta:|\mathbf{x}_i^T\boldsymbol\beta|<\omega/\gamma\},\forall i\in\mathcal{I}_\mathcal{F},
\end{align}
 $\mathcal{I}_\mathcal{O}:=\{i:i\in\{\ell+2p-1,\ldots,n\} \text{ and } \ell_i=1\}$ and $\mathcal{I}_\mathcal{F}:=\{p+1,\ldots,\ell+2p-1\}$ are the sets of indexes of outliers and remaining fixed observations (nonoutliers) among observations $1$ to $\ell+2p-1$, respectively.

 The set $\mathcal{O}_{i}$ represents the hyperplanes $y= \mathbf{x}_i^T\boldsymbol\beta$ characterised by the different values of $\boldsymbol\beta$ that satisfy $|b_i\omega-\mathbf{x}_i^T\boldsymbol\beta|< \omega/2$. In other words, it represents the hyperplanes that pass at a vertical distance of less than $\omega/2$ of the point $(\mathbf{x}_i,b_i\omega)$, which is considered as an outlier since $\omega\rightarrow\infty$ (recall that $b_i\omega=y_i-a_i$). Analogously, the set $\mathcal{F}_{i}$ represents the hyperplanes that pass at a vertical distance of less than $\omega/\gamma$ of the point $(\mathbf{x}_i,0)$, which is considered as a nonoutlier. Therefore, the set $\cap_i \mathcal{O}_i^c$ represents the hyperplanes that pass at a vertical distance of at least $\omega/2$ of all the points $(\mathbf{x}_i,b_i\omega)$ (all the outliers). The set $\cup_i (\mathcal{O}_i\cap(\cap_{i_1} \mathcal{F}_{i_1}^c))$ represents the hyperplanes that pass at a vertical distance of less than $\omega/2$ of at least one point $(\mathbf{x}_i,b_i\omega)$ (an outlier), but at a vertical distance of at least $\omega/\gamma$ of all the points $(\mathbf{x}_i,0)$ (all the nonoutliers). For each $i_1\in\mathcal{I}_\mathcal{F}$, the set $\cup_i(\mathcal{O}_i\cap\mathcal{F}_{i_1}\cap(\cap_{i_2\neq i_1} \mathcal{F}_{i_2}^c))$ represents the hyperplanes that pass at a vertical distance of less than $\omega/2$ of at least one point $(\mathbf{x}_i,b_i\omega)$ (an outlier), at a vertical distance of less than $\omega/\gamma$ of the point $(\mathbf{x}_{i_1},0)$ (a nonoutlier), but at a vertical distance of at least $\omega/\gamma$ of all the other nonoutliers, and so on.

Now, we claim that $\mathcal{O}_i\cap \mathcal{F}_{i_1}\cap \cdots\cap\mathcal{F}_{i_{p}}=\varnothing$ for all $i,i_1,\ldots,i_{p}$ with $i_j\neq i_s,\forall i_j,i_s$ such that $j\neq s$, meaning that there is no hyperplane that passes at a vertical distance of less than $\omega/2$ of the point $(\mathbf{x}_i,b_i\omega)$ (an outlier) and at the same at a vertical distance of less than $\omega/\gamma$ of $p$ points $(\mathbf{x}_{i_j},0)$ (nonoutliers). To prove this, we use the fact that $\mathbf{x}_i$ (a vector of size $p$) can be expressed as a linear combination of $\mathbf{x}_{i_1},\ldots,\mathbf{x}_{i_{p}}$. This is true because all explanatory variables are continuous, and therefore, linearly independent with probability 1. When any type of explanatory variables is considered, we select observations $1$ to $\ell+2p-1$ to be such that any $p$ vectors $\mathbf{x}_{i_1},\ldots,\mathbf{x}_{i_p}$, with $\{i_1,\ldots,i_p\}\subset\{1,\ldots, \ell+2p-1\}$, are linearly independent. This is possible given the assumption mentioned in Remark~2.2. As a result, considering that $\boldsymbol\beta\in \mathcal{F}_{i_1}\cap \cdots\cap\mathcal{F}_{i_{p}}$ and $\mathbf{x}_i=\sum_{s=1}^{p} a_s \mathbf{x}_{i_s}$ for some $a_1,\ldots,a_{p}\in\re$, we have
\begin{align*}
|b_i\omega-\mathbf{x}_i^T\boldsymbol\beta|= \left|b_i\omega-\left(\sum_{s=1}^{p}a_s\mathbf{x}_{i_s}\right)^T\boldsymbol\beta\right|\za{\geq} |b_i\omega|-\left|\sum_{s=1}^{p}a_s\mathbf{x}_{i_s}^T\boldsymbol\beta\right| &\zb{\geq}\omega-\frac{\omega}{\gamma}\sum_{s=1}^{p}|a_s| \cr
&\zc{\geq}\omega-\frac{\omega}{2}.
\end{align*}
 In Step $a$, we use the reverse triangle inequality. In Step $b$, we use that $|b_i|\geq 1$ and $|\sum_{s=1}^{p}a_s\mathbf{x}_{i_s}^T\boldsymbol\beta|\leq\sum_{s=1}^{p}|a_s||\mathbf{x}_{i_s}^T\boldsymbol\beta|\leq\sum_{s=1}^{p}|a_s|\omega/\gamma  $ because $\boldsymbol\beta\in \mathcal{F}_{i_1}\cap \cdots\cap\mathcal{F}_{i_{p}}$, which means that $|\mathbf{x}_i^T\boldsymbol\beta|<\omega/\gamma$ for all $i\in\{i_1,\ldots,i_p\}$. In Step $c$, we define the constant $\gamma$ such that $\gamma\geq2\sum_{s=1}^{p}|a_s|$ (we choose $\gamma$ such that it satisfies this inequality for any combination of $i$ and $i_1,\ldots,i_{p}$). Therefore, we have that $\boldsymbol\beta\notin \mathcal{O}_i$. This proves that $\mathcal{O}_i\cap \mathcal{F}_{i_1}\cap \cdots\cap\mathcal{F}_{i_{p}}=\varnothing$ for all $i,i_1,\ldots,i_{p}$ with $i_j\neq i_s,\forall i_j,i_s$ such that $j\neq s$. This in turn implies that (\ref{eqn_domain_beta}) can be rewritten as
\begin{align*}
 &\re^p=\left[\cap_i \mathcal{O}_i^c \right]\cup \left[\cup_i \left( \mathcal{O}_i\cap\left(\cap_{i_1} \mathcal{F}_{i_1}^c\right)\right)\right]\cup \left[\cup_{i,i_1}\left(\mathcal{O}_i\cap \mathcal{F}_{i_1}\cap\left(\cap_{i_2\neq i_1}\mathcal{F}_{i_2}^c\right)\right)\right] \cr
 &\cup\cdots\cup \left[\cup_{i,i_1,\ldots,i_{p-1} (i_j\neq i_s \, \forall i_j,i_s \text{ s.t. } j\neq s)}\left(\mathcal{O}_i\cap \mathcal{F}_{i_1}\cap \cdots\cap\mathcal{F}_{i_{p-1}}\cap \left(\cap_{i_p\neq i_1,\ldots,i_{p-1}}\mathcal{F}_{i_p}^c\right)\right)\right].
\end{align*}
This decomposition of $\re^p$ is comprised of $1+\sum_{i=0}^{p-1} {{\ell+p-1} \choose {i}}$ mutually exclusive sets given by $\cap_i \mathcal{O}_i^c$, $\cup_i ( \mathcal{O}_i\cap(\cap_{i_1} \mathcal{F}_{i_1}^c))$, $\cup_{i}(\mathcal{O}_i\cap \mathcal{F}_{i_1}\cap(\cap_{i_2\neq i_1}\mathcal{F}_{i_2}^c))$ for $i_1\in\mathcal{I}_\mathcal{F} $, and so on.

 We are now ready to bound the function on the left-hand side in (\ref{inequality1}). We first show that the function is bounded on $\boldsymbol\beta\in \cap_i \mathcal{O}_i^c$ and $1\le\sigma\le \omega/(\gamma M)$. For all $i\in\mathcal{I}_\mathcal{O}$, we have
$$
 \frac{f((b_i \omega-\mathbf{x}_i^T\boldsymbol\beta )/\sigma)}{f(\omega/\sigma)}\leq \frac{f(\omega/(2\sigma))}{f(\omega/\sigma)}\leq 2D(0,2)\leq  D(0,\gamma)\gamma,
$$
using the monotonicity of $f$ because $|b_i\omega-\mathbf{x}_i^T\boldsymbol\beta|/\sigma\geq \omega/(2\sigma)\geq \gamma M/2\geq M$ (we choose $\gamma\geq 2$), and then Lemma~\ref{cor-location-scale-transformation}. Therefore, on $\boldsymbol\beta\in \cap_i \mathcal{O}_i^c$ and $1\le\sigma\le \omega/(\gamma M)$,
$$
\prod_{i=p+1}^{n}\left[f(\mathbf{x}_i^T\boldsymbol\beta/\sigma)\right]^{k_i} \left[\frac{f((b_i \omega-\mathbf{x}_i^T\boldsymbol\beta)/\sigma)}{f(\omega/\sigma)}\right]^{\ell_i}\leq B^{k-p}[D(0,\gamma)\gamma]^{\ell},
$$
using $f\leq B$.

Now, we consider the area defined by: $1\le\sigma\le \omega/(\gamma M)$ and  $\boldsymbol\beta$ belongs to one of the $\sum_{i=0}^{p-1}$ $ {{k-p} \choose {i}}$ mutually exclusive sets $\cup_i ( \mathcal{O}_i\cap(\cap_{i_1} \mathcal{F}_{i_1}^c))$, $\cup_{i}(\mathcal{O}_i\cap \mathcal{F}_{i_1}\cap(\cap_{i_2\neq i_1}\mathcal{F}_{i_2}^c))$ for $i_1\in\mathcal{I}_\mathcal{F} $, etc. We have
\begin{align*}
 \prod_{i=p+1}^{n}\left[f(\mathbf{x}_i^T\boldsymbol\beta/\sigma)\right]^{k_i} \left[\frac{f((b_i \omega-\mathbf{x}_i^T\boldsymbol\beta)/\sigma)}{f(\omega/\sigma)}\right]^{\ell_i}  &\za{\leq} B^{\ell} \prod_{i=p+1}^{n}\frac{\left[f(\mathbf{x}_i^T\boldsymbol\beta/\sigma)\right]^{k_i}}{\left[f(\omega/\sigma)\right]^{\ell_i}} \cr
 &\hspace{-15mm}\zb{\leq} B^{\ell+(k-p)-\ell} [D(0,\gamma)\gamma]^{\ell}=B^{k-p} [D(0,\gamma)\gamma]^{\ell}.
\end{align*}
In Step $a$, we use $f\leq B$ for all $i\in \mathcal{I}_\mathcal{O}$. In Step $b$, we use the fact that in any of the sets in which $\boldsymbol\beta$ can belong, there are at least $\ell$ nonoutlying points $(\mathbf{x}_i,0)$ such that $|\mathbf{x}_i^T\boldsymbol\beta|\geq\omega/\gamma$. Indeed, the case in which there are the least nonoutliers such that $|\mathbf{x}_i^T\boldsymbol\beta|\geq\omega/\gamma$ corresponds to $\boldsymbol\beta\in\cup_i(\mathcal{O}_i\cap \mathcal{F}_{i_1}\cap \cdots\cap\mathcal{F}_{i_{p-1}}\cap (\cap_{i_p\neq i_1,\ldots,i_{p-1}}\mathcal{F}_{i_p}^c))$. In this case there are $p-1$ nonoutliers such that $|\mathbf{x}_i^T\boldsymbol\beta|<\omega/\gamma$ (observations $i_1$ to $i_{p-1}$), which leaves $\ell+p-1-(p-1)=\ell$ nonoutliers such that $|\mathbf{x}_i^T\boldsymbol\beta|\geq\omega/\gamma$ (i.e. that there are $\ell$ sets in the intersection $\cap_{i_p\neq i_1,\ldots,i_{p-1}}\mathcal{F}_{i_p}^c)$. 
Therefore, for $\ell$ nonoutliers such that $|\mathbf{x}_i^T\boldsymbol\beta|\geq\omega/\gamma$, we use
\begin{align*}
 f(\mathbf{x}_i^T\boldsymbol\beta/\sigma)/f(\omega/\sigma)\leq f(\omega/(\gamma\sigma))/f(\omega/\sigma) \leq  D(0,\gamma)\gamma,
\end{align*}
by the monotonicity of $f$ because $|\mathbf{x}_i^T\boldsymbol\beta|/\sigma\geq \omega/(\gamma\sigma)\geq M$, and then Lemma~\ref{cor-location-scale-transformation}. For the remaining $k-p-\ell$ nonoutlying points, we use $f\leq B$. Note that this argument justifies the need of the assumption $k\geq \ell+2p-1$.

\textbf{Area~2:} Consider $0<\sigma< 1$. We actually need to show that
\begin{align*}
  \lim_{\omega\rightarrow\infty}\int_{\re^p}\int_{0}^{1}
  \pi(\boldsymbol\beta,\sigma\mid \mathbf{y_k})&\prod_{i=1}^{n}\left[\frac{(1/\sigma)f((y_i-\mathbf{x}_i^T\boldsymbol\beta)/\sigma)}{f(y_i)}\right]^{\ell_i}
  \,d\sigma\,d\boldsymbol\beta \cr
  &\qquad= \int_{\re^p}\int_{0}^{1}\pi(\boldsymbol\beta,\sigma\mid \mathbf{y_k})\,d\sigma\,d\boldsymbol\beta.
\end{align*}
For Area~2, we proceed in a slightly different manner than for Area~1. We begin by dividing the first integral above into two parts as follows:
\begin{align*}
  &\lim_{\omega\rightarrow\infty}\int_{\re^p}\int_{0}^{1}
  \pi(\boldsymbol\beta,\sigma\mid \mathbf{y_k})\prod_{i=1}^{n}\left[\frac{(1/\sigma)f((y_i-\mathbf{x}_i^T\boldsymbol\beta)/\sigma)}{f(y_i)}\right]^{\ell_i}
  \,d\sigma\,d\boldsymbol\beta \\
  &\quad=\lim_{\omega\rightarrow\infty} \int_{\re^p}\int_{0}^{1}
  \pi(\boldsymbol\beta,\sigma\mid \mathbf{y_k})\prod_{i=1}^{n}\left[\frac{(1/\sigma)f((y_i-\mathbf{x}_i^T\boldsymbol\beta)/\sigma)}{f(y_i)}\right]^{\ell_i}
  \ind_{\cap_i \mathcal{O}_i^c}(\boldsymbol\beta)\,d\sigma\,d\boldsymbol\beta \\
  &\qquad+\lim_{\omega\rightarrow\infty} \int_{\cup_i \mathcal{O}_i}\int_{0}^{1}
  \pi(\boldsymbol\beta,\sigma\mid \mathbf{y_k})\prod_{i=1}^{n}\left[\frac{(1/\sigma)f((y_i-\mathbf{x}_i^T\boldsymbol\beta)/\sigma)}{f(y_i)}\right]^{\ell_i}
  \,d\sigma\,d\boldsymbol\beta,
\end{align*}
where
\begin{align*}
&\mathcal{O}_i:=\{\boldsymbol\beta:|y_i-\mathbf{x}_i^T\boldsymbol\beta|< \omega/2\},\forall i\in\mathcal{I}_{\mathcal{O}},
\end{align*}
with $\mathcal{I}_{\mathcal{O}}:=\{i:i\in\{1,\ldots,n\} \text{ and } \ell_i=1\}$. Note that the definition of $\mathcal{O}_i$ is very similar as that of the set defined in (\ref{def_O_i}) (this is why we use the same notation); its interpretation is also very similar. We show that the first part above is equal to the integral $\int_{\re^p}\int_{0}^{1}\pi(\boldsymbol\beta,\sigma\mid \mathbf{y_k})\,d\sigma\,d\boldsymbol\beta$ and that the second part is equal to 0.

For the first part, we again use Lebesgue's dominated convergence theorem in order to interchange the limit $\omega\rightarrow\infty$ and the integral. We have
\begin{align*}
&\lim_{\omega\rightarrow\infty}\int_{\re^p}\int_{0}^{1}\pi(\boldsymbol\beta,\sigma\mid \mathbf{y_k})\prod_{i=1}^{n} \left[\frac{(1/\sigma)f((y_i-\mathbf{x}_i^T\boldsymbol\beta)/\sigma)}{f(y_i)}\right]^{\ell_i} \ind_{\cap_i \mathcal{O}_i^c}(\boldsymbol\beta)\,d\sigma\,d\boldsymbol\beta\cr
 &\quad = \int_{\re^p}\int_{0}^{1}\pi(\boldsymbol\beta,\sigma\mid \mathbf{y_k}) \lim_{\omega\rightarrow\infty}\prod_{i=1}^{n}\left[\frac{(1/\sigma)f((y_i-\mathbf{x}_i^T\boldsymbol\beta)/\sigma)}{f(y_i)}\right]^{\ell_i} \ind_{\cap_i \mathcal{O}_i^c}(\boldsymbol\beta)\,d\sigma\,d\boldsymbol\beta\cr
 &\quad =\int_{\re^p}\int_{0}^{1} \pi(\boldsymbol\beta,\sigma\mid \mathbf{y_k})\times 1 \times \ind_{\re^p}(\boldsymbol\beta)
  \,d\sigma\,d\boldsymbol\beta =\int_{\re^p}\int_{0}^{1}  \pi(\boldsymbol\beta,\sigma\mid \mathbf{y_k})\,d\sigma\,d\boldsymbol\beta,
\end{align*}
using Proposition~\ref{prop-location-scale-transformation} in the second equality since $\mathbf{x}_1,\ldots,\mathbf{x}_n$ are fixed, and $\lim_{\omega\rightarrow\infty}$ $ \ind_{\cap_i\mathcal{O}_i^c}(\boldsymbol\beta)=\ind_{\re^p}(\boldsymbol\beta)=1\Leftrightarrow \lim_{\omega\rightarrow\infty} \ind_{\cup_i\mathcal{O}_i}(\boldsymbol\beta)=0$. Indeed, if $\ell_i=1$ and $b_i>0$ (which implies that $y_i>0$), $\boldsymbol\beta\in\mathcal{O}_i$ implies that $|y_i-\mathbf{x}_i^T\boldsymbol\beta|< \omega/2\leq y_i/2$, which in turn implies that  $y_i/2<\mathbf{x}_i^T\boldsymbol\beta<3y_i/2$, and in the limit, no $\boldsymbol\beta\in\re^p$ satisfies this (we have the same conclusion if $b_i<0$). Note that pointwise convergence is sufficient, for any value of $\boldsymbol\beta\in\re^p$ and $\sigma>0$, once the limit is inside the integral. We now demonstrate that the integrand is bounded, for any value of $\omega\ge \yo$, by an integrable function of $\boldsymbol\beta$ and $\sigma$ that does not depend on $\omega$.

Consider $\boldsymbol\beta\in \cap_i \mathcal{O}_i^c$, that is $\{\boldsymbol\beta:|y_i-\mathbf{x}_i^T\boldsymbol\beta|\geq \omega/2 \text{ for all } i\in\mathcal{I}_{\mathcal{O}}\}$, and $0<\sigma< 1$. Note that the integrand is equal to 0 if $\boldsymbol\beta\notin \cap_i \mathcal{O}_i^c$. For all $i\in\mathcal{I}_{\mathcal{O}}$, we have
\begin{equation*}
 (1/\sigma)f((y_i-\mathbf{x}_i^T\boldsymbol\beta)/\sigma)\leq f(y_i-\mathbf{x}_i^T\boldsymbol\beta)\leq f(\omega/2)\leq 2|b_i|D(|a_i|,2|b_i|)f(y_i),
\end{equation*}
 by the monotonicity of the tails of $|z| f(z)$ and then the monotonicity of the tails of $f(z)$, because $|y_i-\mathbf{x}_i^T\boldsymbol\beta|/\sigma\geq |y_i-\mathbf{x}_i^T\boldsymbol\beta|\geq \omega/2\geq\yo/2\geq M$, if we choose $\yo\ge 2M$. Lemma~\ref{cor-location-scale-transformation} is used in the last inequality with $\omega=(y_i-a_i)/b_i$. Therefore,
\begin{align*}
&\pi(\boldsymbol\beta,\sigma\mid \mathbf{y_k})\prod_{i=1}^{n}\left[\frac{(1/\sigma)f((y_i-\mathbf{x}_i^T\boldsymbol\beta)/\sigma)}{f(y_i)}\right]^{\ell_i} \ind_{\cap_i \mathcal{O}_i^c}(\boldsymbol\beta) \cr
&\qquad\le\pi(\boldsymbol\beta,\sigma\mid \mathbf{y_k})\prod_{i=1}^n [2|b_i|D(|a_i|,2|b_i|)]^{\ell_i},
\end{align*}
which is an integrable function.

We now prove that
\begin{equation*}
\lim_{\omega\rightarrow\infty} \int_{\cup_i \mathcal{O}_i}\int_{0}^{1}
  \pi(\boldsymbol\beta,\sigma\mid \mathbf{y_k})\prod_{i=1}^{n}\left[\frac{(1/\sigma)f((y_i-\mathbf{x}_i^T\boldsymbol\beta)/\sigma)}{f(y_i)}\right]^{\ell_i}
  \,d\sigma\,d\boldsymbol\beta=0.
  \end{equation*}
We first bound above the integrand and then we prove that the integral of the upper bound converges towards 0 as $\omega\rightarrow\infty$. In the same manner as in the proof of the inequality in (\ref{inequality1}), we split the domain of $\boldsymbol\beta$ as follows:
\begin{align*}
 &\cup_i \mathcal{O}_i=\left[\cup_i \left( \mathcal{O}_i\cap\left(\cap_{i_1} \mathcal{F}_{i_1}^c\right)\right)\right]\cup \left[\cup_{i,i_1}\left(\mathcal{O}_i\cap \mathcal{F}_{i_1}\cap\left(\cap_{i_2\neq i_1}\mathcal{F}_{i_2}^c\right)\right)\right] \cr
 &\cup\cdots\cup \left[\cup_{i,i_1,\ldots,i_{p-1} (i_j\neq i_s \, \forall i_j,i_s \text{ s.t. } j\neq s)}\left(\mathcal{O}_i\cap \mathcal{F}_{i_1}\cap \cdots\cap\mathcal{F}_{i_{p-1}}\cap \left(\cap_{i_p\neq i_1,\ldots,i_{p-1}}\mathcal{F}_{i_p}^c\right)\right)\right] \cr
 &\qquad\qquad\cup \left[\cup_{i,i_1,\ldots,i_{p} (i_j\neq i_s \, \forall i_j,i_s \text{ s.t. } j\neq s)}\left(\mathcal{O}_i\cap \mathcal{F}_{i_1}\cap \cdots\cap\mathcal{F}_{i_{p}}\right)\right],
\end{align*}
where
\begin{align*}
&\mathcal{F}_i:=\{\boldsymbol\beta:|\mathbf{x}_i^T\boldsymbol\beta|<\omega/\gamma\},\forall i\in\mathcal{I}_\mathcal{F},
\end{align*}
 and $\mathcal{I}_\mathcal{F}:=\{1,\ldots,\ell+2p-1\}$ (we assume as previously that $y_1,\ldots,$ $y_{\ell+2p-1}$ are $\ell+2p-1$ nonoutliers, and therefore $k_1=\ldots=k_{\ell+2p-1}=1$). The definition of $\mathcal{F}_i$ is the same as that of the set defined in (\ref{def_F_i}). For an interpretation of this set and of the sets involve in the decomposition of $\cup_i \mathcal{O}_i$, see the proof of (\ref{inequality1}). Given that $|y_i|\geq \omega$ for all $i\in\mathcal{O}_i$, we can use the same mathematical arguments as in the proof of (\ref{inequality1}) to show that $\mathcal{O}_i\cap \mathcal{F}_{i_1}\cap \cdots\cap\mathcal{F}_{i_{p}}=\varnothing$ for all $i,i_1,\ldots,i_{p}$ with $i_j\neq i_s$, $\forall i_j\neq i_s$ such that $j\neq s$. Therefore,
 \begin{align*}
 &\cup_i \mathcal{O}_i=\left[\cup_i \left( \mathcal{O}_i\cap\left(\cap_{i_1} \mathcal{F}_{i_1}^c\right)\right)\right]\cup \left[\cup_{i,i_1}\left(\mathcal{O}_i\cap \mathcal{F}_{i_1}\cap\left(\cap_{i_2\neq i_1}\mathcal{F}_{i_2}^c\right)\right)\right] \cr
 &\cup\cdots\cup \left[\cup_{i,i_1,\ldots,i_{p-1} (i_j\neq i_s \, \forall i_j,i_s \text{ s.t. } j\neq s)}\left(\mathcal{O}_i\cap \mathcal{F}_{i_1}\cap \cdots\cap\mathcal{F}_{i_{p-1}}\cap \left(\cap_{i_p\neq i_1,\ldots,i_{p-1}}\mathcal{F}_{i_p}^c\right)\right)\right].
\end{align*}

This decomposition of $\cup_i \mathcal{O}_i$ is comprised of $\sum_{i=0}^{p-1} {{\ell+2p-1} \choose {i}}$ mutually exclusive sets given by $\cup_i ( \mathcal{O}_i\cap(\cap_{i_1} \mathcal{F}_{i_1}^c))$, $\cup_{i}(\mathcal{O}_i\cap \mathcal{F}_{i_1}\cap(\cap_{i_2\neq i_1}\mathcal{F}_{i_2}^c))$ for $i_1\in\mathcal{I}_\mathcal{F} $, and so on. We now consider the area defined by: $0<\sigma< 1$ and  $\boldsymbol\beta$ belongs to one of these $\sum_{i=0}^{p-1} {{\ell+2p-1} \choose {i}}$ mutually exclusive sets. We have
\begin{align*}
&\pi(\boldsymbol\beta,\sigma\mid \mathbf{y_k})\prod_{i=1}^{n}\left[\frac{(1/\sigma)f((y_i-\mathbf{x}_i^T\boldsymbol\beta)/\sigma)}{f(y_i)}\right]^{\ell_i}\\
&\za{\leq}\pi(\boldsymbol\beta,\sigma\mid \mathbf{y_k})\prod_{i=1}^{n}\left[\frac{|b_i|D(|a_i|,|b_i|)(1/\sigma)f((y_i-\mathbf{x}_i^T\boldsymbol\beta)/\sigma)} {f(\omega)}\right]^{\ell_i}\\
&\propto \pi(\boldsymbol\beta,\sigma)\prod_{i=1}^{n}\left[(1/\sigma)f((a_i-\mathbf{x}_i^T\boldsymbol\beta)/\sigma )\right]^{k_i}\left[\frac{(1/\sigma)f((y_i-\mathbf{x}_i^T\boldsymbol\beta)/\sigma)}{f(\omega)}\right]^{\ell_i}\\
&\zb{\le}(B/\sigma) \left[2\gamma D(0,2\gamma)(1/\sigma) f(\omega/\sigma)\right]^{\ell+1}\prod_{i=1 (i\neq i_p,\ldots,i_{\ell+p})}^{n}\left[(1/\sigma)f((a_i-\mathbf{x}_i^T\boldsymbol\beta)/\sigma )\right]^{k_i} \cr
& \hspace{20mm}\times\left[\frac{(1/\sigma)f((y_i-\mathbf{x}_i^T\boldsymbol\beta)/\sigma)}{f(\omega)}\right]^{\ell_i}\\
&\propto (1/\sigma) \left[(1/\sigma)f(\omega/\sigma)\right]^{\ell+1}\prod_{i=1 (i\neq i_p,\ldots,i_{\ell+p})}^{n}\left[(1/\sigma)f((a_i-\mathbf{x}_i^T\boldsymbol\beta)/\sigma )\right]^{k_i} \cr
& \hspace{20mm}\times\left[\frac{(1/\sigma)f((y_i-\mathbf{x}_i^T\boldsymbol\beta)/\sigma)}{f(\omega)}\right]^{\ell_i}\\
&\zc{\le} (1/\sigma) (1/\sigma)f(\omega/\sigma)\prod_{i=1 (i\neq i_p,\ldots,i_{\ell+p})}^{n}\left[(1/\sigma)f((a_i-\mathbf{x}_i^T\boldsymbol\beta)/\sigma )\right]^{k_i} \cr
& \hspace{20mm}\times\left[(1/\sigma)f((y_i-\mathbf{x}_i^T\boldsymbol\beta)/\sigma)\right]^{\ell_i}\\
&\zd{\le} (B/\omega)(1/\sigma) \prod_{i=1 (i\neq i_p,\ldots,i_{\ell+p})}^n(1/\sigma)f((y_i-\mathbf{x}_i^T\boldsymbol\beta)/\sigma ).
\end{align*}
In Step $a$, we use Lemma~\ref{cor-location-scale-transformation} to obtain $f(\omega)/f(y_i)=f((y_i-a_i)/b_i)/f(y_i) \le |b_i|D(|a_i|,$ $|b_i|)$ for all $i\in\mathcal{I}_{\mathcal{O}}$. In Step $b$, we use $\pi(\boldsymbol\beta,\sigma)\le B\max(1,1/\sigma)=B/\sigma$. We also use that in any of the sets in which $\boldsymbol\beta$ can belong, there are at least $\ell+1$ nonoutlying points such that $|\mathbf{x}_i^T\boldsymbol\beta|\geq\omega/\gamma$ (corresponding to $\boldsymbol\beta\in\mathcal{F}_{i}^c$ for at least $\ell+1$ nonoutlying points). Indeed, the case in which there are the least nonoutliers such that $|\mathbf{x}_i^T\boldsymbol\beta|\geq\omega/\gamma$ corresponds to $\boldsymbol\beta\in\cup_i(\mathcal{O}_i\cap \mathcal{F}_{i_1}\cap \cdots\cap\mathcal{F}_{i_{p-1}}\cap (\cap_{i_p\neq i_1,\ldots,i_{p-1}}\mathcal{F}_{i_p}^c))$. In this case there are $p-1$ nonoutliers such that $|\mathbf{x}_i^T\boldsymbol\beta|<\omega/\gamma$ (say observations $i_1$ to $i_{p-1}$), which leaves at least $\ell+2p-1-(p-1)$ nonoutliers such that $|\mathbf{x}_i^T\boldsymbol\beta|\geq\omega/\gamma$ (i.e. that there are $\ell+2p-1-(p-1)$ sets in the intersection $\cap_{i_p\neq i_1,\ldots,i_{p-1}}\mathcal{F}_{i_p}^c)$, and we know that $\ell+2p-1-(p-1)=\ell+p>\ell+1$ because we only consider the models with $p\geq2$. This implies that there exists a set of $\ell+1$ indices, say $\{i_p,\ldots,i_{\ell+p}\}\subset\mathcal{I}_\mathcal{F}$, such that for all $i\in\{i_p,\ldots,i_{\ell+p}\}$,
\begin{equation*}
 f((a_i-\mathbf{x}_i^T\boldsymbol\beta)/\sigma ) \le f(\omega/(2\gamma\sigma) )\le 2\gamma D(0,2\gamma) f(\omega/\sigma),
\end{equation*}
 using the monotonicity of the tails of $f$ in the first inequality because, if we define the constant $a_{(k)}:=\max_{i\in\{1,\ldots,k\}}|a_i|$ with $\omega\ge \yo\ge (2\gamma)a_{(k)}$, we have
$|a_i-\mathbf{x}_i^T\boldsymbol\beta|/\sigma\geq (|\mathbf{x}_i^T\boldsymbol\beta|-\abs{a_i})/\sigma \ge (\omega/\gamma-a_{(k)})/\sigma \ge \omega/(2\gamma\sigma) \ge\omega/(2\gamma) \ge \yo/(2\gamma)\ge M$ if we choose $\yo\ge 2\gamma M$. In the second inequality, we use Lemma~\ref{cor-location-scale-transformation} (as mentioned in the proof of (\ref{inequality1}), we choose $\gamma\geq2$).
 In Step $c$ above, we use the monotonicity of the tails of $|z|f(z)$ to obtain $(\omega/\sigma) f(\omega/\sigma)\le \omega f(\omega)$ for $\ell$ terms, because  $\omega/\sigma\ge \omega \ge \yo\ge M$ if we choose $\yo\ge M$. In Step $d$, we use $(1/\sigma)f(\omega/\sigma)\leq B/\omega$.

 The integral of $(B/\omega)(1/\sigma) \prod_{i=1 (i\neq i_p,\ldots,i_{\ell+p})}^n(1/\sigma)f((y_i-\mathbf{x}_i^T\boldsymbol\beta)/\sigma )$ is bounded by
\begin{align*}
(B/\omega)\int_{\re^p}\int_0^\infty (1/\sigma) \prod_{i=1 (i\neq i_p,\ldots,i_{\ell+p})}^n(1/\sigma)f((y_i-\mathbf{x}_i^T\boldsymbol\beta)/\sigma )\,d\sigma\,d\boldsymbol\beta= (B/\omega) m(\mathbf{y}_{\mathcal{I}_R}),
\end{align*}
where $m(\mathbf{y}_{\mathcal{I}_R})$ is the marginal density arising from a prior proportional to $1/\sigma$ and $n-(\ell+1)=k-1$ observations $(\mathbf{x}_{i},y_{i})$, $i\in\mathcal{I}_R:=\{1,\ldots,n\}\setminus\{i_p,\ldots,i_{\ell+p}\}$. In order to prove that $(B/\omega) m(\mathbf{y}_{\mathcal{I}_R})\rightarrow 0$ as $\omega\rightarrow\infty$, it suffices to prove that $m(\mathbf{y}_{\mathcal{I}_R})$ is bounded by a constant that does not depend on $\omega$, because $1/\omega\rightarrow 0$. In Section~\ref{proof-proposition-proper-supp}, we proved that a marginal, as $m(\mathbf{y}_{\mathcal{I}_R})$, is bounded by a constant that does not depend on $\omega$ if the number of observations (which is $k-1$ in our case) is greater than or equal to $p+1$ if the prior divided by $1/\sigma$ is bounded (which is the case for $m(\mathbf{y}_{\mathcal{I}_R})$). Because we assume that $k\geq \ell+2p-1$ and $\ell\geq 1$ (the proof for the case $\ell=0$ is trivial), and because we only consider the models with $p\geq2$, $m(\mathbf{y}_{\mathcal{I}_R})$ is the marginal of $k-1\geq \ell+2p-2\geq p+1$ observations. As a result,
$$
 (B/\omega)\int_{\re^p}\int_0^\infty (1/\sigma) \prod_{i=1 (i\neq i_p,\ldots,i_{\ell+p})}^n(1/\sigma)f((y_i-\mathbf{x}_i^T\boldsymbol\beta)/\sigma )\,d\sigma\,d\boldsymbol\beta\rightarrow 0 \text{ as } \omega\rightarrow\infty.
$$
We therefore have that
$$
 \int_{\cup_i \mathcal{O}_i}\int_0^1\pi(\boldsymbol\beta,\sigma\mid \mathbf{y_k})\prod_{i=1}^{n}\left[\frac{(1/\sigma)f((y_i-\mathbf{x}_i^T\boldsymbol\beta)/\sigma)}{f(y_i)}\right]^{\ell_i}\,d\sigma\,d\boldsymbol\beta\rightarrow 0 \text{ as } \omega\rightarrow\infty.
$$
\end{proof}

\begin{proof}[Proof of Result~(b)]
Consider $(\boldsymbol\beta,\sigma)$ such that $\pi(\boldsymbol\beta,\sigma)>0$ (the proof for the case $(\boldsymbol\beta,\sigma)$ such that $\pi(\boldsymbol\beta,\sigma)=0$ is trivial).
We have, as $\omega\rightarrow \infty$,
  \begin{align*}
\frac{\pi(\boldsymbol\beta,\sigma\mid \mathbf{y_n})}{\pi(\boldsymbol\beta,\sigma\mid \mathbf{y_k})}
&= \frac{m(\mathbf{y_k})}{m(\mathbf{y_n})}\times
\frac{\pi(\boldsymbol\beta,\sigma)\prod_{i=1}^{n}(1/\sigma)f((y_i-\mathbf{x}_i^T \boldsymbol\beta)/\sigma)}
{\pi(\boldsymbol\beta,\sigma)\prod_{i=1}^{n}\left[(1/\sigma)f((y_i-\mathbf{x}_i^T \boldsymbol\beta)/\sigma)\right]^{k_i}}\\
&= \frac{m(\mathbf{y_k})}{m(\mathbf{y_n})}\prod_{i=1}^{n}\left[(1/\sigma)f((y_i-\mathbf{x}_i^T\boldsymbol\beta)/\sigma)\right]^{\ell_i}\\
&= \frac{m(\mathbf{y_k})\prod_{i=1}^{n}[f(y_i)]^{\ell_i}}{m(\mathbf{y_n})}\prod_{i=1}^{n}\left[\frac{(1/\sigma)f((y_i-\mathbf{x}_i^T\boldsymbol\beta)/\sigma)}
{f(y_i)}\right]^{\ell_i}\rightarrow 1.
   \end{align*}
The first ratio in the last equality does not depend on $\boldsymbol\beta$ and $\sigma$ and converges towards 1 as $\omega\rightarrow\infty$ using Result~(a). Also, the product converges towards 1 uniformly in any set $(\boldsymbol\beta,\sigma)\in [-\vartheta,\vartheta]^p\times[1/\eta, \eta]$ using Proposition~\ref{prop-location-scale-transformation} given that $\mathbf{x}_1,\ldots, \mathbf{x}_n$ are fixed.
Furthermore, since $f$ and $\pi(\boldsymbol\beta,\sigma)/\max(1,$ $1/\sigma)$ are bounded, $\pi(\boldsymbol\beta,\sigma\mid \mathbf{y_k})$ is also bounded on any set $(\boldsymbol\beta,\sigma)\in [-\eta,\eta]^p\times[1/\eta, \eta]$. Then, we have
   \begin{equation*}
   \big|\pi(\boldsymbol\beta,\sigma\mid \mathbf{y_n})-\pi(\boldsymbol\beta,\sigma\mid \mathbf{y_k})\big|=\pi(\boldsymbol\beta,\sigma\mid \mathbf{y_k})\abs{\frac{\pi(\boldsymbol\beta,\sigma\mid \mathbf{y_n})}{\pi(\boldsymbol\beta,\sigma\mid \mathbf{y_k})}-1}\rightarrow 0\text{ as }\omega\rightarrow\infty.
   \end{equation*}
\end{proof}

\begin{proof}[Proof of Result (c)]
Using Proposition~\ref{proposition-proper}, we know that $\pi(\boldsymbol\beta,$ $\sigma\mid\mathbf{y_k})$ and $\pi(\boldsymbol\beta,\sigma\mid\mathbf{y_n})$ are proper. Moreover, using Result~(b), we have the pointwise convergence $\pi(\boldsymbol\beta,\sigma\mid \mathbf{y_n})\rightarrow\pi(\boldsymbol\beta,\sigma\mid \mathbf{y_k})$  as $\omega\rightarrow\infty$ for any $\boldsymbol\beta\in\re^p$ and $\sigma>0$, as a result of the uniform convergence. Then, the conditions of Scheff\'{e}'s theorem (see \cite{scheffe1947useful}) are satisfied and we obtain the convergence in $L^1$ of $\pi(\boldsymbol\beta,\sigma\mid \mathbf{y_n})$ towards $\pi(\boldsymbol\beta,\sigma\mid \mathbf{y_k})$   as well as the following result:
\begin{equation*}
 \lim_{\omega\rightarrow\infty}\int_{E}\pi(\boldsymbol\beta,\sigma\mid\mathbf{y_n})\,d\boldsymbol\beta\,d\sigma=
  \int_{E}\pi(\boldsymbol\beta,\sigma\mid\mathbf{y_k})\,d\boldsymbol\beta\,d\sigma,
  \end{equation*}
uniformly for all sets $E\subset\re^p\times \re^{+}$. Result~(c) follows directly.
\end{proof}

\begin{proof}[Proof of Result (d)]
We prove that the moments converge through a mix of the strategies used to show Result (a) and that the moments exist in Proposition~\ref{proposition-proper}. For any $M$, a positive integer, we have
\begin{align*}
 \lim_{\omega\rightarrow\infty}\mathbb{E}[\sigma^M\mid \mathbf{y_n}] &= \lim_{\omega\rightarrow\infty} \int_{0}^{\infty}\int_{\re^p} \sigma^M \pi(\boldsymbol\beta,\sigma \mid \mathbf{y_n})\,d\boldsymbol\beta\,d\sigma \cr
&=  \int_{0}^{\infty}\int_{\re^p} \lim_{\omega\rightarrow\infty} \sigma^M \pi(\boldsymbol\beta,\sigma \mid \mathbf{y_n})\,d\boldsymbol\beta\,d\sigma \cr
&=  \int_{0}^{\infty}\int_{\re^p} \sigma^M \pi(\boldsymbol\beta,\sigma \mid \mathbf{y_k})\,d\boldsymbol\beta\,d\sigma=\mathbb{E}[\sigma^M\mid \mathbf{y_k}],
\end{align*}
assuming that we can interchange the limit and integral and using Result (b). To interchange the limit and integral, we again use Lebesgue's dominated convergence theorem which requires that the integrand is bounded by an integrable function of $\boldsymbol\beta$ and $\sigma$. We prove that it is the case using that
\begin{align*}
  \sigma^M\pi(\boldsymbol\beta,\sigma \mid\mathbf{y_n})&= \sigma^M \,\frac{\pi(\boldsymbol\beta,\sigma)\prod_{i=1}^n[(1/\sigma)f((y_i-\mathbf{x}_i^T\boldsymbol\beta)/\sigma)]^{k_i}}{m(\mathbf{y_n})}\cr
  &\qquad \times  \prod_{i=1}^n \left[\frac{(1/\sigma)f((y_i-\mathbf{x}_i^T\boldsymbol\beta)/\sigma)}{f(y_i)}\right]^{\ell_i} \prod_{i=1}^n \left[f(y_i)\right]^{\ell_i}.
\end{align*}
We have that $m(\mathbf{y_n})$ is bounded using Proposition~\ref{proposition-proper}, $\prod_{i=1}^n [f(y_i)]^{\ell_i}\leq B^\ell$, and
\begin{align*}
 \sigma^M \prod_{i=1}^n[(1/\sigma)f((y_i-\mathbf{x}_i^T\boldsymbol\beta)/\sigma)]^{k_i} \leq B^M \prod_{i=M+1}^n[(1/\sigma)f((y_i-\mathbf{x}_i^T\boldsymbol\beta)/\sigma)]^{k_i},
\end{align*}
using $f\leq B$ for the $M$ first observations and assuming without loss of generality that these observations are nonoutliers (therefore $k_1=\ldots=k_M=1$). Therefore, we need to show that
\begin{align*}
&\pi(\boldsymbol\beta,\sigma)\prod_{i=M+1}^n[(1/\sigma)f((y_i-\mathbf{x}_i^T\boldsymbol\beta)/\sigma)]^{k_i}  \prod_{i=1}^n \left[\frac{(1/\sigma)f((y_i-\mathbf{x}_i^T\boldsymbol\beta)/\sigma)}{f(y_i)}\right]^{\ell_i} \cr
&\qquad = m(\mathbf{y_k^*}) \, \pi(\boldsymbol\beta,\sigma\mid \mathbf{y_k^*})\prod_{i=1}^n \left[\frac{(1/\sigma)f((y_i-\mathbf{x}_i^T\boldsymbol\beta)/\sigma)}{f(y_i)}\right]^{\ell_i}
\end{align*}
is bounded by an integrable function of $\boldsymbol\beta$ and $\sigma$, where $\mathbf{y_k^*}:=\mathbf{y_k}\setminus\{y_1,\ldots,y_M\}$ (the nonoutlier group without the first $M$ nonoutliers). In the proof of Result (a), it has been shown that it is the case under the assumptions of Theorem~\ref{thm-main}, which are satisfied considering this modified data set with $k-M\geq n/2+(p-1/2)$ (see the additional assumption for Result (d) of Theorem~\ref{thm-main}).

For the expectations $\mathbb{E}[\beta_j^M \mid \mathbf{y_n}]$, we proceed in the same way, we simply additionally consider that, as in the proof of Proposition 2.1 (see Section~\ref{proof-proposition-proper-supp}), $\beta_j$ can be rewritten as $\mathbf{e}_j^T \boldsymbol\beta$, and that next, $\mathbf{e}_j$ can be expressed as a linear combination of $p$ vectors $\mathbf{x}_{i_1},\ldots,\mathbf{x}_{i_p}$, where now these are selected among the nonoutliers, i.e.\ $i_1,\ldots, i_p\in\{i: k_i=1\}$. We detail the case $M=1$. From it and what has been done before, it will be clear the result holds in general, with further technicalities. As above,
\begin{align*}
 \lim_{\omega\rightarrow\infty}\mathbb{E}[\beta_j\mid \mathbf{y_n}] &= \lim_{\omega\rightarrow\infty} \int_{0}^{\infty}\int_{\re^p} \beta_j \pi(\boldsymbol\beta,\sigma \mid \mathbf{y_n})\,d\boldsymbol\beta\,d\sigma \cr
&=  \int_{0}^{\infty}\int_{\re^p} \lim_{\omega\rightarrow\infty} \beta_j \pi(\boldsymbol\beta,\sigma \mid \mathbf{y_n})\,d\boldsymbol\beta\,d\sigma \cr
&=  \int_{0}^{\infty}\int_{\re^p} \beta_j \pi(\boldsymbol\beta,\sigma \mid \mathbf{y_k})\,d\boldsymbol\beta\,d\sigma=\mathbb{E}[\beta_j\mid \mathbf{y_k}],
\end{align*}
assuming that we can interchange the limit and integral and using Result (b). As above, we have to show that the integrand is bounded above by an integrable function of $\boldsymbol\beta$ and $\sigma$. We beforehand use that
\[
 \beta_j=\mathbf{e}_j^T \boldsymbol\beta=\sum_{s=1}^p a_s \mathbf{x}_{i_s}^T \boldsymbol\beta=\sum_{s=1}^p a_s (y_{i_s} - \mathbf{x}_{i_s}^T \boldsymbol\beta) - \sum_{s=1}^p a_s y_{i_s},
\]
as mentioned, where $a_1,\ldots,a_p\in\re$ and $i_1,\ldots, i_p\in\{i: k_i=1\}$. The integrand thus becomes a sum of $2p$ terms, and we prove that each one of them is bounded above by an integrable function of $\boldsymbol\beta$ and $\sigma$, which will complete the proof. As above
\begin{align*}
  a_s (y_{i_s} - \mathbf{x}_{i_s}^T \boldsymbol\beta)\pi(\boldsymbol\beta,\sigma \mid\mathbf{y_n})&\leq |a_s||y_{i_s} - \mathbf{x}_{i_s}^T \boldsymbol\beta| \,\frac{\pi(\boldsymbol\beta,\sigma)\prod_{i=1}^n[(1/\sigma)f((y_i-\mathbf{x}_i^T\boldsymbol\beta)/\sigma)]^{k_i}}{m(\mathbf{y_n})}\cr
  &\qquad \times  \prod_{i=1}^n \left[\frac{(1/\sigma)f((y_i-\mathbf{x}_i^T\boldsymbol\beta)/\sigma)}{f(y_i)}\right]^{\ell_i} \prod_{i=1}^n \left[f(y_i)\right]^{\ell_i}.
\end{align*}
We have that $m(\mathbf{y_n})$ is bounded using Proposition~\ref{proposition-proper}, $\prod_{i=1}^n [f(y_i)]^{\ell_i}\leq B^\ell$, and
\begin{align*}
 |y_{i_s} - \mathbf{x}_{i_s}^T \boldsymbol\beta| \prod_{i=1}^n[(1/\sigma)f((y_i-\mathbf{x}_i^T\boldsymbol\beta)/\sigma)]^{k_i} \leq B \prod_{i=1 (i\neq i_s)}^n[(1/\sigma)f((y_i-\mathbf{x}_i^T\boldsymbol\beta)/\sigma)]^{k_i},
\end{align*}
using
\[
 (|y_{i_s} - \mathbf{x}_{i_s}^T \boldsymbol\beta|/\sigma) f((y_{i_s}-\mathbf{x}_{i_s}^T\boldsymbol\beta)/\sigma)\leq B.
\]
Therefore, $ a_s (y_{i_s} - \mathbf{x}_{i_s}^T \boldsymbol\beta)\pi(\boldsymbol\beta,\sigma \mid\mathbf{y_n})$ is bounded above by a constant times
\begin{align*}
&\pi(\boldsymbol\beta,\sigma)\prod_{i=1 (i\neq i_s)}^n[(1/\sigma)f((y_i-\mathbf{x}_i^T\boldsymbol\beta)/\sigma)]^{k_i}  \prod_{i=1}^n \left[\frac{(1/\sigma)f((y_i-\mathbf{x}_i^T\boldsymbol\beta)/\sigma)}{f(y_i)}\right]^{\ell_i} \cr
&\qquad = m(\mathbf{y_k^*}) \, \pi(\boldsymbol\beta,\sigma\mid \mathbf{y_k^*})\prod_{i=1}^n \left[\frac{(1/\sigma)f((y_i-\mathbf{x}_i^T\boldsymbol\beta)/\sigma)}{f(y_i)}\right]^{\ell_i},
\end{align*}
where $\mathbf{y_k^*}:=\mathbf{y_k}\setminus\{y_{i_s}\}$ (the nonoutlier group without the $i_s$-th  nonoutlier). As mentioned above, in the proof of Result (a), it has been shown that it is the case under the assumptions of Theorem~\ref{thm-main}, which are satisfied considering this modified data set with $k-1\geq n/2+(p-1/2)$ (see the additional assumption for Result (d) of Theorem~\ref{thm-main}). The proofs for the terms with $a_s y_{i_s}\pi(\boldsymbol\beta,\sigma \mid\mathbf{y_n})$ is similar.
\end{proof}

\subsection{Complement of Section~\ref{sec_efficiency}}\label{sec_complement_3_2}

In Section~\ref{sec_efficiency}, we mention that the first derivative of the divergence
\begin{align}\label{eqn_KL_supp}
  \text{KL}(\boldsymbol\beta, \sigma):=\int \log(g(y_i) / p_{(\boldsymbol\beta,\sigma)}(y_i)) \, g(y_i) \, dy_i
 \end{align}
 with respect to $\boldsymbol\beta$ equals 0 at $\boldsymbol\beta_0$, and this for any value of $\sigma$. We also mention that while setting $\boldsymbol\beta=\boldsymbol\beta_0$ in (\ref{eqn_KL_supp}), it is minimised at $\sigma^*$ which depends on $\rho$. Finally, we mention that most of the regularity conditions in \cite{bunke1998asymptotic} are satisfied. We now show all this. We rewrite the divergence:
 \[
  \text{KL}(\boldsymbol\theta) = \mathbb{E}_g[\log g(Y)] - \mathbb{E}_g[\log p_{\boldsymbol\theta}(Y)],
 \]
 where $\mathbb{E}_g$ denotes the expectation with respect to $g$, and omitting the index $i$. The first term is computed exactly:
 \[
  \mathbb{E}_g[\log g(Y)]=-\frac{1}{2} \, \log(2\pi)-\log \sigma_0 - \frac{1}{2\sigma_0^2} \, \mathbb{E}_g[(Y - \mathbf{x}^T\boldsymbol\beta_0)^2] = -\frac{1}{2} \, (\log(2\pi) - 1) -\log \sigma_0.
 \]
 The second term is rewritten as:
 \begin{align*}
  \mathbb{E}_g[\log p_{\boldsymbol\theta}(Y)]&=\int \left(\log f\left(\frac{y - \mathbf{x}^T \boldsymbol\beta}{\sigma}\right) - \log \sigma \right) \frac{1}{\sqrt{2\pi}\sigma_0} \, \exp\left(-\frac{1}{2 \sigma_0^2} \, (y - \mathbf{x}^T \boldsymbol\beta_0)^2 \right) \, dy \cr
  &=\int \log f(z \eta + \delta \eta) \, \varphi(z) \, dz + \log \eta - \log \sigma_0,
 \end{align*}
 using the change of variable $z=(y - \mathbf{x}^T \boldsymbol\beta_0)/\sigma_0$, and denoting $\delta := \mathbf{x}^T (\boldsymbol\beta_0 - \boldsymbol\beta)/\sigma_0$ and $\eta := \sigma_0 / \sigma$. Therefore, minimising the divergence is equivalent to maximising
 \[
  \int \log f(z \eta + \delta \eta) \, \varphi(z) \, dz + \log \eta.
 \]
We now show that we can interchange the derivative with respect to $\delta$ and the integral. The first derivative of $f$ is given by
 \[
  f'(z)=\begin{cases}
  	-z \varphi(z) \quad \text{if} \quad |z| < \tau, \cr
  	-\varphi(\tau) \tau (\log \tau)^{\lambda + 1} \, \frac{\text{sign}(z)}{z^2} \, \frac{1}{(\log |z|)^{\lambda + 1}} \, \left(1 + \frac{\lambda + 1}{\log|z|} \right) \quad \text{if} \quad |z| > \tau, \cr
  	\text{does not exist if $z$ equals $-\tau$ or $\tau$,}
  \end{cases}
 \]
 where $\text{sign}(\cdot)$ is the sign function. For completeness, we assign the values $-z\varphi(z)$ to $f'(z)$ when $\abs{z}=\tau$. Note that we are allowed to do this because these points have null measure. We thus consider that
\[
 \frac{f'(z)}{f(z)}=\begin{cases}
  -z \quad \text{if} \quad |z| \leq \tau, \cr
  -\frac{\text{sign}(z)}{|z|} \, \left(1 + \frac{\lambda + 1}{\log|z|} \right) \quad \text{if} \quad |z| > \tau.
 \end{cases}
\]
This function is bounded. Consequently,
\[
 \frac{\partial}{\partial\delta} \, \log f(z \eta + \delta \eta)=\eta \, \frac{f'(z \eta + \delta \eta)}{f(z \eta + \delta \eta)}
\]
is bounded for any value of $\eta$, which implies that we can interchange the derivative and the integral. If $\delta = 0$, the integral is equal to 0, because $f'(-z)/f(z) = -f'(z)/f(z)$, and
 \begin{align*}
  \int \frac{f'(z \eta)}{f(z \eta)} \, \varphi(z) \, dz &= \int_{-\infty}^0 \frac{f'(z \eta)}{f(z \eta)} \, \varphi(z) \, dz + \int_{0}^\infty \frac{f'(z \eta)}{f(z \eta)} \, \varphi(z) \, dz \cr
  &= \int_0^\infty \frac{f'(-z \eta)}{f(-z \eta)} \, \varphi(-z) \, dz + \int_{0}^\infty \frac{f'(z \eta)}{f(z \eta)} \, \varphi(z) \, dz \cr
  &= -\int_0^\infty \frac{f'(z \eta)}{f(z \eta)} \, \varphi(z) \, dz + \int_{0}^\infty \frac{f'(z \eta)}{f(z \eta)} \, \varphi(z) \, dz.
 \end{align*}
 Notice that this is true for any value of $\eta$. Analysing the second derivative may allow to rigorously conclude that the divergence is (uniquely) minimised with respect to $\boldsymbol\beta$ at $\boldsymbol\beta_0$. If it is strictly negative for any value of $\eta$, it is the case. We now analyse
  \begin{align}\label{eqn_int_eta}
  \int \log f(z \eta) \, \varphi(z) \, dz + \log \eta.
 \end{align}
 In the same way as for $\delta$, we show that we can interchange the derivative with respect to $\eta$ and the integral. We have
 \[
 \frac{\partial}{\partial\eta} \, \log f(z \eta)=z \, \frac{f'(z \eta)}{f(z \eta)},
\]
which is bounded by $|z|$ times a constant. This is an integrable function with respect to $\varphi$. Therefore, we can interchange the integral and the derivative:
\[
 \frac{\partial}{\partial\eta}\left(\int \log f(z \eta) \, \varphi(z) \, dz + \log \eta\right)=\int z \, \frac{f'(z \eta)}{f(z \eta)} \, \varphi(z) \, dz + \frac{1}{\eta}.
\]
Setting the derivative equals to 0 leads to
\[
 \int z\eta \, \, \frac{f'(z \eta)}{f(z \eta)} \, \varphi(z) \, dz = -1.
\]
We cannot solve this explicitly, but numerical calculations show that the solution is unique. For instance, (\ref{eqn_int_eta}) as a function of $\eta$ with $\rho=0.95$ is shown in Figure~\ref{fig_eta} (a), with the maximiser $\eta^*$ as a function of $\rho$ in Figure~\ref{fig_eta} (b). The previous analysis suggests that $(\boldsymbol\beta^*,\sigma^*)=(\boldsymbol\beta_0, \sigma_0/\eta^*)$.

 \begin{figure}[ht]
  \centering
  $\begin{array}{cc}
  \includegraphics[width=0.35\textwidth]{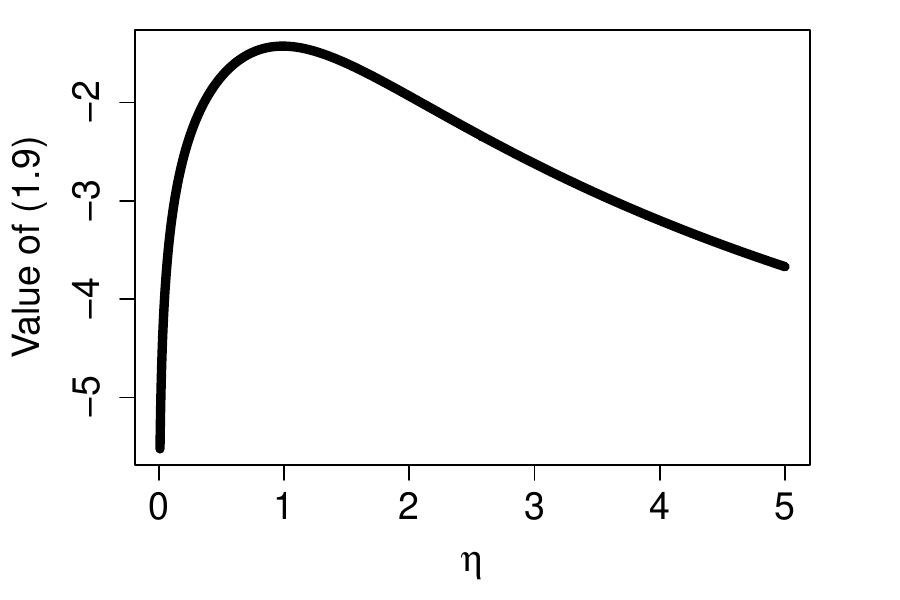} & \includegraphics[width=0.35\textwidth]{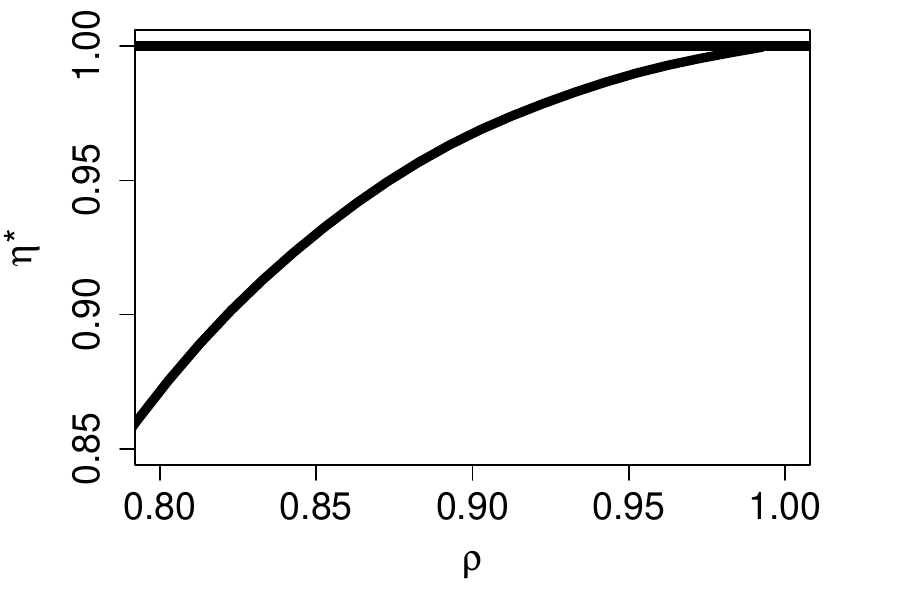}  \cr
  \textbf{(a)} & \textbf{(b)}
  \end{array}$\vspace{-1mm}
  \caption{\small (a) Value of (\ref{eqn_int_eta}) as a function of $\eta$ with $\rho=0.95$; (b) maximiser $\eta^*$ as a function of $\rho$}\label{fig_eta}
 \end{figure}
\normalsize

We now show that most of the regularity conditions in \cite{bunke1998asymptotic} are satisfied.

\begin{description}

	 \item[Condition 1.] The parameter space $\boldsymbol\Theta$ is a closed (possibly unbounded) convex set in $\re^d$ with a nonempty interior. The density $p_{\boldsymbol\theta}(y)$ is bounded for all $\boldsymbol\theta$ and $y$, and its carrier $\{y:p_{\boldsymbol\theta}(y)>0\}$ is the same for all $\boldsymbol\theta$.
	
	 \vspace{5mm} This condition is not directly satisfied because the parameter space is open ($\sigma > 0$). But it should not be a problem if we can show that it is possible to choose $\epsilon > 0$ such $\sigma_0\in [\epsilon,\infty)$ and that the mass outside of this set goes to 0 as the sample size increases. Indeed, we could ``define'' the parameter space to be $[\epsilon,\infty)\times \re^p$ which is a closed convex set and lose nothing asymptotically. On this parameter space $p_{\boldsymbol\theta}(y)$ is bounded for all $\boldsymbol\theta$ and $y$, and its carrier $\{y:p_{\boldsymbol\theta}(y)>0\}=\re$ is the same for all $\boldsymbol\theta$.
	
	 \item[Condition 2.] For all $\boldsymbol\theta$, there is a sphere $S[\boldsymbol\theta, r]$ of center $\boldsymbol\theta$ and radius $r$ with
	 \[
	  \mathbb{E}_g[\sup\{|\log[g(Y)/p_t(Y)]|:t\in S[\boldsymbol\theta, r]\}],
	 \]
	 where $\mathbb{E}_g$ is the expectation with respect to $g$.
	
	 \vspace{5mm} Lemma~\ref{cor-location-scale-transformation} implies that
	 \[
	  \frac{p_{(\boldsymbol\beta_0,\sigma_0)}(y)}{p_t(y)}
	 \]
	 is bounded both from below and above on $S[\boldsymbol\theta, r]$. Also, $g(y)/p_{(\boldsymbol\beta_0,\sigma_0)}(y)$ does not depend on $t$. Therefore, there exists a positive constant $c$ such that
	 \begin{align*}
	  \mathbb{E}_g[\sup\{|\log[g(Y)/p_t(Y)]|:t\in S[\boldsymbol\theta, r]\}]&\leq c \, \mathbb{E}_g[|\log[g(Y)/p_{(\boldsymbol\beta_0,\sigma_0)}(Y)]|] \cr
	  & = c\,\mathbb{E}_\varphi[|\log[\varphi(Z)/f(Z)]|],
	 \end{align*}
	 where a change of variables has been used in the last equality and $f$ is the LPTN density. We also have that
	 \[
	  \frac{\varphi(z)}{f(z)}=\begin{cases}
	   1 \quad \text{if} \quad |z|\leq \tau, \cr
	   c_2\exp(-z^2/2)|z|(\log |z|)^{\lambda+1} \quad \text{if} \quad |z| > \tau,
	  \end{cases}
	 \]
	 where $c_2$ is a positive constant.
	 Consequently,
	 \begin{align*}
	  c\,\mathbb{E}_\varphi[|\log[\varphi(Z)/f(Z)]|]&=c\,\mathbb{E}_\varphi[|\log[\varphi(Z)/f(Z)]|\ind(|Z|>\tau)] \cr
	  &\hspace{-25mm}\leq c\log(c_2) + c\,\mathbb{E}   _\varphi[(Z^2/2+|\log(|Z|(\log|Z|)^{\lambda +1})|)\ind(|Z|>\tau)] <\infty.
	 \end{align*}
	
	 \item[Condition 3.] For all fixed $y$, the density $p_{\boldsymbol\theta}(y)$ has a continuous derivative $p_{\boldsymbol\theta}'(y)$ with respect to $\boldsymbol\theta$ and there are positive constants $c, b_0$ such that
	 \[
	  \int \| [p_{\boldsymbol\theta}(y)]^{-1} p_{\boldsymbol\theta}'(y) \|^{4(d+1)} \, p_{\boldsymbol\theta}(y) \, dy < c\, (1+\|\boldsymbol\theta\|^{b_0}),
	 \]
	 for all $\boldsymbol\theta$, where $\|\cdot\|$ denotes a norm in $\re^d$.
	
	 \vspace{5mm} In our case, the density $p_{\boldsymbol\theta}(y)$ has an almost everywhere continuous derivative. We believe this should not cause fundamental problems for rigorously prove the result.
	
	 We have that
	 \[
	  \frac{\frac{\partial}{\partial\boldsymbol\beta}\left(\frac{1}{\sigma} f\left(\frac{y-\mathbf{x}^T \boldsymbol\beta}{\sigma}\right)\right)}{\frac{1}{\sigma} f\left(\frac{y-\mathbf{x}^T \boldsymbol\beta}{\sigma}\right)}=-\frac{\mathbf{x}^T}{\sigma} \frac{f'\left(\frac{y-\mathbf{x}^T \boldsymbol\beta}{\sigma}\right)}{f\left(\frac{y-\mathbf{x}^T \boldsymbol\beta}{\sigma}\right)}=-\frac{\mathbf{x}^T}{\sigma}\frac{f'(z)}{f(z)},
	 \]
	and
	\[
	 \frac{\frac{\partial}{\partial\sigma}\left(\frac{1}{\sigma} f\left(\frac{y-\mathbf{x}^T \boldsymbol\beta}{\sigma}\right)\right)}{\frac{1}{\sigma} f\left(\frac{y-\mathbf{x}^T \boldsymbol\beta}{\sigma}\right)}=-\frac{1}{\sigma} \frac{f\left(\frac{y-\mathbf{x}^T \boldsymbol\beta}{\sigma}\right)}{f\left(\frac{y-\mathbf{x}^T \boldsymbol\beta}{\sigma}\right)}-\frac{y-\mathbf{x}^T \boldsymbol\beta}{\sigma^2}\frac{f'\left(\frac{y-\mathbf{x}^T \boldsymbol\beta}{\sigma}\right)}{f\left(\frac{y-\mathbf{x}^T \boldsymbol\beta}{\sigma}\right)}=-\frac{1}{\sigma}\left(1+\frac{f'(z)}{f(z)}\right),
	\]	
	after the change of variable $z=(y-\mathbf{x}^T \boldsymbol\beta)/\sigma$. We use the traditional Euclidian norm. The function $f'/f$ is bounded. Therefore,
	\[
	  \int \| [p_{\boldsymbol\theta}(y)]^{-1} p_{\boldsymbol\theta}'(y) \|^{4(d+1)} \, p_{\boldsymbol\theta}(y) \, dy \leq c_0 \, \frac{1}{\sigma^{4(p+2)}},
	 \]
	 where $c_0$ is a positive constant. If the parameter space is $[\epsilon,\infty)\times \re^p$, it is easily seen that
	 \[
	  c_0 \, \frac{1}{\sigma^{4(p+2)}}  < c\, (1+\|\boldsymbol\theta\|^{b_0}).
	 \]
	
	 \item[Condition 4.] For some positive constant $b_1$ the affinity
	 \[
	  \varrho(\boldsymbol\theta):=\int [p_{\boldsymbol\theta}(y)g(y)]^{1/2} \, dy
	 \]
	 has the behaviour
	 \[
	  \varrho(\boldsymbol\theta) < c\|\boldsymbol\theta\|^{-b_1}, \quad \boldsymbol\theta\in\boldsymbol\Theta.
	 \]
	
	 \vspace{5mm} Condition 4 is, in our opinion, the condition that will require a careful analysis.

	 \item[Condition 5.] There are positive constants $b_2, b_3$ so that for all $\boldsymbol\theta$ and $r>0$ it holds that
	\[	
	 \pi(S[\boldsymbol\theta, r])\leq cr^{b_2}(1+(\|\boldsymbol\theta\| + r)^{b_3}),
	\]
	where $\pi(S[\boldsymbol\theta, r])$ measure of $S[\boldsymbol\theta, r]$ under the prior. Moreover, $\pi(S[\boldsymbol\theta, r])>0$ for all $r>0$ and $\boldsymbol\theta$.
	
	\vspace{5mm} The last part is satisfied if the prior is strictly positive over the parameter space, which is usually the case (it is true in our numerical analyses). The first part essentially requires that the measure does not ``explode'' in some areas. Under the assumption mentioned in Section~2.1 in our paper on the prior and if the parameter space is $[\epsilon,\infty)\times \re^p$, we have that
	\begin{align*}
	 \pi(S[\boldsymbol\theta, r])=\int \ind_{S[\boldsymbol\theta, r]}\, \pi(\boldsymbol\theta)\, d\boldsymbol\theta \leq \frac{1}{\epsilon}\int \ind_{S[\boldsymbol\theta, r]}\, d\boldsymbol\theta =   \frac{c}{\epsilon} \, r^{p+1},
	\end{align*}
	implying that the first part holds.
	
	 \item[Condition 6.] Let $L:\boldsymbol\Theta\times \boldsymbol\Theta \rightarrow \re^+$ be a measurable loss function with $L(\boldsymbol\theta, \boldsymbol\theta)= 0$, $c_1, c_2, c_3, b_4, b_5$ be positive constants such that
	 \[
	  (c_1\|t - \boldsymbol\theta\|^{b_4}) \wedge c_2 \leq L(t, \boldsymbol\theta)\leq c_3\|t-\boldsymbol\theta\|^{b_5},
	 \]	
	 for all $t,\boldsymbol\theta\in\boldsymbol\Theta$.
	
	 \vspace{5mm} It is easily seen that the quadratic loss function satisfies this, pointing towards the consistency of the posterior mean of $\boldsymbol\beta$. Under Conditions 1 to 5, a result in \cite{bunke1998asymptotic} indicates that the posterior density concentrates around $\boldsymbol\theta^*=(\boldsymbol\beta_0,\sigma^*)$, pointing in this case towards the consistency of the part of the posterior mode associated with $\boldsymbol\beta$.

\end{description}

\subsection{Other Result}\label{sec_other}

\begin{proposition}
 If $f=\mathcal{N}(0, 1)$ and $\pi(\boldsymbol\beta, \sigma)\propto \pi(\sigma) \times 1$, then
 \[\boldsymbol\beta\mid\sigma,\mathbf{y_n}\sim \mathcal{N}((\mathbf{X_n}^T \mathbf{X_n})^{-1} \mathbf{X_n}^T \mathbf{y_n}, \sigma^2 (\mathbf{X_n}^T \mathbf{X_n})^{-1}),
 \]
 and
 \[
  \pi(\sigma\mid \mathbf{y_n})\propto \pi(\sigma) \, \frac{1}{\sigma^{n-p}}\,\exp\left(-\frac{1}{2\sigma^2} \, \|\mathbf{y_n} - \hat{\mathbf{y}}_\mathbf{n}\|^2\right),
 \]
 where $\mathbf{X_n}$ is matrix whose rows are given by $\mathbf{x}_1^T, \ldots, \mathbf{x}_n^T$, $\hat{\mathbf{y}}_\mathbf{n}:= \mathbf{X_n} (\mathbf{X_n}^T \mathbf{X_n})^{-1}$ $ \mathbf{X_n}^T \mathbf{y_n}$, and $\|\cdot\|$ is the Euclidean norm. In particular, if $\pi(\sigma)\propto 1/\sigma$, $\sigma^2 \mid \mathbf{y_n}\sim \text{Inverse-}\Gamma((n-p)/2, \|\mathbf{y_n} - \hat{\mathbf{y}}_\mathbf{n}\|^2 / 2)$.
\end{proposition}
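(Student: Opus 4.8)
The plan is to write the joint posterior explicitly, carry out the standard orthogonal (``completing the square'') decomposition of the residual sum of squares, and then read off the conditional law of $\boldsymbol\beta$ and the marginal law of $\sigma$ in turn. Since $f=\mathcal{N}(0,1)$, the likelihood is
\[
\prod_{i=1}^n (1/\sigma)\varphi((y_i-\mathbf{x}_i^T\boldsymbol\beta)/\sigma)=(2\pi)^{-n/2}\sigma^{-n}\exp\!\left(-\tfrac{1}{2\sigma^2}\sum_{i=1}^n(y_i-\mathbf{x}_i^T\boldsymbol\beta)^2\right),
\]
and since $\sum_{i=1}^n(y_i-\mathbf{x}_i^T\boldsymbol\beta)^2=\|\mathbf{y_n}-\mathbf{X_n}\boldsymbol\beta\|^2$, multiplying by the prior gives, up to a constant not depending on $\boldsymbol\beta$ or $\sigma$,
\[
\pi(\boldsymbol\beta,\sigma\mid\mathbf{y_n})\propto\pi(\sigma)\,\sigma^{-n}\exp\!\left(-\tfrac{1}{2\sigma^2}\|\mathbf{y_n}-\mathbf{X_n}\boldsymbol\beta\|^2\right).
\]

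First I would establish the algebraic identity
\[
\|\mathbf{y_n}-\mathbf{X_n}\boldsymbol\beta\|^2=\|\mathbf{y_n}-\hat{\mathbf{y}}_\mathbf{n}\|^2+(\boldsymbol\beta-\hat{\boldsymbol\beta})^T\mathbf{X_n}^T\mathbf{X_n}(\boldsymbol\beta-\hat{\boldsymbol\beta}),\qquad \hat{\boldsymbol\beta}:=(\mathbf{X_n}^T\mathbf{X_n})^{-1}\mathbf{X_n}^T\mathbf{y_n}.
\]
This is the Pythagorean decomposition: writing $\mathbf{y_n}-\mathbf{X_n}\boldsymbol\beta=(\mathbf{y_n}-\hat{\mathbf{y}}_\mathbf{n})+\mathbf{X_n}(\hat{\boldsymbol\beta}-\boldsymbol\beta)$ and expanding, the cross term vanishes because the normal equations $\mathbf{X_n}^T(\mathbf{y_n}-\hat{\mathbf{y}}_\mathbf{n})=\mathbf{0}$ hold by the definition $\hat{\mathbf{y}}_\mathbf{n}=\mathbf{X_n}\hat{\boldsymbol\beta}$. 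Here I use that $\mathbf{X_n}$ has full column rank $p$ (guaranteed under the standing assumptions, since $n\ge p+1$ and the explanatory variables are continuous), so that $\mathbf{X_n}^T\mathbf{X_n}$ is invertible and positive definite.

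Substituting this identity, the $\boldsymbol\beta$-dependence of the posterior is confined to the factor $\exp(-\tfrac{1}{2\sigma^2}(\boldsymbol\beta-\hat{\boldsymbol\beta})^T\mathbf{X_n}^T\mathbf{X_n}(\boldsymbol\beta-\hat{\boldsymbol\beta}))$; for fixed $\sigma$ this is exactly the kernel of a $\mathcal{N}(\hat{\boldsymbol\beta},\sigma^2(\mathbf{X_n}^T\mathbf{X_n})^{-1})$ density, which proves the first claim. To obtain the $\sigma$-marginal I would integrate $\boldsymbol\beta$ out; the Gaussian integral contributes $(2\pi\sigma^2)^{p/2}|\mathbf{X_n}^T\mathbf{X_n}|^{-1/2}$, whose $\sigma$-dependence is $\sigma^{p}$, cancelling $p$ of the $n$ powers in $\sigma^{-n}$ and leaving
\[
\pi(\sigma\mid\mathbf{y_n})\propto\pi(\sigma)\,\sigma^{-(n-p)}\exp\!\left(-\tfrac{1}{2\sigma^2}\|\mathbf{y_n}-\hat{\mathbf{y}}_\mathbf{n}\|^2\right),
\]
the second claim.

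Finally, for the inverse-gamma conclusion I would set $\pi(\sigma)\propto1/\sigma$ and change variables to $u=\sigma^2$ (so $d\sigma\propto u^{-1/2}\,du$), which turns the $\sigma$-marginal into $\pi(u\mid\mathbf{y_n})\propto u^{-((n-p)/2+1)}\exp(-\|\mathbf{y_n}-\hat{\mathbf{y}}_\mathbf{n}\|^2/(2u))$, precisely the Inverse-$\Gamma((n-p)/2,\|\mathbf{y_n}-\hat{\mathbf{y}}_\mathbf{n}\|^2/2)$ kernel. The argument is essentially a routine computation; the only step requiring genuine care is the orthogonal decomposition, and specifically the vanishing of the cross term via the normal equations, which is where the full-rank hypothesis on the design matrix enters and which simultaneously identifies $\|\mathbf{y_n}-\hat{\mathbf{y}}_\mathbf{n}\|^2$ as the residual sum of squares common to both marginals.
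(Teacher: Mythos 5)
Your proof is correct, and it reaches the same destination as the paper's but by a tidier route. The paper never states the single orthogonal decomposition you use; instead it expands $\sum_i(y_i-\mathbf{x}_i^T\boldsymbol\beta)^2$ term by term and completes the square in two successive ``add and subtract $(\mathbf{X_n}^T\mathbf{X_n})^{-1}\mathbf{X_n}^T\mathbf{y_n}$'' steps, and only at the very end separately verifies the identity $\sum_i y_i^2-(\mathbf{X_n}^T\mathbf{y_n})^T(\mathbf{X_n}^T\mathbf{X_n})^{-1}\mathbf{X_n}^T\mathbf{y_n}=\|\mathbf{y_n}-\hat{\mathbf{y}}_\mathbf{n}\|^2$. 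Your Pythagorean identity, with the cross term killed by the normal equations $\mathbf{X_n}^T(\mathbf{y_n}-\hat{\mathbf{y}}_\mathbf{n})=\mathbf{0}$, packages both of those steps into one line and makes the role of the full-rank assumption explicit. A second, minor difference: to extract $\pi(\sigma\mid\mathbf{y_n})$ the paper writes down the normalised conditional density $\pi(\boldsymbol\beta\mid\sigma,\mathbf{y_n})$ (tracking its $\sigma^{-p}$ prefactor) and divides the joint by it, whereas you integrate $\boldsymbol\beta$ out of the joint; these are the same computation, but yours is the more standard phrasing. Finally, you actually prove the Inverse-Gamma claim via the change of variables $u=\sigma^2$, a step the paper asserts without detail, so your write-up is in that respect more complete.
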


\begin{proof}
 The proof relies essentially on straightforward calculations. We have
 \begin{align*}
  \pi(\boldsymbol\beta, \sigma\mid \mathbf{y_n})\propto \pi(\sigma)\prod_{i=1}^n \frac{1}{\sqrt{2 \pi} \sigma}\exp\left(-\frac{1}{2\sigma^2} \, (y_i - \mathbf{x}_i^T \boldsymbol\beta)^2 \right).
 \end{align*}
 We therefore have sum of squares in the exponential and we first analyse it. We have
 \begin{align*}
  \sum_{i=1}^n(y_i - \mathbf{x}_i^T \boldsymbol\beta)^2 = \sum_{i=1}^n (y_i^2 - 2y_i \mathbf{x}_i^T \boldsymbol\beta + (\mathbf{x}_i^T \boldsymbol\beta)^2).
 \end{align*}
 We analyse the middle term:
 \begin{align}\label{eqn_middle}
  -2\sum_{i=1}^n y_i \mathbf{x}_i^T \boldsymbol\beta = -2 \boldsymbol\beta^T \sum_{i=1}^n \mathbf{x}_i y_i = -2 \boldsymbol\beta^T  \mathbf{X_n}^T \mathbf{y_n}= -2 (\mathbf{X_n}^T \mathbf{y_n})^T \boldsymbol\beta.
 \end{align}
 The last term is such that
 \begin{align*}
  \sum_{i=1}^n \mathbf{x}_i^T \boldsymbol\beta \, \mathbf{x}_i^T \boldsymbol\beta = \sum_{i=1}^n \boldsymbol\beta^T \mathbf{x}_i \mathbf{x}_i^T \boldsymbol\beta = \boldsymbol\beta^T \mathbf{X_n}^T \mathbf{X_n} \boldsymbol\beta.
 \end{align*}
 Adding and subtracting $(\mathbf{X_n}^T \mathbf{X_n})^{-1} \mathbf{X_n}^T \mathbf{y_n}$ to $\boldsymbol\beta$ before the first transpose yields
 \begin{align*}
  &(\boldsymbol\beta - (\mathbf{X_n}^T \mathbf{X_n})^{-1} \mathbf{X_n}^T \mathbf{y_n} + (\mathbf{X_n}^T \mathbf{X_n})^{-1} \mathbf{X_n}^T \mathbf{y_n})^T \mathbf{X_n}^T \mathbf{X_n} \boldsymbol\beta \cr
  &\qquad = (\boldsymbol\beta - (\mathbf{X_n}^T \mathbf{X_n})^{-1} \mathbf{X_n}^T \mathbf{y_n})^T \mathbf{X_n}^T \mathbf{X_n} \boldsymbol\beta + (\mathbf{X_n}^T \mathbf{y_n})^T (\mathbf{X_n}^T \mathbf{X_n})^{-1} \mathbf{X_n}^T \mathbf{X_n} \boldsymbol\beta.
 \end{align*}
 The last term on the RHS cancels out with one in (\ref{eqn_middle}). We again add and subtract $(\mathbf{X_n}^T \mathbf{X_n})^{-1} \mathbf{X_n}^T \mathbf{y_n}$ to $\boldsymbol\beta$:
 \begin{align*}
  & (\boldsymbol\beta - (\mathbf{X_n}^T \mathbf{X_n})^{-1} \mathbf{X_n}^T \mathbf{y_n})^T \mathbf{X_n}^T \mathbf{X_n} (\boldsymbol\beta -  (\mathbf{X_n}^T \mathbf{X_n})^{-1} \mathbf{X_n}^T \mathbf{y_n} +  (\mathbf{X_n}^T \mathbf{X_n})^{-1} \mathbf{X_n}^T \mathbf{y_n}) \cr
  & \qquad = (\boldsymbol\beta - (\mathbf{X_n}^T \mathbf{X_n})^{-1} \mathbf{X_n}^T \mathbf{y_n})^T \mathbf{X_n}^T \mathbf{X_n} (\boldsymbol\beta -  (\mathbf{X_n}^T \mathbf{X_n})^{-1} \mathbf{X_n}^T \mathbf{y_n}) \cr
  & \qquad\qquad + (\boldsymbol\beta - (\mathbf{X_n}^T \mathbf{X_n})^{-1} \mathbf{X_n}^T \mathbf{y_n})^T \mathbf{X_n}^T \mathbf{y_n}.
 \end{align*}
 The last term on the RHS is equal to $\boldsymbol\beta^T \mathbf{X_n}^T \mathbf{y_n}$, which cancels out with the remaining term in (\ref{eqn_middle}), minus $(\mathbf{X_n}^T \mathbf{y_n})^T (\mathbf{X_n}^T \mathbf{X_n})^{-1} \mathbf{X_n}^T \mathbf{y_n}$.

 Putting these results together yields
 \begin{align*}
  \pi(\boldsymbol\beta, \sigma\mid \mathbf{y_n})&\propto \pi(\sigma) \, \frac{1}{\sigma^n} \, \exp\left(-\frac{1}{2\sigma^2} \left(\sum_{i=1}^n y_i^2 - (\mathbf{X_n}^T \mathbf{y_n})^T (\mathbf{X_n}^T \mathbf{X_n})^{-1} \mathbf{X_n}^T \mathbf{y_n}\right)\right) \cr
  & \hspace{-10mm} \times \exp\left(-\frac{1}{2\sigma^2} (\boldsymbol\beta - (\mathbf{X_n}^T \mathbf{X_n})^{-1} \mathbf{X_n}^T \mathbf{y_n})^T \mathbf{X_n}^T \mathbf{X_n} (\boldsymbol\beta -  (\mathbf{X_n}^T \mathbf{X_n})^{-1} \mathbf{X_n}^T \mathbf{y_n})\right).
 \end{align*}
 Therefore,
 \[
 	\boldsymbol\beta\mid\sigma,\mathbf{y_n}\sim \mathcal{N}((\mathbf{X_n}^T \mathbf{X_n})^{-1} \mathbf{X_n}^T \mathbf{y_n}, \sigma^2 (\mathbf{X_n}^T \mathbf{X_n})^{-1}),
 \]
 and
 \begin{align*}
  \pi(\boldsymbol\beta\mid\sigma,\mathbf{y_n})&=\frac{1}{\sqrt{(2\pi)^p |(\mathbf{X_n}^T \mathbf{X_n}/\sigma)^{-1}|}} \cr
  & \hspace{-10mm} \times\exp\left(-\frac{1}{2\sigma^2} (\boldsymbol\beta - (\mathbf{X_n}^T \mathbf{X_n})^{-1} \mathbf{X_n}^T \mathbf{y_n})^T \mathbf{X_n}^T \mathbf{X_n} (\boldsymbol\beta -  (\mathbf{X_n}^T \mathbf{X_n})^{-1} \mathbf{X_n}^T \mathbf{y_n})\right) \cr
  &= \frac{1}{\sigma^p} \, \frac{1}{(2\pi)^{p/2}} \, |\mathbf{X_n}^T \mathbf{X_n}|^{1/2} \cr
   & \hspace{-10mm} \times\exp\left(-\frac{1}{2\sigma^2} (\boldsymbol\beta - (\mathbf{X_n}^T \mathbf{X_n})^{-1} \mathbf{X_n}^T \mathbf{y_n})^T \mathbf{X_n}^T \mathbf{X_n} (\boldsymbol\beta -  (\mathbf{X_n}^T \mathbf{X_n})^{-1} \mathbf{X_n}^T \mathbf{y_n})\right).
 \end{align*}
 Consequently,
 \[
  \pi(\sigma \mid \mathbf{y_n})\propto \pi(\sigma)\, \frac{1}{\sigma^{n-p}} \, \exp\left(-\frac{1}{2\sigma^2} \left(\sum_{i=1}^n y_i^2 - (\mathbf{X_n}^T \mathbf{y_n})^T (\mathbf{X_n}^T \mathbf{X_n})^{-1} \mathbf{X_n}^T \mathbf{y_n}\right)\right).
 \]
 It just remains to prove that
 \[
  \sum_{i=1}^n y_i^2 - (\mathbf{X_n}^T \mathbf{y_n})^T (\mathbf{X_n}^T \mathbf{X_n})^{-1} \mathbf{X_n}^T \mathbf{y_n} = \|\mathbf{y_n} - \hat{\mathbf{y}}_\mathbf{n}\|^2.
 \]
 We have
 \begin{align*}
  \sum_{i=1}^n y_i^2 - (\mathbf{X_n}^T \mathbf{y_n})^T (\mathbf{X_n}^T \mathbf{X_n})^{-1} \mathbf{X_n}^T \mathbf{y_n} &= \mathbf{y_n}^T \mathbf{y_n} - ( (\mathbf{X_n}^T \mathbf{X_n})^{-1} \mathbf{X_n}^T \mathbf{y_n})^T \mathbf{X_n}^T \mathbf{y_n} \cr
  &=\mathbf{y_n}^T \mathbf{y_n} - \hat{\mathbf{y}}_\mathbf{n}^T \mathbf{y_n} \cr
  &=(\mathbf{y_n} - \hat{\mathbf{y}}_\mathbf{n})^T (\mathbf{y_n} - \hat{\mathbf{y}}_\mathbf{n}^T + \hat{\mathbf{y}}_\mathbf{n}^T) \cr
 &= (\mathbf{y_n} - \hat{\mathbf{y}}_\mathbf{n})^T (\mathbf{y_n} - \hat{\mathbf{y}}_\mathbf{n}) + (\mathbf{y_n} - \hat{\mathbf{y}}_\mathbf{n})^T \hat{\mathbf{y}}_\mathbf{n} \cr
 &= (\mathbf{y_n} - \hat{\mathbf{y}}_\mathbf{n})^T (\mathbf{y_n} - \hat{\mathbf{y}}_\mathbf{n}).
 \end{align*}
\end{proof}

\end{document}